\numberwithin{equation}{section}
\newcommand{\MCC}{\mathcal{C}}
\newcommand{\MCA}{\mathcal{A}}
\newcommand{\MCM}{\mathcal{M}}
\newcommand{\MCO}{\mathcal{O}}
\newcommand{\MCH}{\mathcal{H}}
\newcommand{\MCL}{\mathcal{L}}
\newcommand{\MCK}{\mathcal{K}}
\newcommand{\EE}{\mathbb{E}}
\newcommand{\RR}{\mathbb{R}}
\newcommand{\NN}{\mathbb{N}}
\newcommand{\Ltx}{\mathcal{L}_{t,x}}
\newcommand{\Vz}{V^{\delta}}
\newcommand{\rz}{R}
\newcommand{\vz}{v^{(0)}}
\newcommand{\vo}{v^{(1)}}
\newcommand{\vt}{v^{(2)}}
\newcommand{\pz}{{\pi^{(0)}}}
\newcommand{\Vzl}{V^{\pz,\delta}}
\newcommand{\pzt}{\widetilde\pi^0}
\newcommand{\pot}{\widetilde\pi^1}
\newcommand{\Vzt}{\widetilde{V}^\delta}
\newcommand{\vzt}{\widetilde{v}^{(0)}}
\newcommand{\vot}{\widetilde{v}^{(1)}}
\newcommand{\vat}{\widetilde{v}^\alpha}
\newcommand{\vtat}{\widetilde{v}^{2\alpha}}
\newcommand{\vthat}{\widetilde{v}^{3\alpha}}
\newcommand{\abs}[1]{\left|#1\right|}
\newcommand{\average}[1]{\left\langle#1\right\rangle}
\newcommand{\mc}[1]{\mathcal{#1}}
\newcommand{\ud}{\,\mathrm{d}}
\newcommand{\nud}[1]{\,\nu(\mathrm{d}{#1})}
\newtheorem{theo}{Theorem}[section]
\newtheorem{lem}[theo]{Lemma}
\newtheorem{rem}[theo]{Remark}
\newtheorem{prop}[theo]{Proposition}
\newtheorem{assump}[theo]{Assumption}
\newtheorem{cor}[theo]{Corollary}
\newtheorem{theo*}{Theorem}[section]
\newtheoremstyle{dotlessS}{}{}{\color{blue}}{}{\color{blue}\bfseries}{}{ }{}
\theoremstyle{dotlessS}
\newcommand{\done}[1]{{\leavevmode\color{black}{#1}}}
\begin{document}

\title{\vspace{-50pt} Asymptotic Optimal Strategy for Portfolio Optimization in a Slowly Varying Stochastic Environment}
\author{Jean-Pierre Fouque\thanks{Department of Statistics \& Applied Probability,
 University of California,
        Santa Barbara, CA 93106-3110, {\em fouque@pstat.ucsb.edu}. Work  supported by NSF grant DMS-1409434.}
        \and Ruimeng Hu\thanks{Department of Statistics \& Applied Probability,
 University of California,
        Santa Barbara, CA 93106-3110, {\em hu@pstat.ucsb.edu}.}
        }
\date{\today}
\maketitle

\begin{abstract}

In this paper, we study the portfolio optimization problem with general utility functions and when the return and volatility of the underlying asset are slowly varying. An asymptotic optimal strategy is provided within a specific class of admissible controls under this problem setup. Specifically, we establish a rigorous first order approximation of the value function associated to a fixed zeroth order suboptimal trading strategy, which is derived heuristically in [J.-P. Fouque, R. Sircar and T. Zariphopoulou, {\it Mathematical Finance}, 2016]. Then, we show that this zeroth order suboptimal strategy is asymptotically optimal in a specific family of admissible trading strategies. Finally, we show that our assumptions are satisfied by a particular fully solvable model.

\end{abstract}

\textbf{Keywords:} Portfolio allocation, stochastic volatility, regular perturbation, asymptotic optimality

\section{Introduction}\label{sec_intro}

The portfolio optimization problem was first introduced and studied in the continuous-time framework in \cite{Me:69, Me:71}, which provided explicit solutions on how to trade stocks and/or how to consume so as to maximize one's utility, with risky assets following the Black-Scholes-Merton model (that is, geometric Brownian motions with constant returns and constant volatilities), and when the utility function is of specific types (for instance, {Constant Relative Risk Aversion} (CRRA)). Following these pioneer works, additional constraints were added in this model to mimic real-life investments. This includes transaction cost originally considered by \cite{MaCo:76} and a user's guide by \cite{GuMu:13}, and investments under drawdown constraint for instance by  \cite{GrZh:93},  \cite{CvKa:95} and  \cite{ElTo:08}, just to name a few. Using duality, \cite{CoHu:89} and  \cite{KaLeSh:87} studied the incomplete market case, and general analysis for semi-martingale models is provided in \cite{KrSc:03}. The case where the drift  and  volatility terms are non-linear functions of the asset price  is studied in \cite{Za:99}, and \cite{ChVi:05} gave a closed-form solution under a particular one-factor stochastic volatility model. 

Recently, multiscale factor models for risky assets were considered in the portfolio optimization problem in \cite{FoSiZa:13}, where return and volatility are driven by fast and slow factors.
Specifically, the authors heuristically derived the asymptotic approximation to the value function and the optimal strategy
for general utility functions. In this paper, we shall focus on the risky asset modeled by only \emph{slowly varying stochastic factor}, and the reason is twofold: Firstly, slow factor is particularly important in long-term investment, because the effect of fast factor is approximately averaged out in the long time as studied in \cite[Section 2]{FoSiZa:13}. Secondly, analysis under the model with \emph{fast mean-reverting stochastic factor} requires singular asymptotic techniques, and more technical details in combining the fast and slow factors, and thus, this will be presented in another paper in preparation (\cite{FoHu:XX}).

We describe the model as below, with dynamics of the underlying asset and slowly varying factor denoted as $S_t$ and $Z_t$ respectively,
\begin{align}\label{eq_St}
&\ud S_t = \mu(Z_t)S_t\ud t + \sigma(Z_t)S_t\ud W_t, \\
&\ud Z_t = \delta c(Z_t) \ud t + \sqrt\delta g(Z_t)\ud W_t^Z \label{eq_Zt},
\end{align}
where the standard Brownian motions $\left(W_t, W_t^Z\right)$ are correlated:
$\ud \average{W, W^Z}_t = \rho  \ud t,$ with $ \abs{\rho } < 1$.

Assumptions on the coefficients $\mu(z), \sigma(z), c(z), g(z)$ of the model will be specified in Section~\ref{sec_assumpvaluefunc}. In \eqref{eq_Zt}, $\delta$ is a small positive parameter that characterizes the slow variation of the process $Z$. Note that $Z_t \stackrel{\mathcal{D}}{=}Z^{(1)}_{\delta t}$, as continuous processes  with $\forall t\in [0,T]$,  where the diffusion process $Z^{(1)}$ has the following infinitesimal generator, denoted by $\MCM$
\begin{equation}\label{eq_Mgenerator}
\MCM = \frac{1}{2}g^2(z)\partial_{zz}^2 + c(z)\partial_z.
\end{equation}
We refer to \cite{FoPaSiSo:11} for more details on this model, where asymptotic results in the limit $\delta\to 0$ are derived for linear problems of option pricing.

Denote by $X_t^\pi$ the wealth process associated to the Markovian strategy $\pi$, and in this strategy, the amount of money $\pi(t,x,z)$ is invested in stock at time $t$, when the stock price is $x$, and the level of the slow factor $Z_t$ is $z$, with the remaining money held in money market earning a constant risk-free interest of $r$. Assuming that the portfolio is self-financing and, without loss of generality, that the risk-free interest rate is zero, $r=0$, then $X_t^\pi$ follows
\begin{equation}\label{eq_Xtgeneral}
\ud X_t^\pi = \pi(t,X_t^\pi,Z_t)\mu(Z_t)\ud t + \pi(t,X_t^\pi,Z_t)\sigma(Z_t)\ud W_t.
\end{equation}

An investor aims at finding an optimal strategy $\pi$  which maximizes her  terminal expected utility $\EE\left[U(X_T^\pi)\right]$, where $U(x)$ is in a general class of utility functions. Denote by $\Vz(t,x,z)$ the value function
\begin{equation}\label{def_Vz}
\Vz(t,x,z) = \sup_{\pi\in \MCA^\delta(t,x,z)}\EE\left[U(X_T^\pi)|X_t^\pi = x, Z_t = z\right],
\end{equation}
where the supremum is taken over all admissible Markovian  strategies $\MCA^\delta(t,x,z)$,
\begin{equation}\label{def_A}
\MCA^\delta(t,x,z)= \left\{ \pi: X_s^\pi \text{ in \eqref{eq_Xtgeneral} stays nonnegative $\forall s \geq t$, given $X_t^\pi = x$, and $Z_t = z$}\right\}.
\end{equation}
%

 In \cite{FoSiZa:13}, a regular perturbation approach is used to derive a heuristic approximation for $\Vz$ up to the first order, namely, the value function $\Vz$ is formally expanded  as follows:
\begin{equation}\label{expanofVdelta}
\Vz = \vz + \sqrt\delta\vo + \delta\vt + \cdots,
\end{equation}
with $\vz$ and $\vo$ identified by asymptotic equations given in Section \ref{sec_heuristic}. \done{Note that this derivation is rigorous in the case of power utility with one factor stochastic volatility, which is done in \cite[Section 6.3]{FoSiZa:13}.} It is also \done{conjectured} in \cite[Section 3.2.1]{FoSiZa:13} that the zeroth order suboptimal strategy
\begin{equation}\label{pi0}
\pz(t,x,z) = -\frac{\lambda(z)}{\sigma(z)}\frac{\vz_{x}(t,x,z)}{\vz_{xx}(t,x,z)}, \quad \lambda(z) = \frac{\mu(z)}{\sigma(z)},
\end{equation}
not only gives the optimal value at the principal term $\vz$, but also up to first order $\sqrt\delta$ correction $\vz + \sqrt\delta \vo$.  

\done{The optimal control to problem \eqref{def_Vz}, denoted by $\pi^\ast$, whose existence is ensured by \cite{KrSc:03}, depends on $\delta$. It is not known whether $\pi^\ast$ will converge as $\delta$ goes to zero. But if $\pi^\ast$ had a limit, say $\pzt$, it is then natural to consider a family of controls of the form $\pzt + \delta^\alpha \pot$ as the perturbation of the limit $\pzt$, allowing for correction of any order in $\delta$. }
	
\done{ The goal of this paper is to show that $\pz$ given by \eqref{pi0} in fact performs asymptotically better than the family $\pzt + \delta^\alpha \pot$ up to the order $\sqrt{\delta}$. To this end, for a fixed choice of $(\pzt$, $\pot$, $\alpha>0)$, we introduce the family of admissible trading strategies $\MCA_0(t,x,z)\left[\pzt, \pot,\alpha\right]$ defined by
	\begin{equation}\label{def_A0}
	\MCA_0(t,x,z)\left[\pzt, \pot,\alpha\right] = \left\{ \pzt + \delta^\alpha \pot\right\}_{0 < \delta \leq 1}.
	\end{equation}
Further conditions will be given in Assumption \ref{assump_piregularity}. Denote by $\Vzl$ the value function associated to the strategy $\pz$, that is 
\begin{equation*}
\Vzl(t,x,z) = \EE\left[U(X_T^\pz)|X_t^\pz = x, Z_t = z\right], 
\end{equation*}
where $X_t^\pz$ is given by \eqref{eq_Xtgeneral} with $\pi = \pz$. Let us also denote $\Vzt$ the value function when using the strategy $\pzt + \delta^\alpha \pot$, that is,
\begin{equation*}
\Vzt(t,x,z) = \EE\left[U(X_T^{\pzt + \delta^\alpha\pot})|X_t^{\pzt + \delta^\alpha\pot} = x, Z_t = z\right].
\end{equation*}

Our main result is:}

\begin{theo*}\label{Thm_optimality}
	Under Assumption \ref{assump_U}, \ref{assump_valuefunc}, \ref{assump_piregularity} and \ref{assump_optimality}, for fixed $(t,x,z)$ and any family of trading strategies $\MCA_0(t,x,z)\left[ \pzt, \pot,\alpha\right]$, the following limit exists and satisfies
	\begin{equation}\label{eq_Vztineq}
	\ell := \lim_{\delta\to 0}\frac{\Vzt(t, x, z)-\Vzl(t, x, z)}{\sqrt{\delta}}\leq 0.
	\end{equation}
	That is, the strategy $\pz$ which generates $\Vzl$, performs  asymptotically better up to order $\sqrt\delta$
	than the family $\left\{\pzt+\delta^\alpha\pot\right\}$ which generates $\Vzt$.
	
	Moreover, the inequality can be written according to the following four cases:
	\begin{enumerate}[(i)]
		\item$\pzt \equiv \pz$ and $\ell = 0$: $\Vzt = \Vzl + o(\sqrt{\delta})$;
		\item $\pzt \equiv \pz$ and $-\infty < \ell <0$: $\Vzt = \Vzl + \MCO(\sqrt{\delta})$ with $\MCO(\sqrt{\delta}) <0$;
		\item $\pzt \equiv \pz$ and $\ell = -\infty$: $\Vzt = \Vzl + \MCO(\delta^{2\alpha})$ with $\MCO(\delta^{2\alpha})<0$ and $2\alpha < 1/2$;
		\item $\pzt \not \equiv \pz$: $
		\lim_{\delta \to 0} \Vzt(t,x,z)< \lim_{\delta \to 0}  \Vzl(t,x,z).$
		\end{enumerate}
\end{theo*}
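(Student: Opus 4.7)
\medskip
\noindent\textbf{Proof proposal.} The plan is to derive rigorous $\sqrt\delta$-expansions of both $\Vzl$ and $\Vzt$, subtract, and pass to the limit; the crucial point is that $\pz$ is the pointwise maximizer of the zeroth-order Merton Hamiltonian, so that deviations from $\pz$ enter the expansion quadratically rather than linearly. For any fixed admissible $\pi$, the value function $\Vzp$ solves the linear Feynman-Kac PDE
\[
\partial_t \Vzp + \pi\mu(z)\partial_x \Vzp + \tfrac{1}{2}\pi^2\sigma^2(z)\partial_{xx}\Vzp + \sqrt\delta\,\rho\pi\sigma(z)g(z)\partial_{xz}\Vzp + \delta\MCM \Vzp = 0,
\]
with terminal data $\Vzp(T,x,z)=U(x)$. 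Plugging in $\pi=\pz$ and postulating $\Vzl=\vzl+\sqrt\delta\,\vol+\delta\,\vtl+\cdots$, the $\delta^0$ equation is the Merton HJB with $z$ frozen, whose maximizer is $\pz$, so $\vzl=\vz$; the $\sqrt\delta$ equation then matches the one derived heuristically in Section \ref{sec_heuristic}, giving $\vol=\vo$. Under Assumption \ref{assump_valuefunc}, a standard It\^o/Feynman-Kac remainder estimate yields $\Vzl = \vz + \sqrt\delta\,\vo + O(\delta)$ uniformly on compact sets in $(t,x,z)$.

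Next, I would expand $\Vzt$ for $\pi_\delta=\pzt+\delta^\alpha\pot$ as a joint expansion in powers of $\delta^\alpha$ and $\sqrt\delta$. When $\pzt\equiv\pz$, the $\delta^0$ equation again gives $\vz$; the $\delta^\alpha$ source equals $\pot\,[\mu\partial_x\vz+\pz\sigma^2\partial_{xx}\vz]$, which vanishes by the first-order optimality condition defining $\pz$ in \eqref{pi0}, so the $\delta^\alpha$ correction is zero. The leading deviation from $\vz$ due to $\pot$ thus appears at order $\delta^{2\alpha}$, with source $\tfrac{1}{2}(\pot)^2\sigma^2\partial_{xx}\vz\le 0$; the resulting Feynman-Kac solution $w$ is non-positive, and strictly negative whenever $\pot\not\equiv 0$ by strict concavity of $\vz$ in $x$. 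The $\sqrt\delta$ source reduces to $\rho\pz\sigma g\partial_{xz}\vz$ and produces the same $\vo$ as for $\Vzl$. Consequently
\[
\Vzt(t,x,z)-\Vzl(t,x,z) = \delta^{2\alpha}\,w(t,x,z) + o(\delta^{2\alpha}) + o(\sqrt\delta).
\]
Dividing by $\sqrt\delta$, the three subcases arise by comparing $2\alpha$ with $1/2$: $2\alpha>1/2$ gives (i), $2\alpha=1/2$ gives (ii) with $\ell=w<0$, and $2\alpha<1/2$ gives (iii) with $\ell=-\infty$.

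For case (iv), $\pzt\not\equiv\pz$, the $\delta^0$ equation gives $\vzt_0$ solving the linear PDE at the suboptimal coefficients $\pzt\mu$ and $\pzt\sigma$. Setting $u=\vz-\vzt_0$, subtraction of the two equations yields a linear backward parabolic equation for $u$ with source $H(\pz;\vz)-H(\pzt;\vz)$, where $H(\pi;v)=\pi\mu\partial_x v+\tfrac{1}{2}\pi^2\sigma^2\partial_{xx}v$; by optimality and strict concavity this source is nonnegative, and strictly positive wherever $\pzt\neq\pz$, with terminal condition $u(T)=0$. The maximum principle then forces $u>0$, i.e.\ $\vzt_0<\vz$ at any $(t,x,z)$ with $t<T$, and letting $\delta\to 0$ in the two expansions gives $\lim_{\delta\to 0}\Vzt<\lim_{\delta\to 0}\Vzl$ and $\ell=-\infty$.

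The main technical obstacle is establishing the uniform remainder bounds $\Vzl-\vz-\sqrt\delta\,\vo=O(\delta)$ and the corresponding bound for $\Vzt$: this requires sufficient regularity and controlled growth of $\vz,\vo,w$ in $x$ together with moment estimates on $X^{\pi_\delta}_t$, uniform in $\delta$. I expect Assumption \ref{assump_valuefunc} to supply regularity of the formal expansion terms, Assumption \ref{assump_piregularity} to guarantee admissibility and wealth-process integrability for the family $\{\pzt+\delta^\alpha\pot\}$, and Assumption \ref{assump_optimality} to provide the remaining integrability needed to convert formal residuals into genuine $o(\sqrt\delta)$ errors via It\^o and Feynman-Kac. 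A secondary subtlety is the strict negativity of $w$ in case (ii), which requires a maximum-principle argument against the possibility that the source $(\pot)^2\sigma^2\partial_{xx}\vz$ vanishes on a set so large as to force $w\equiv 0$.
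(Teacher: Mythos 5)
Your overall strategy is the same as the paper's: expand $\Vzl$ and $\Vzt$ via Feynman--Kac, use the first-order optimality of $\pz$ to kill the $\delta^\alpha$ source, and locate the first nonvanishing correction at order $\delta^{2\alpha}$ with source $\frac{1}{2}\sigma^2(\pot)^2\vz_{xx}\le 0$. The three subcases by comparing $2\alpha$ with $1/2$ are the right trichotomy, and the maximum-principle argument for $\pzt\not\equiv\pz$ is the right idea. However, there is a genuine gap in the way you claim strict inequalities, and this gap is not a mere technicality---it is what makes the theorem's pointwise case split well-defined.

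The issue is that your claim ``$w$ is strictly negative whenever $\pot\not\equiv 0$'' (and the analogous claim $\vzt<\vz$ whenever $\pzt\not\equiv\pz$) is false at the level of a fixed point $(t,x,z)$. The Feynman--Kac representation of $w$ at $(t,x,z)$ involves the source $(\pot)^2\sigma^2\vz_{xx}$ only along paths $(s,\widetilde X_s,z)$ with $s\in[t,T]$. If $\pot\equiv 0$ on $[t,T]\times\RR^+\times\{z\}$ but $\pot\neq 0$ for earlier times, the source is identically zero along these paths and $w(t,x,z)=0$, yet $\pot\not\equiv 0$ globally. The paper handles this by introducing the time $t_1(z)=\inf\{t:\pot(u,x,z)=0\ \forall (u,x)\in[t,T]\times\RR^+\}$ and partitioning $[0,T]\times\RR^+\times\RR$ into regions $\MCK_1$ ($t<t_1(z)$) and $\MCC_1$ ($t\geq t_1(z)$), proving strict negativity only in $\MCK_1$ and identical vanishing of $\vtat$ (and of all subsequent $\widetilde v^{j\alpha}$ for $2\le j\le n$) in $\MCC_1$. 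The same device with a time $t_0(z)$ and regions $\MCK,\MCC$ is needed for the case $\pzt\not\equiv\pz$. Without this, your case classification does not correspond to the paper's theorem, which is a statement about a fixed $(t,x,z)$.

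The second missing ingredient is the hypothesis that drives the strict sign: the paper's Assumption \ref{assump_piregularity}(iii) requires the frozen-$z$ wealth process $\widetilde X_s^{t,x}$ to have full support on $\RR^+$, so that the Feynman--Kac integral of the (continuous, nonnegative, somewhere positive) source over a set of positive measure in $(s,y)$ is strictly positive. You flag the need for ``a maximum-principle argument against the possibility that the source vanishes on a set so large as to force $w\equiv 0$,'' but you do not supply the hypothesis that rules this out. You should invoke the full-support property of $\widetilde X$ together with continuity of $\pot$ (Assumption \ref{assump_piregularity}(ii)), exactly as the paper does in the chain of inequalities leading to \eqref{eq_vtat_strictineq} and \eqref{strictineq}. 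A further gap of the same type: for $0<\alpha<1/4$, you must rule out contributions from the intermediate orders $\delta^{3\alpha},\dots,\delta^{n\alpha}$ (where $n\alpha<1/2<(n+1)\alpha$) before concluding that the next term after $\vz$ in $\MCC_1$ is $\sqrt\delta\vo$; the paper shows by induction that all these intermediate coefficients vanish identically on $\MCC_1$.
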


\done{\begin{rem}\label{rem:pi1}
	In particular, choosing $\pzt \equiv \pz$, $\alpha=1/2$ and $\pot \equiv \pi^1$ where $\pi^1$ is the first order correction in the expansion of the strategy derived in   \cite[Section 3.2.2]{FoSiZa:13}, we obtain that this first order correction $\pi^1$ does not affect the value function up to order $\sqrt{\delta}$.
\end{rem}}

\medskip

\noindent\done{{\bf Organization of the paper.} In Section~\ref{sec:assumptions}, we briefly review the classical Merton problem and heuristic results in \cite{FoSiZa:13}. We also list the assumptions needed for our theoretical proofs given in the next Section, where we apply the regular perturbation technique to the value function $\Vzl$ associated to the strategy $\pz$, and where we prove that the first order approximation of $\Vzl$ is $\vz + \sqrt{\delta}\vo$ for general utility function $U(x)$. In the case of power utility, we show in Corollary~\ref{cor_optimality} that $\pz$ is in fact asymptotically optimal in the full class of admissible strategies $\MCA^\delta$. (Note that this result remains an open problem for general utilities.) Then, the proof of Theorem~\ref{Thm_optimality} is given in Section~\ref{sec_optimality}. A fully solvable example is presented in Section~\ref{sec_example}. Using the model given in  \cite[Section 6.4]{FoSiZa:13},  we give the closed-form solution under power utility and we verify that all the assumptions listed in Section~\ref{sec:assumptions} and Section~\ref{sec_optimality} are satisfied. We make conclusive remarks in Section~\ref{sec_conclusion}.}

%
%
%
%
%

\section{Preliminaries and Assumptions}\label{sec:assumptions}

In this section, we review the classical Merton problem,  summarize the heuristic results in \cite{FoSiZa:13}, and  list the assumptions on the utility function and the state processes needed for later proofs.

\subsection{Merton Problem with Constant Coefficients}\label{sec_merton}
We first discuss the case of $\mu$ and $\sigma$ being constant in \eqref{eq_St}, which plays a crucial role in interpreting the leading order value function $\vz$ in \eqref{expanofVdelta} and analysis of the regular perturbation. This problem has been widely studied and completely solved, and we start with some background results.

Let $X_t$ be the solution to
	\begin{equation}
	\ud {X}_t = \pi^\star(t,{X}_t)\mu\ud t + \pi^\star(t,{X}_t)\sigma\ud W_t,
	\end{equation}	
	where $\pi^\star(t,x)$ is the optimal trading strategy, then, $X_t$ stays nonnegative up to time $T$, and
	\begin{equation}
	\int_0^T \abs{\sigma\pi^\star(t,X_t)}^2\ud t < \infty, \text{ almost surely.}
	\end{equation}
	We refer to \cite[Chapter 3]{KaSh:98} for details.

Following the notations in \cite{FoSiZa:13}, we denote by $M(t,x;\lambda)$ the Merton value function. The following results give the regularity of $M(t,x;\lambda)$ and identify it as the classical solution of an HJB equation. 
\begin{prop}\label{prop_Merton}
Assume that the utility function $U(x)$ is $C^2(0,\infty)$, strictly increasing, strictly concave, such that $U(0+)$ is finite, and satisfies the Inada and Asymptotic Elasticity conditions:
	\begin{equation*}
	U'(0+) = \infty, \quad U'(\infty) = 0, \quad \text{AE}[U] := \lim_{x\rightarrow \infty} x\frac{U'(x)}{U(x)} <1,
	\end{equation*}
	then, the Merton value function $M(t,x;\lambda)$ is strictly increasing, strictly concave in the wealth variable $x$, and decreasing in the time variable $t$. It is $C^{1,2}([0,T]\times \RR^+)$ and is the unique solution to the HJB equation
	\begin{equation}\label{eq_value}
	M_t+\sup_{\pi}\left\{\frac{1}{2}\sigma^2\pi^2M_{xx}+\mu\pi M_x\right\}=
	M_t -\frac{1}{2}\lambda^2\frac{M_x^2}{M_{xx}} = 0, \quad M(T,x;\lambda) = U(x),
	\end{equation}
	where $\lambda = \mu/\sigma$ is the constant Sharpe ratio. It is  $C^1$ with respect to $\lambda$, and 
	\begin{equation}\label{eq_pistar}
	\pi^\star(t,x;\lambda)=-\frac{\lambda}{\sigma}\frac{M_x(t,x;\lambda)}{M_{xx}(t,x;\lambda)}.
	\end{equation}
\end{prop}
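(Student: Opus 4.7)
The plan is to prove Proposition~\ref{prop_Merton} via convex duality, which is the natural route here because the constant-coefficient market is complete. First, I would introduce the Legendre dual $\widetilde U(y) := \sup_{x>0}(U(x)-xy)$; under the Inada and Asymptotic Elasticity assumptions, classical results (\emph{e.g.}\ Karatzas--Shreve or Kramkov--Schachermayer) give that $\widetilde U$ is $C^2$, strictly convex, strictly decreasing on $(0,\infty)$, with inverse marginal $I:=(U')^{-1}=-\widetilde U'$ smooth, and with $\widetilde U,\widetilde U',\widetilde U''$ controlled by polynomial-type bounds derived from the AE condition. Since the unique equivalent martingale measure has density kernel $Y_{t,T} := \exp\bigl(-\lambda(W_T-W_t) - \tfrac12\lambda^2(T-t)\bigr)$, I would then define the dual value function
\[
\widetilde M(t,y;\lambda) := \EE\bigl[\widetilde U(yY_{t,T})\bigr],
\]
whose integrand is log-normal in $Y_{t,T}$, so the expectation reduces to a one-dimensional Gaussian integral in which $y$, $t$, and $\lambda$ all enter smoothly.

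Second, using the polynomial bounds on $\widetilde U$ and its derivatives to justify differentiation under the expectation, I would establish that $\widetilde M$ is $C^{1,2}$ in $(t,y)$, $C^1$ in $\lambda$, strictly convex and strictly decreasing in $y$, and is a classical solution of the linear backward PDE
\[
\widetilde M_t + \tfrac12\lambda^2 y^2\widetilde M_{yy}=0,\qquad \widetilde M(T,y;\lambda)=\widetilde U(y).
\]
In particular $\widetilde M_{yy}>0$ and $\widetilde M_t<0$ on $[0,T)\times(0,\infty)$.

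Third, I would recover the primal $M$ via the Legendre transform $M(t,x;\lambda) = \inf_{y>0}\bigl\{\widetilde M(t,y;\lambda)+xy\bigr\}$, with minimizer $\hat y(t,x;\lambda)$ characterized by $x=-\widetilde M_y(t,\hat y;\lambda)$. Strict convexity $\widetilde M_{yy}>0$ together with the implicit function theorem makes $\hat y$ smooth, which transfers the required regularity to $M$: one obtains $M\in C^{1,2}([0,T]\times\RR^+)$ and $C^1$ in $\lambda$, with $M_x=\hat y>0$ (strict monotonicity in $x$), $M_{xx}=-1/\widetilde M_{yy}(t,\hat y)<0$ (strict concavity in $x$), and $M_t=\widetilde M_t(t,\hat y)<0$ by the envelope identity (monotonicity in $t$). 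Substituting these identities into the PDE for $\widetilde M$ yields
\[
M_t \;=\; -\tfrac12\lambda^2(\hat y)^2\widetilde M_{yy}(t,\hat y) \;=\; \tfrac12\lambda^2\frac{M_x^2}{M_{xx}},
\]
which is exactly the HJB equation \eqref{eq_value}, while the first-order condition in the supremum produces formula \eqref{eq_pistar} for $\pi^\star$. Uniqueness in the class of classical solutions with the correct growth then follows from a standard verification argument against the HJB.

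The main obstacle is controlling the boundary behavior as $x\downarrow 0$ (equivalently $\hat y\to\infty$) and as $x\to\infty$ (equivalently $\hat y\downarrow 0$), in order to ensure that $\hat y$ is well-defined on all of $(0,\infty)$ and that the dominated-convergence arguments used to differentiate through the expectation are valid uniformly on compacts in $(t,y,\lambda)$. Both issues are handled precisely by the Asymptotic Elasticity hypothesis $\mathrm{AE}[U]<1$, which is designed to yield the necessary polynomial bounds on $\widetilde U$ and its first two derivatives; once these are in place, every step above is standard.
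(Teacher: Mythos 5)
Your construction of $M$ via convex duality is essentially the approach that the paper cites for the bulk of Proposition~\ref{prop_Merton}: the reference is to \cite[Section 3.8]{KaSh:98}, which handles the complete-market terminal-utility problem precisely through the Legendre dual $\widetilde U$ and the state-price density $Y_{t,T}$. So the regularity, strict monotonicity, concavity in $x$, decrease in $t$, $C^1$-dependence on $\lambda$, and the HJB identity \eqref{eq_value} with the feedback formula \eqref{eq_pistar} are obtained by the same route. Nothing wrong there.

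The genuine gap is in the last sentence, the claim that \emph{uniqueness} of the classical solution to \eqref{eq_value} follows from ``a standard verification argument against the HJB.'' A verification theorem shows that a sufficiently regular classical supersolution/solution dominates (resp.\ equals) the stochastic value function; it does not by itself show that two classical solutions of the fully nonlinear equation \eqref{eq_value} with the same terminal data must coincide, unless you impose growth hypotheses that close a comparison argument, and you never say which growth class nor why a competing solution would live in it. The paper handles uniqueness differently and more sharply: it uses the \cite{KaZa:14} linearization in which $M_x$ is encoded via the map $H$ defined in Proposition~\ref{prop_H}, namely $M_x(t,H(x,t,\lambda);\lambda)=\exp\{-x-\tfrac12\lambda^2(T-t)\}$, and $H$ solves the heat equation $H_t+\tfrac12\lambda^2H_{xx}=0$ with a terminal condition fixed by $I=(U')^{-1}$. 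Since this terminal condition does \emph{not} depend on the candidate solution $M$, and since the heat equation has a unique nonnegative classical solution, any two classical solutions of \eqref{eq_value} produce the same $H$ and hence agree. This is where the hypothesis that $U(0+)$ is finite enters: it guarantees the nonnegativity (after normalizing $U(0+)=0$) that the heat-equation uniqueness class requires. Your proposal does not address this point at all.

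The fix stays entirely inside your own framework. Your dual PDE $\widetilde M_t+\tfrac12\lambda^2 y^2\widetilde M_{yy}=0$ becomes the one-dimensional heat equation after the logarithmic change of variable $y=e^{-\xi}$ with a drift absorption in $t$ (exactly the substitution $\xi = -\log M_x + \tfrac12\lambda^2(T-t)$ used in the proof of Proposition~\ref{prop_ltxunique}), and the transformed terminal condition is determined by $\widetilde U$, i.e.\ by $U$ alone. Invoking uniqueness for nonnegative solutions of the heat equation (which is where $U(0+)$ finite is used) and inverting the Legendre transform then gives the uniqueness claim cleanly. Replace the vague ``verification'' sentence with this argument and your proposal matches the paper's logic.
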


\done{This proof without uniqueness follows from \cite[Section 3.8]{KaSh:98}. By the result in \cite{KaZa:14}, $M_x$ can be transformed into the unique solution of heat equation (This result will be stated in Proposition \ref{prop_H} and used in the rest of the paper). By the assumption that $U(0+)$ is finite, the uniqueness of $M(t,x;\lambda)$ follows.}

%
The following  relation between partial derivatives of $M(t,x;\lambda)$ is derived in \cite[Lemma 3.2]{FoSiZa:13}.
\begin{lem}\label{lem_vegagamma}
	The Merton value function $M(t,x;\lambda)$ satisfies the ``Vega-Gamma'' relation
	\begin{equation}
	M_\lambda = -(T-t)\lambda R^2 M_{xx},
	\end{equation}
 where
 \begin{equation}\label{def_risktolerance}
R(t,x;\lambda) = -\frac{M_x(t,x;\lambda)}{ M_{xx}(t,x;\lambda)},
\end{equation}
is  the {risk-tolerance function}.
\end{lem}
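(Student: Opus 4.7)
The plan is to exploit a scaling symmetry of the HJB equation \eqref{eq_value}, in which the parameter $\lambda$ enters only through $\lambda^2$ multiplying the nonlinear term. I would introduce the time change $\tau=\lambda^2(T-t)$ and set $\Phi(\tau,x;\lambda):=M(T-\tau/\lambda^2,\,x;\,\lambda)$. A direct computation using \eqref{eq_value} shows that
\begin{equation*}
\Phi_\tau + \tfrac{1}{2}\frac{\Phi_x^2}{\Phi_{xx}} = 0, \qquad \Phi(0,x;\lambda)=U(x),
\end{equation*}
a Cauchy problem whose equation and initial datum are both independent of $\lambda$. By the uniqueness statement in Proposition~\ref{prop_Merton} (applied, say, at $\lambda=1$ to the corresponding Merton value function), this forces $\Phi$ to be independent of $\lambda$, so that
\begin{equation*}
M(t,x;\lambda)=\Phi\bigl(\lambda^2(T-t),\,x\bigr).
\end{equation*}

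With this reduction in hand, the Vega--Gamma identity is a one-line consequence of the chain rule. Differentiating the displayed identity in $\lambda$ at fixed $(t,x)$ yields
\begin{equation*}
M_\lambda \;=\; 2\lambda(T-t)\,\Phi_\tau \;=\; 2\lambda(T-t)\cdot\Bigl(-\tfrac{1}{2}\tfrac{M_x^2}{M_{xx}}\Bigr) \;=\; -(T-t)\,\lambda\,\frac{M_x^2}{M_{xx}},
\end{equation*}
where the second equality uses the PDE satisfied by $\Phi$ together with $\Phi_x=M_x$, $\Phi_{xx}=M_{xx}$. Recognising that $M_x^2/M_{xx}=R^2 M_{xx}$ by the definition \eqref{def_risktolerance} of the risk tolerance, this is exactly the claimed relation $M_\lambda=-(T-t)\lambda R^2 M_{xx}$.

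The main, and really only, obstacle is the justification that $\Phi$ is $\lambda$-independent, which rests on the uniqueness half of Proposition~\ref{prop_Merton}; the $C^1$ regularity of $M$ in $\lambda$ asserted there also ensures the differentiation step is legitimate. If one wishes to avoid invoking PDE uniqueness, an equivalent route is the probabilistic representation of the optimal terminal wealth, $X_T^\star=I\bigl(Y^\star\xi_{t,T}\bigr)$ with state-price density $\xi_{t,T}=\exp\!\bigl(-\lambda(W_T-W_t)-\tfrac{1}{2}\lambda^2(T-t)\bigr)$: since the law of $\xi_{t,T}$ depends on $(t,\lambda)$ only through $\lambda^2(T-t)$, the representation $M(t,x;\lambda)=\mathbb{E}\bigl[U\bigl(I(Y^\star\xi_{t,T})\bigr)\bigr]$ immediately factors through $\tau=\lambda^2(T-t)$, and the same differentiation closes the argument.
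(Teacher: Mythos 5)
Your proof is correct and takes a genuinely different route from what the paper points to. The paper itself does not reproduce a proof: it cites \cite[Lemma~3.2]{FoSiZa:13}, where the argument goes by direct differentiation of the Merton PDE \eqref{eq_value} in $\lambda$ to obtain a linear equation solved by $M_\lambda$ with zero terminal condition, and then verifies that the candidate $-(T-t)\lambda R^2 M_{xx}$ solves the same linear equation. Your approach instead isolates the scaling symmetry: the Merton value function factors as $M(t,x;\lambda)=\Phi\bigl(\lambda^2(T-t),x\bigr)$ for a $\lambda$-free $\Phi$, after which the Vega--Gamma identity falls out of the chain rule in one line. This is conceptually cleaner and makes transparent where the $(T-t)\lambda$ factor comes from, whereas the direct verification of \cite{FoSiZa:13} is more elementary and avoids reasoning about a one-parameter family of Cauchy problems. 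One technical point to flag: for $\lambda>1$ the rescaled time $\tau=\lambda^2(T-t)$ ranges over $[0,\lambda^2 T]\supsetneq[0,T]$, so your uniqueness step needs the Cauchy problem $\Phi_\tau+\tfrac{1}{2}\Phi_x^2/\Phi_{xx}=0$ to be well-posed on the longer interval $[0,\lambda^2 T]$, which is a hair more than the literal statement of Proposition~\ref{prop_Merton} on $[0,T]$. This extension is immediate via the same change of variables reducing to the heat equation (cf.\ the proof of Proposition~\ref{prop_ltxunique}), but it deserves a sentence. Your probabilistic fallback through the state-price density is equally valid and exhibits the $\lambda^2(T-t)$ dependence without invoking PDE uniqueness at all.
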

Note that $R(t,x;\lambda)$ is continuous, strictly positive due to the regularity, concavity and monotonicity of $M (t,x;\lambda)$. As introduced in \cite{FoSiZa:13}, we recall the notation
\begin{align}\label{def_dk}
D_k &= R(t,x;\lambda)^k \partial_x^k, \qquad k = 1,2, \cdots,\\
\Ltx(\lambda) &= \partial_t + \frac{1}{2}\lambda^2D_2 + \lambda^2D_1.\label{def_ltx}
\end{align}
Note that the coefficients of $\Ltx(\lambda)$ depend on $R(t,x;\lambda)$, and then on $M(t,x;\lambda)$, and the Merton PDE \eqref{eq_value} can be re-written as
\begin{align}\label{eq_mertonlinear}
&\Ltx(\lambda)M(t,x;\lambda) = 0.
\end{align}
The following  result regarding the linear operator $\Ltx(\lambda)$ will be used repeatedly in Sections~\ref{sec_accuracy} and \ref{sec_optimality}.
\begin{prop}\label{prop_ltxunique}
	Let $\Ltx(\lambda)$ be the operator defined in \eqref{def_ltx}, and assume that the utility function $U(x)$ satisfies the conditions in Proposition \ref{prop_Merton} and $U(0+) = 0$ (or finite), then 
	\begin{equation}\label{def_ltxpde}
	 \Ltx(\lambda)u(t,x;\lambda) = 0,\quad
	u(T,x;\lambda) = U(x),
	\end{equation}
	has a unique nonnegative solution.
\end{prop}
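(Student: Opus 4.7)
The plan is to view \eqref{def_ltxpde} as a \emph{linear} backward parabolic problem in $u$ --- the coefficients $\tfrac12\lambda^2 R^2$ and $\lambda^2 R$ are fixed once $M(t,x;\lambda)$ (and hence $R$) is determined by Proposition~\ref{prop_Merton} --- and to reduce it to the standard heat equation via the Karatzas--Zitkovic change of variables announced in Proposition~\ref{prop_H}, on which Widder's uniqueness theorem for nonnegative solutions is classical.

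Existence is immediate: a direct computation using $R=-M_x/M_{xx}$ gives $\tfrac12\lambda^2 R^2 M_{xx}+\lambda^2 R M_x = -\tfrac12\lambda^2 M_x^2/M_{xx}$, so the Merton PDE \eqref{eq_value} rewrites exactly as $\Ltx(\lambda)M = 0$ with $M(T,x;\lambda)=U(x)$. Nonnegativity of $M$ follows from $U(0+)\geq 0$ together with the probabilistic representation $M=\EE[U(X_T^{\pi^\star})]$, so $u=M$ is one nonnegative classical solution of \eqref{def_ltxpde}.

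For uniqueness I would take any nonnegative $C^{1,2}$ solution $u$ of \eqref{def_ltxpde} and differentiate $\Ltx(\lambda)u=0$ in $x$, obtaining a linear parabolic PDE for $u_x$ whose coefficients still depend only on $R$ (and its $x$-derivative). Applying the Karatzas--Zitkovic transformation announced in Proposition~\ref{prop_H}, both $u_x$ and $M_x$ become nonnegative solutions of the standard heat equation on $[0,T]\times\RR$ with the same terminal data $U'$ expressed in the new spatial coordinate; nonnegativity of $u_x$ is inherited from monotonicity of $u$ in $x$, itself a consequence of the monotonicity of $U$ and the Feynman--Kac interpretation of the linear equation $\Ltx(\lambda)u=0$. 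Widder's theorem then forces $u_x\equiv M_x$, and integrating in $x$ with the boundary condition $u(t,0+)=U(0+)=M(t,0+)$ --- coming from admissibility, which forces the wealth process to be absorbed at $0$ with value $U(0+)$ --- yields $u\equiv M$.

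The main obstacle is the verification that $u_x$, once transformed, satisfies the growth hypothesis required by Widder (a Gaussian-type bound in the new variable) and that the boundary identification at $x=0^+$ is rigorous for an arbitrary nonnegative solution, not only for $M$. Both checks rely on controlling $u$ and $u_x$ from above by $M$ and $M_x$ respectively, using the concavity and monotonicity of $U$ via a comparison-principle type argument adapted to $\Ltx(\lambda)$.
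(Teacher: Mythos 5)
Your plan to differentiate $\Ltx(\lambda)u=0$ in $x$ and then transform $u_x$ into a heat equation solution does not go through. Differentiating $\Ltx(\lambda)u = u_t + \tfrac12\lambda^2 R^2 u_{xx} + \lambda^2 R u_x = 0$ with respect to $x$ produces, for $v=u_x$, the equation
\begin{equation*}
v_t + \tfrac12\lambda^2 R^2 v_{xx} + \lambda^2\bigl(R + RR_x\bigr)v_x + \lambda^2 R_x\, v = 0,
\end{equation*}
which carries a zeroth-order term $\lambda^2 R_x v$. A pure change of coordinates cannot eliminate a zeroth-order term, so the transformed $u_x$ is \emph{not} a solution of the standard heat equation and Widder's theorem cannot be applied as stated. (It is true that $M_x$ happens to satisfy $\Ltx(\lambda)(M_x)=0$ and transforms to a heat solution, but this uses the identity $R\,M_{xx}+M_x=0$, which is the defining relation of $R$ in terms of $M$ and does not hold with $u$ in place of $M$.) Further, the step ``$u_x\geq 0$ follows from monotonicity of $U$ and Feynman--Kac'' is circular: the Feynman--Kac representation is essentially the uniqueness statement you are trying to prove, and an arbitrary nonnegative classical solution need not be monotone in $x$ a priori. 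Finally you would still need to control the boundary value at $x=0^+$ to integrate back, an issue that does not appear in the paper.

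The paper's argument is simpler precisely because the change of variables $\xi=-\log M_x(t,x;\lambda)+\tfrac12\lambda^2(T-t)$, $t'=t$ is applied directly to $u$ (not to $u_x$). Setting $w(t',\xi)=u(t,x)$, a short computation that uses the Merton PDE for $M$ (through $M_{tx}/M_x=\tfrac12\lambda^2(1-R_x)$) shows that both the first-order and zeroth-order terms cancel, leaving exactly $w_{t'}+\tfrac12\lambda^2 w_{\xi\xi}=0$ with terminal data $U(I(e^{-\xi}))$. Nonnegativity of $w$ is immediate from nonnegativity of $u$, so Widder's uniqueness for the heat equation applies at once and there is nothing left to integrate or to identify at the boundary. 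The key insight you missed is that the transformation linearizes $\Ltx(\lambda)$ into the heat operator already at the level of $u$, with no derivative taken beforehand.
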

\begin{proof}
	First, observe that $M(t,x;\lambda)$ is a solution of \eqref{def_ltxpde}. To show uniqueness,   we use the following transformation
	\begin{align}\label{eq_trans}
	\left\{ {\begin{array}{*{20}{l}}
		{\xi = -\log M_x(t,x;\lambda)} + \frac{1}{2}\lambda^2(T-t), \\
		{t' = t},
		\end{array}} \right.
	\end{align}
	which is one-to-one since the Jacobian $\abs{\frac{-M_{xx}}{M_x}}$ stays positive. Define $w(t',\xi;\lambda) = u(t,x;\lambda)$, then $w$ solves:
	\begin{align*}
	\MCH w = w_{t'} + \frac{1}{2}\lambda^2 w_{\xi\xi} = 0, \quad w(T,\xi;\lambda) = U(I(e^{-\xi})).
	\end{align*}
	Uniqueness of the nonnegative solution then follows from classical results for the heat equation \cite[Chapter 7.1(d)]{JO:82}.
\end{proof}

\subsection{Heuristic Expansion under Slowly Varying Stochastic Factor}\label{sec_heuristic}
In this subsection, we summarize the expansion derived in  \cite{FoSiZa:13} that will be used in following sections. \done{The Hamilton-Jacobi-Bellman (HJB) equation  for $\Vz$ is given by
\begin{equation}\label{eq_Vz}
\Vz_t + \delta\MCM\Vz + \max_{\pi \in \MCA^\delta}\left(\frac{1}{2}\sigma(z)^2\pi^2\Vz_{xx} + \pi\left(\mu(z)\Vz_x + \sqrt\delta \rho  g(z)\sigma(z) \Vz_{xz}\right)\right) = 0.
\end{equation}
In general, $\Vz$ is only described as the viscosity solution of this HJB equation. In the heuristic derivation,  it is assumed that $\Vz$ is the unique classical solution of \eqref{eq_Vz}.
Now, maximizing in $\pi$ and plugging in the optimizer gives the following nonlinear equation with terminal condition,
\begin{equation}\label{eq_Vznonlinear}
\Vz_t + \delta \MCM\Vz - \frac{\left(\lambda(z)\Vz_x + \sqrt\delta\rho  g(z)\Vz_{xz}\right)^2}{2\Vz_{xx}} = 0,\quad
 \Vz(T,x,z) = U(x),
\end{equation}
where the optimizer (optimal control) is given in the feedback form by
\begin{equation*}
\pi^\ast = -\frac{\lambda(z)\Vz_x}{\sigma(z)\Vz_{xx}} - \frac{\sqrt{\delta}\rho  g(z)\Vz_{xz}}{\sigma(z)\Vz_{xx}},
\end{equation*}
and the Sharpe ratio is $\lambda(z) = \mu(z)/\sigma(z)$.
\begin{rem}
 In this paper, we do not assume regularity of $\Vz$ defined by \eqref{def_Vz}, as we will work with the quantity $\Vzl$ defined by \eqref{def_Vzl}, which will be the classical solution of the linear PDE \eqref{eq_Vzl}.
\end{rem}
The HJB equation \eqref{eq_Vznonlinear} is fully nonlinear and not explicitly solvable in general. The heuristic derivation of the expansion of $\Vz$ is obtained by applying the regular perturbation method, which consists substituting the expansion $\Vz = \vz + \sqrt\delta\vo + \delta\vt + \cdots$ into \eqref{eq_Vznonlinear} and collecting terms of corresponding orders to obtain:}
\begin{enumerate}[(i)]
	\item The \emph{leading order term} $\vz$ \done{is defined as the solution to} the Merton  PDE associated with the current (or frozen) Sharpe ratio:
		\begin{equation}
		\vz_t - \frac{1}{2}\lambda(z)^2\frac{\left(\vz_x\right)^2}{\vz_{xx}} = 0, \quad \vz(T,x,z) = U(x).\label{eq_vz}
		\end{equation}
   By uniqueness in Proposition \ref{prop_Merton}, we have
   	\begin{equation}\label{eq_vzandmerton}
   	\vz(t,x,z) = M\bigl(t,x;\lambda(z)\bigr),
   	\end{equation}

	\item The \emph{first order correction} $\vo$ \done{is defined as the solution to} the linear PDE:
	\begin{equation}
	\vo_t + \frac{1}{2}\lambda(z)^2\left(\frac{\vz_x}{\vz_{xx}}\right)^2\vo_{xx} - \lambda(z)^2 \frac{\vz_x}{\vz_{xx}}\vo_x = \rho \lambda(z)g(z)\frac{\vz_x\vz_{xz}}{\vz_{xx}},\quad \vo(T,x,z)= 0.\nonumber
	\end{equation}
	Using the notations \eqref{def_dk}-\eqref{def_ltx}, $\vo$ satisfies the following linear equation which admits a unique solution
	\begin{equation}\label{eq_vo}
	\Ltx(\lambda(z))\vo = -\rho \lambda(z)g(z)D_1\vz_z, \quad \vo(T,x,z) = 0.
	\end{equation}
	\item By the ``Vega-Gamma'' relation stated in Lemma \ref{lem_vegagamma}, the $z$-derivative of the leading order term $\vz$ satisfies:
	\begin{equation}\label{eq_vzgamma}
	\vz_z = -(T-t)\lambda(z)\lambda'(z)D_2 \vz,
	\end{equation}
	and $\vo$ is explicitly given in term of $\vz$ by
	\begin{equation}\label{eq_vzandvo}
	\vo = -\frac{1}{2}(T-t)\rho \lambda(z)g(z)\frac{\vz_{x}\vz_{xz}}{\vz_{xx}}.
	\end{equation}
\end{enumerate}

\subsection{Assumptions on the Utility $U(x)$}\label{sec_assumpU}

\begin{assump}\label{assump_U}
Throughout the paper, we make the following assumptions on the utility $U(x)$:
\begin{enumerate}[(i)]
  \item\label{assump_Uregularity}  U(x) is $C^7(0,\infty)$, strictly increasing, strictly concave and satisfying the following conditions (Inada and Asymptotic Elasticity):
\begin{equation}\label{eq_usualconditions}
U'(0+) = \infty, \quad U'(\infty) = 0, \quad \text{AE}[U] := \lim_{x\rightarrow \infty} x\frac{U'(x)}{U(x)} <1.
\end{equation}
\item\label{assump_Ubddbelow}U(0+) is finite. Without loss of generality, we assume U(0+) = 0.
  \item\label{assump_Urisktolerance} Denote by $R(x)$ the risk tolerance, 
\begin{equation}\label{eq_risktolerance}
  R(x) := -\frac{U'(x)}{U''(x)}.
\end{equation}
Assume that $R(0) = 0$, R(x) is strictly increasing and $R'(x) < \infty$ on $[0,\infty)$, and there exists $K\in\RR^+$, such that for $x \geq 0$, and $ 2\leq i \leq 5$,
      \begin{equation}\label{assump_Uiii}
      \abs{\partial_x^{(i)}R^i(x)} \leq K.
      \end{equation}
  \item\label{assump_Ugrowth} Define the inverse function of the marginal utility $U'(x)$ as $I: \RR^+ \to \RR^+$, $I(y) = U'^{(-1)}(y)$, and assume that, for some positive $\alpha$, $I(y)$ satisfies the polynomial growth condition:
\begin{equation}\label{cond_I}
I(y) \leq \alpha + \kappa y^{-\alpha}.
\end{equation}
\end{enumerate}
\end{assump}

Note that the risk tolerance $R(x)$ given by \eqref{eq_risktolerance} is in fact the risk tolerance function $R(t,x;\lambda)$ at terminal time $T$. The assumption \eqref{assump_Uiii} is made for $2 \leq i \leq 5$, but in fact it holds for the case $i = 1$ as a consequence stated in the following lemma.


\begin{lem}[\cite{KaZa:14}, Proposition 14]\label{lem_U}
Assume that the risk tolerance $R(x)$ satisfies: $R(0) = 0$,  $R(x)$ is strictly increasing and  $R'(x) < \infty$ on $[0,\infty)$, and there exists $K \in \RR^+$ such that
\begin{equation*}
\abs{\partial_x^{(2)} R^2(x)} \leq K,
\end{equation*}
then
\begin{equation}
R'(x) \leq C \quad \text{and} \quad R(x) \leq Cx, \quad \text{with} \quad C = \sqrt{K/2}.
\end{equation}
\end{lem}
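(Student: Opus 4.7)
The plan is to work with the auxiliary function $g(x):=R^2(x)$, whose second derivative is bounded by hypothesis. By the stated regularity, $g(0)=0$, $g'(x)=2R(x)R'(x)\ge 0$ (using that $R$ is nonnegative and nondecreasing, which follows from $R(0)=0$ together with strict monotonicity), and $g'(0)=2R(0)R'(0)=0$ since $R(0)=0$ and $R'(0)$ is finite. The upper half of the bound $|g''|\le K$ then yields, after integrating once from $g'(0)=0$ and once more from $g(0)=0$,
\begin{equation*}
0 \le g'(x) \le Kx, \qquad 0 \le g(x) \le \tfrac{K}{2}x^2.
\end{equation*}
Taking square roots in the second inequality (using $R\ge 0$) gives $R(x)\le\sqrt{K/2}\,x$, which already disposes of the first claim in the lemma.

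For the pointwise bound $R'(x)\le\sqrt{K/2}$, the key observation is that for $x>0$ one may write $R'(x)=g'(x)/(2R(x))$, so $R'(x)^2=(g'(x))^2/(4g(x))$. Hence it suffices to prove the pointwise inequality $(g'(x))^2\le 2Kg(x)$. I would introduce the auxiliary function
\begin{equation*}
h(x):=2Kg(x)-(g'(x))^2,
\end{equation*}
observe that $h(0)=0$, and compute
\begin{equation*}
h'(x)=2Kg'(x)-2g'(x)g''(x)=2g'(x)\bigl(K-g''(x)\bigr).
\end{equation*}
Both factors on the right are nonnegative, the first by the previous paragraph and the second directly from the hypothesis $g''\le K$. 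Therefore $h'\ge 0$, and so $h\ge 0$ on $[0,\infty)$. Dividing by $4g(x)>0$ gives $R'(x)^2\le K/2$ for every $x>0$, and the value at $x=0$ follows by continuity of $R'$ (inherited from the smoothness of $U$, which makes $g''$ continuous and gives $g''(0)=2R'(0)^2\le K$).

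There is essentially no technical obstacle here: the argument is elementary one-variable calculus, and only uses the upper half $g''\le K$ of the symmetric bound together with the structural positivity of $g$ and $g'$. The one nonobvious move is identifying the multiplier $h=2Kg-(g')^2$, which is natural once one writes $R'=g'/(2R)$ and asks which combination of $g,g'$ is forced to be monotone by the sign conditions $g'\ge 0$ and $K-g''\ge 0$ coming directly from the hypotheses.
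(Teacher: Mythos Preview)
The paper does not supply its own proof of this lemma; it is stated with attribution to \cite{KaZa:14}, Proposition~14, and used as a black box. So there is no in-paper argument to compare against, and the question reduces to whether your proof stands on its own.

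It does. Setting $g=R^2$ and integrating the bound $g''\le K$ twice from $g(0)=g'(0)=0$ gives $g(x)\le \tfrac{K}{2}x^2$, hence $R(x)\le\sqrt{K/2}\,x$. The auxiliary function $h=2Kg-(g')^2$ with $h(0)=0$ and $h'=2g'(K-g'')\ge 0$ yields $(g')^2\le 2Kg$, and dividing by $4g$ gives $R'(x)^2\le K/2$ for $x>0$. This is a clean, self-contained calculus argument using only the upper half of the assumed bound on $(R^2)''$.

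One small point: your treatment of the endpoint $x=0$ appeals to smoothness of $U$, which is true in the paper's setting but extraneous to the lemma's hypotheses as stated. A cleaner way to close the case $x=0$ is to use what you already proved: since $R(0)=0$ and $R(x)\le\sqrt{K/2}\,x$, the (one-sided) derivative at the origin satisfies
\[
R'(0)=\lim_{x\to 0^+}\frac{R(x)-R(0)}{x}\le \sqrt{K/2},
\]
with no further regularity needed.
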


\begin{lem}
The Asymptotic Elasticity condition \eqref{eq_usualconditions} is implied by the following condition:
\begin{equation*}
R(x) \leq Cx.
\end{equation*}
\end{lem}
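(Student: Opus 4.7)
The plan is to turn the pointwise bound $R(x)\le Cx$ into a differential inequality for $U'$ and then integrate in order to control $xU'(x)$ by $U(x)$. Since $U'>0$ and $U''<0$ on $(0,\infty)$, the hypothesis $R(x)=-U'(x)/U''(x)\le Cx$ rearranges (multiplying by $-U''(x)>0$ and dividing by $Cx>0$) to $U''(x)\le -U'(x)/(Cx)$ for $x>0$, and so $xU''(x)\le -U'(x)/C$. This yields the key pointwise estimate
\[
\bigl(xU'(x)\bigr)' \;=\; U'(x)+xU''(x) \;\le\; \left(1-\tfrac{1}{C}\right)U'(x).
\]

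Next I would integrate this from some fixed $x_0>0$ up to $x$, using $\int_{x_0}^{x}U'(s)\,ds=U(x)-U(x_0)$, to obtain the linear comparison
\[
xU'(x) \;\le\; \left(1-\tfrac{1}{C}\right)U(x)+K, \qquad x\ge x_0,
\]
with $K:=x_0 U'(x_0)-(1-1/C)U(x_0)$. Dividing by $U(x)>0$ produces the main inequality
\[
\frac{xU'(x)}{U(x)} \;\le\; \left(1-\tfrac{1}{C}\right) + \frac{K}{U(x)}.
\]

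The final step is a case split on $U(\infty)$. If $U(\infty)=\infty$, then $K/U(x)\to 0$ and hence $\limsup_{x\to\infty}xU'(x)/U(x)\le 1-1/C<1$, which is the claim. If instead $U(\infty)<\infty$, the linear bound is not itself sharp enough (e.g.\ when $C<1$ the constant $K/U(\infty)$ could a priori be close to $1$), so I would invoke an auxiliary monotonicity argument: since $U'$ is positive and decreasing, $(x/2)\,U'(x)\le\int_{x/2}^{x}U'(s)\,ds=U(x)-U(x/2)\to 0$, so $xU'(x)\to 0$ and therefore $xU'(x)/U(x)\to 0<1$. Combining both cases gives $\mathrm{AE}[U]<1$. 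The only mildly delicate point is precisely the bounded-utility case, which is handled not by the differential inequality itself but by the decreasingness of $U'$; the rest is just ODE manipulation.
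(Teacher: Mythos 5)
Your proof is correct, and it takes a genuinely different route from the paper. The paper proves the lemma in one line by observing that $R(x)\le Cx$ gives $AP[U](x)=x/R(x)\ge 1/C>0$, and then invoking Proposition B.3 of \cite{SC:04}, which says $\liminf_{x\to\infty}AP[U](x)=a>0$ implies $AE[U]\le(1-a)^+$. You instead give a self-contained proof: you convert the pointwise bound on $R$ into the differential inequality $(xU'(x))'\le(1-1/C)U'(x)$, integrate to obtain $xU'(x)\le(1-1/C)U(x)+K$, and split cases according to whether $U(\infty)=\infty$ (where $K/U(x)\to 0$ closes the argument) or $U(\infty)<\infty$ (where the bound may not be sharp, so you instead show $xU'(x)\to 0$ directly from the monotonicity of $U'$). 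This is essentially a direct re-derivation of the part of \cite{SC:04}'s Proposition B.3 that is relevant here. Your approach is more elementary and does not rely on the external reference; the paper's is shorter but opaque without consulting \cite{SC:04}. One small remark: your parenthetical ``when $C<1$ the constant $K/U(\infty)$ could a priori be close to $1$'' slightly misstates the problematic case — it is $C>1$ (so that $1-1/C\in(0,1)$) where the additive $K/U(\infty)>0$ could push the bound over $1$ — but this is only the motivating remark; the case split you carry out does resolve the issue for all $C$.
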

\begin{proof}
Define the Arrow-Pratt risk aversion by:
\begin{equation}\label{eq_AP}
AP[U](x) = -x\frac{U''(x)}{U'(x)}.
\end{equation}
If follows directly from Proposition B.3 in \cite{SC:04}:
\begin{equation*}
\text{If } \liminf_{x\to +\infty} AP[U](x) = a > 0, \text{ then } AE[U] \leq (1-a)^+,
\end{equation*}
and consequently
$a = \liminf_{x\to+\infty} \left(\frac{x}{R(x)}\right) \geq \liminf_{x\to+\infty}\frac{x}{Cx} = \frac{1}{C} >0.$
\end{proof}

\begin{rem}\label{rem_U}
Assumption~\ref{assump_U}~\eqref{assump_Ubddbelow} is a sufficient assumption, in fact, there are cases where U(0+) is not finite, but our main Theorem \ref{Thm_one} still holds. For example, power utility $U(x) = \frac{x^\gamma}{\gamma}$ with $\gamma < 0$, and logarithmic utility $U(x) = \log(x)$. For the first case, the fully non-linear accuracy problem is completely solved in \cite{FoSiZa:13} by a distortion transformation, which linearizes the problem.

By expending $\partial_x^{(i)} R^i(x)$ in \eqref{assump_Uiii} and Lemma \ref{lem_U} , it is easily shown that Assumption \ref{assump_U} \eqref{assump_Urisktolerance}  is equivalent to the following conditions on the risk tolerance $R(x)$:  a) $R(0) = 0$, and $R(x)$ is strictly increasing on $[0, \infty)$; and b) $\abs{R^j(x)\left(\partial_x^{(j+1)}R(x)\right)} \leq K$, $\forall 0\leq j \leq 4$.
\end{rem}

\begin{prop}\label{prop_U}
The following classes of utility functions satisfy Assumption \ref{assump_U}:
\begin{enumerate}[(i)]
   \item Average of powers: $U(x) = \int_E x^y \nud{y}$, where $\nud{y}$ is a finite positive measure, and the support $E$ is compact, contained in $[0,1)$ and $\nu(\{0\}) = 0$. Two special cases are:
   \begin{enumerate}[a)]
   \item Power utility $U(x) = \frac{1}{\gamma}x^\gamma$, with $\gamma \in (0,1)$;
   \item Mixture of power utilities $U(x) = c_1\frac{x^{\gamma_1}}{\gamma_1} + c_2\frac{x^{\gamma_2}}{\gamma_2}$, with $\gamma_1, \gamma_2 \in (0,1)$ and  $c_1, c_2 >0$.
   \end{enumerate}
   In both cases, $\nud{y}$ is a counting measure of point(s) in [0,1).
   \item U(x) is given by positive inverse of the marginal utility $I(y)= U'^{(-1)}(y): \RR^+ \to \RR^+$,
   \begin{equation}
   I(y) = \int_0^N y^{-s}\nud{s},
   \end{equation}
   with $\nu$ being finite and positive on compact support $(N < +\infty)$. This is Example 18 in \cite{KaZa:14}, \done{and it satisfies condition \eqref{assump_Uiii} for $j=1,2$ as proved there.}
 \end{enumerate}
\end{prop}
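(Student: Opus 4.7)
The plan is to verify conditions (\ref{assump_Uregularity})--(\ref{assump_Ugrowth}) of Assumption~\ref{assump_U} for each of the two classes. For both classes, the routine items---smoothness, strict monotonicity, strict concavity, the Inada conditions, $U(0+)=0$, asymptotic elasticity, and the polynomial-growth bound on $I$---will be handled by differentiating the defining integral representation under the integral sign, a step justified by compactness of the support together with the exclusion of $y=0$ in class (i). Using Remark~\ref{rem_U}, the delicate condition \eqref{assump_Uiii} reduces to establishing $\abs{R^j(x)R^{(j+1)}(x)} \leq K$ for $0 \leq j \leq 4$ together with $R(0)=0$ and strict monotonicity of $R$.

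For class (i), I will write $\phi(x) = \int_E y x^{y-1}\,\nu(dy)$ and $\psi(x) = \int_E y(1-y) x^{y-2}\,\nu(dy)$ so that $R(x) = \phi(x)/\psi(x)$. Setting $y_\ast = \inf E$ and $y^\ast = \sup E$ (both in $(0,1)$ since $\nu(\{0\})=0$ and $E$ is compact in $[0,1)$), the factor $x$ pulls out and gives the uniform bound $R(x) \leq x/(1-y^\ast)$, which in turn yields the asymptotic elasticity condition via the lemma preceding Remark~\ref{rem_U}. The Inada limits and $U(0+)=0$ follow by dominated convergence, and the polynomial growth of $I$ follows from the power-like behavior of $U'$, namely $U'(x) \sim x^{y^\ast-1}$ near zero and $U'(x) \sim x^{y_\ast-1}$ near infinity. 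In the pure power case $U(x)=x^\gamma/\gamma$, one has $R(x)=x/(1-\gamma)$, so $R^i$ is a monomial and all $\partial_x^{(i)} R^i$ are constants, settling \eqref{assump_Uiii} immediately.

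The main obstacle is the general derivative bound $\abs{\partial_x^{(i)}R^i(x)} \leq K$ for $i=2,\ldots,5$ in the integral representation (in particular for the two-term mixture). The plan is to differentiate the identity $R(x)U''(x) + U'(x)=0$ repeatedly and express each $R^{(j+1)}$ as a rational function in $R, R',\ldots, R^{(j)}$ with coefficients of the form $U^{(k)}(x)/U''(x)$. Each such ratio is uniformly bounded on $(0,\infty)$ because numerator and denominator share matched asymptotic behaviors---both dominated by $x^{y_\ast-2}$ near zero and by $x^{y^\ast-2}$ near infinity once the common $x^y$ factor inside the integrals is tracked---so all the power-law singularities cancel in the ratio. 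An induction on $j$, combined with Lemma~\ref{lem_U} for the base case, then yields bounds on $R^j R^{(j+1)}$ depending only on $y_\ast$, $y^\ast$, and $\nu(E)$.

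For class (ii), with $U$ defined through $I(y)=\int_0^N y^{-s}\,\nu(ds)$, the derivative bounds on $R^i$ for $i=1,2$ are already supplied by Example~18 of~\cite{KaZa:14} (and cited as such in the proposition); the extension to $i=3,4,5$ runs along identical lines since the higher moments $\int_0^N s^k y^{-s}\,\nu(ds)$ are uniformly controlled by $\int_0^N y^{-s}\,\nu(ds)$ on compact $y$-intervals because the support of $\nu$ is bounded. The polynomial-growth condition on $I$ is immediate from the explicit formula, and the remaining properties of $U$ follow by inverting $I$ (which is smooth and strictly decreasing on $(0,\infty)$ with $I(0+)=+\infty$ and $I(\infty)=0$) and integrating $U'=I^{-1}$ starting from zero so as to enforce $U(0+)=0$.
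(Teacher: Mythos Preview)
Your outline for class~(ii) is close to the paper's approach (which proceeds via the transformation $H(x,T,\lambda)=I(e^{-x})$ and the identities $R'=H_{xx}/H_x$, $\frac12(R^2)''=H_{xxx}/H_x$, etc.), so I focus on class~(i), where there is a genuine gap.

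Your central claim for \eqref{assump_Uiii} is that, after differentiating $RU''+U'=0$, each resulting coefficient $U^{(k)}(x)/U''(x)$ is uniformly bounded on $(0,\infty)$ because ``numerator and denominator share matched asymptotic behaviors\ldots both dominated by $x^{y_\ast-2}$ near zero''. This is false for $k\geq 3$. Since $U^{(k)}(x)=\int_E y(y-1)\cdots(y-k+1)x^{y-k}\,\nu(\mathrm{d}y)$, one has $U'''(x)/U''(x)\sim c/x$ as $x\to 0$ (and more generally $U^{(k)}/U''\sim c_k\,x^{2-k}$). The factor $x^{y}$ is common inside the integrals, but the prefactors $x^{-k}$ and $x^{-2}$ are not, so the ``power-law singularities'' do \emph{not} cancel in the ratio. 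As a consequence your induction, which needs bounded coefficients, cannot be run as stated; already the expression $R''=-2R'\,\frac{U'''}{U''}-R\,\frac{U^{(4)}}{U''}-\frac{U'''}{U''}$ has an unbounded free term.

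The paper avoids this by first pulling the explicit power of $x$ out of $R$. With $\langle f(y)\rangle:=\int_E f(y)x^{y}\,\nu(\mathrm{d}y)$, one writes $R(x)=x\,\langle y\rangle/\langle y(1-y)\rangle$ and computes $R'$, $RR''$, $R^2R'''$, \ldots\ directly as ratios of the form $\langle P(y)\rangle/\langle Q(y)\rangle$ (or sums thereof) with $P,Q$ bounded polynomials on $E\subset[a,b]\subset[0,1)$; these ratios are uniformly bounded because $1-y\in[1-b,1-a]$. In your language, the correct statement is that $x^{k-2}\,U^{(k)}(x)/U''(x)$ is bounded, and combined with $R(x)\leq K_0 x$ (Lemma~\ref{lem_U}) this is what makes $R^j R^{(j+1)}$ bounded. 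A minor slip: $y_\ast=\inf E$ need not lie in $(0,1)$; the hypothesis $\nu(\{0\})=0$ does not exclude $0\in E$, and indeed the paper allows $a=0$.
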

The proof of Proposition \ref{prop_U} is left to Appendix~\ref{app_U}.

\begin{rem}
In the first class of utilities, $1 \notin E$ in general , unless further assumptions are prescribed on $\nud{y}$. For instance if $\nud{y}= \ud y$ and $E = [0,1]$, then, $AE[U] = \lim_{x\to+\infty}\frac{\ln(x)-1}{\ln(x)} =1$ which does not satisfy \eqref{eq_usualconditions}.
\end{rem}

\begin{rem}
For the power utility $U(x) = \frac{x^\gamma}{\gamma}$, the Arrow-Pratt risk aversion \eqref{eq_AP} is constant and the risk tolerance \eqref{eq_risktolerance} is linear, given by
\begin{equation*}
AP[U](x) = -x\frac{U''(x)}{U'(x)} = 1-\gamma, \qquad R(x) = \frac{x}{1-\gamma}.
\end{equation*}
However, general utilities, such as a mixture of two powers
$$U^{Mix}(x) = c_1\frac{x^{\gamma_1}}{\gamma_1} + c_2\frac{x^{\gamma_2}}{\gamma_2}, \quad 0<  \gamma_1 \leq \gamma_2 <1,$$
produce nonlinear risk aversion functions:
\begin{equation}\label{eq_apmixofpower}
AP[U^{Mix}](x) = \frac{c_1(1-\gamma_1)x^{\gamma_1 - \gamma_2} + c_2(1-\gamma_2)}{c_1x^{\gamma_1 - \gamma_2} + c_2},
\end{equation}
as well as nonlinear risk tolerances,
\begin{equation}
R(x) = \left(\frac{c_1x^{\gamma_1 - \gamma_2} + c_2}{c_1(1-\gamma_1)x^{\gamma_1-\gamma_2} + c_2(1-\gamma_2)}\right)x \sim \left\{
\begin{matrix}
\frac{x}{1-\gamma_2}, \text{ as } x \to \infty, \\
\frac{x}{1-\gamma_1}, \text{ as } x \to 0.
\end{matrix}
\right.
\end{equation}
This is illustrated in Figure \ref{fig-mixture}. Therefore, working with general utility enables us to model nonlinear relation between the relative risk aversion and the wealth (middle plot), and makes our model closer to results from empirical studies on how $AP[U](x)$ varies with wealth.
\begin{figure}[H]
  \centering
  \includegraphics[width=0.32\textwidth]{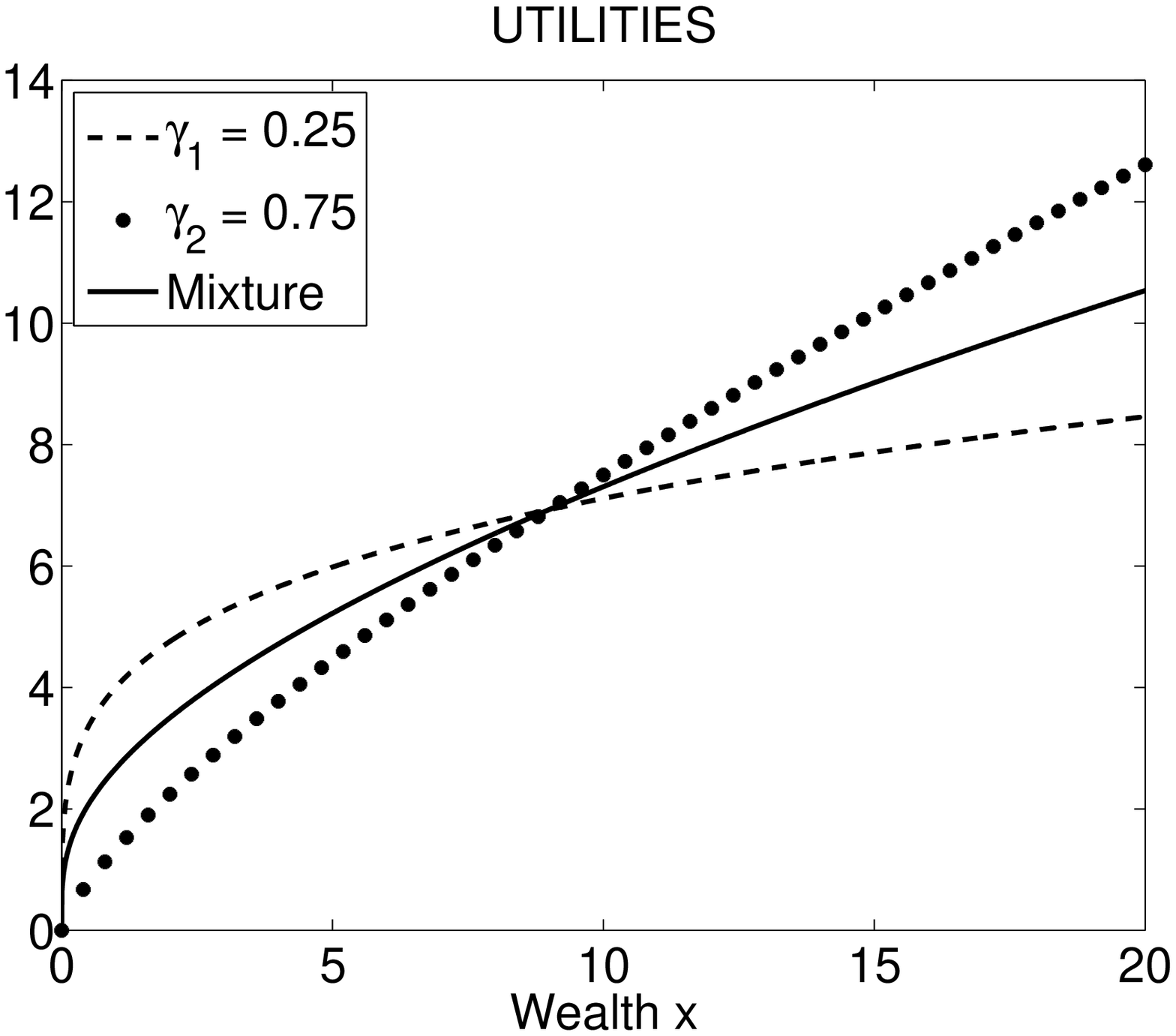}
  \includegraphics[width=0.32\textwidth]{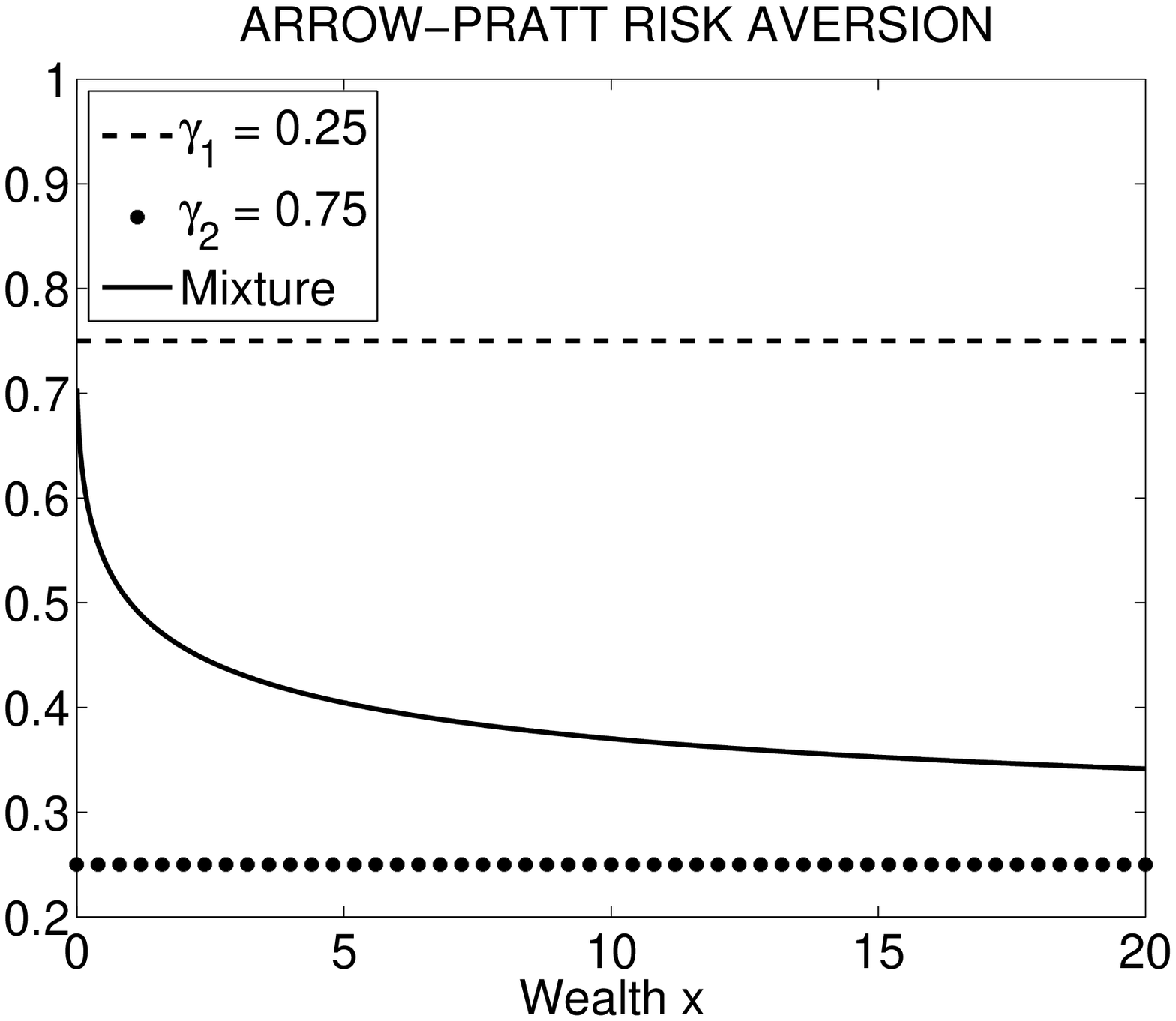}
  \includegraphics[width=0.32\textwidth]{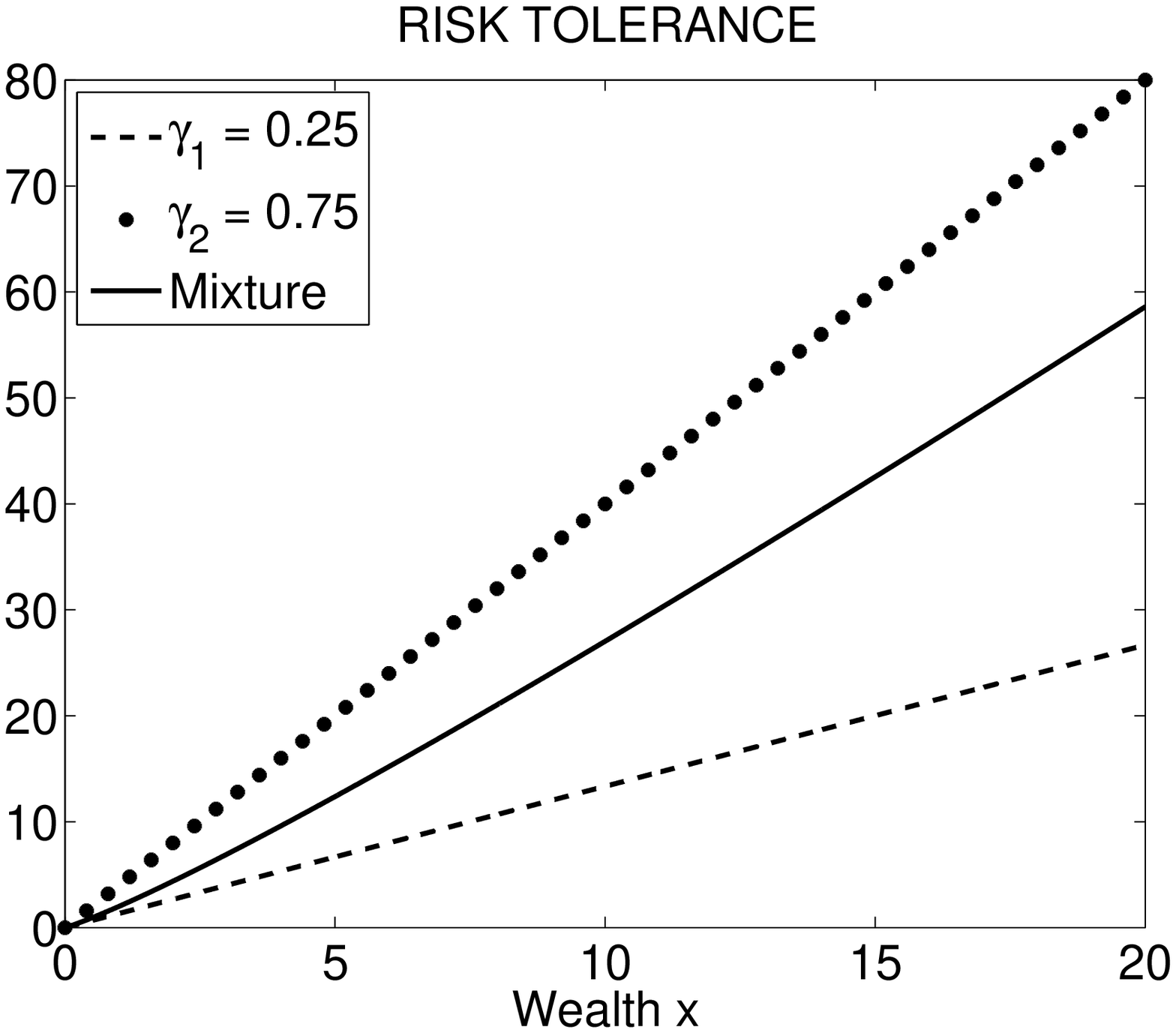}\\
  \caption{Mixture of power utilities with $\gamma_1 = 0.25$, $\gamma_2 = 0.75$ and  $c_1 = c_2 = 1/2$.}\label{fig-mixture}
\end{figure}

\end{rem}

\subsection{Assumptions on the State Processes $(X_t^\pz, S_t, Z_t)$}\label{sec_assumpvaluefunc}
Note that $z$ is only a parameter in the function $\vz(t,x,z) = M(t,x;\lambda(z))$ given by \eqref{eq_vz}, and for fixed $(t,z)$, $\vz$ is a concave function that has a linear upper bound. For $t=0$, there exists a function $G(z)$, so that $\vz(0,x,z) \leq G(z) + x, \forall (x,z) \in \RR^+\times \RR$.

The dynamics of the wealth process associated to the strategy $\pz(t,x,z) := \pi^\star(t,x;\lambda(z))$ given in \eqref{eq_pistar} and the slow factor $Z_t$ is given by:
\begin{align}\label{eq_Xt}
&\ud X_t^\pz = \pz(t,X_t^\pz,Z_t)\mu(Z_t)\ud t + \pz(t,X_t^\pz,Z_t)\sigma(Z_t)\ud W_t,\\
&\ud Z_t = \delta g(Z_t) \ud t  + \sqrt{\delta}c(z)\ud W_t^Z.\nonumber
\end{align}

\begin{assump}\label{assump_valuefunc} We make the following assumptions on the state processes $(X_t^\pz, S_t, Z_t)$:
\begin{enumerate}[(i)]
	\item\label{assump_valuefuncSZ} For any starting points $(s, z)$ and fixed $\delta$, the system of SDEs \eqref{eq_St}--\eqref{eq_Zt} has a unique strong solution $(S_t, Z_t)$. Moreover, the functions $\lambda(z)$ and $g(z)$ are in $C^3(\RR)$ and $C^2(\RR)$ respectively,
	 and the coefficients $g(z)$, $c(z)$, $\lambda(z)$ as well as their derivatives $g'(z)$, $g''(z)$, $\lambda'(z)$, $\lambda''(z)$ and $\lambda'''(z)$ are at most polynomially growing (see Remark \ref{remark:polynomiallygrowing}).
	
	\item\label{assump_valuefuncZmoment} The process $Z^{(1)}$ with infinitesimal generator $\MCM$ defined in \eqref{eq_Mgenerator} admits moments of any order uniformly in $t \leq  T$:
	\begin{equation}
	\sup_{t \leq T}\left\{ \EE\abs{Z_t^{(1)}}^k\right\} \leq C(T,k).
	\end{equation}
	    \item\label{assump_valuefuncG}  The process $G(Z_\cdot)$ is in $L^2([0,T]\times \Omega)$ uniformly in $\delta$, i.e.,
             \begin{equation}\label{assump_Gz}
             \EE_{(0,z)}\left[\int_0^T G^2(Z_s) \ud s\right] \leq C_1(T,z),
             \end{equation}
             where $C_1(T,z)$ is independent of $\delta$ and $Z_s$ follows \eqref{eq_Zt} with $Z_0 = z$.
         \item\label{assump_valuefuncX} The wealth process $X_\cdot^\pz$ is in  $L^2([0,T]\times \Omega)$ uniformly in $\delta$ , i.e.,
             \begin{equation}\label{assump_Xsquare}
             \EE_{(0,x,z)}\left[\int_0^T \left(X_s^\pz\right)^2 \ud s\right] \leq C_2(T,x,z),
             \end{equation}
             where $C_2(T,x,z)$ is independent of $\delta$ and $X_s^\pz$ follows \eqref{eq_Xt} with $X_0^\pz = x$.

       \end{enumerate}
\end{assump}

\begin{rem}\label{remark:polynomiallygrowing}
Note that in Assumption~\ref{assump_valuefunc}~\eqref{assump_valuefuncSZ},  the word ``polynomially growing'' is interpreted in different ways depending on the domain of $g(z), c(z)$ and $\lambda(z)$. For a function $h(z): \RR \to \RR$,  when $Z$ is an Ornstein--Uhlenbeck process for instance, then polynomial growth means that  there exists an integer $k$ and $a>0$, such that
\begin{equation*}
\abs{h(z)} \leq a (1+ \abs{z}^k).
\end{equation*}
Otherwise, if $h(z):\RR^+ \to \RR$, for example when $Z$ is a  Cox--Ingersoll-Ross process, then it means that there exists a $k \in \NN$ and $a>0$, such that
\begin{equation*}
\abs{h(z)} \leq a(1 + z^k + z^{-k}).
\end{equation*}

In Assumption~\ref{assump_valuefunc}~\eqref{assump_valuefuncG}, if the diffusion process $Z$ has exponential moments, then at-most exponential growth of $G(z)$ ensures  \eqref{assump_Gz}. An explicit example will be given in Section \ref{sec_example}.
\end{rem}

\begin{rem}
Note the Assumption \ref{assump_valuefunc} are on the utility function $U(x)$ through $\vz$, and the market model through $Z_\cdot$, but not on the unknown value function $\Vz$ itself. In fact, part of the assumption is to ensure the following estimate. 
\end{rem}
\begin{lem}\label{lem_unibdd}
Under Assumption~\ref{assump_valuefunc}~\eqref{assump_valuefuncG}-\eqref{assump_valuefuncX}, the process $\vz(\cdot, X_\cdot^\pz,Z_\cdot)$ is in $L^2([0,T]\times \Omega)$ uniformly in $\delta$, i.e. $\forall (t,x,z) \in [0,T]\times\RR^+\times \RR$:
\begin{equation}
\EE_{(t,x,z)}\left[\int_t^T \left(\vz(s,X_s^\pz, Z_s)\right)^2 \ud s\right] \leq C_3(T,x,z),
\end{equation}
where $\vz(t,x,z)$ is defined in Section \ref{sec_heuristic} and satisfies equation \eqref{eq_vz}.
\end{lem}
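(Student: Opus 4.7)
The plan rests on a single pointwise domination that turns the $L^2$ claim into a direct consequence of the moment bounds already assumed on $G(Z_\cdot)$ and $X_\cdot^\pz$. Specifically, I would prove that almost surely, for all $s \in [t,T]$,
\[
0 \le \vz(s, X_s^\pz, Z_s) \le G(Z_s) + X_s^\pz,
\]
and then square, integrate, and invoke Assumption \ref{assump_valuefunc}\eqref{assump_valuefuncG}--\eqref{assump_valuefuncX}.

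For the upper bound, I combine three ingredients: (a) the identification $\vz(t,x,z) = M(t,x;\lambda(z))$ from \eqref{eq_vzandmerton}; (b) the monotonicity of $M$ in the time variable from Proposition \ref{prop_Merton}, which gives $\vz(s,x,z) \le \vz(0,x,z)$ for all $s \ge 0$; and (c) the linear envelope $\vz(0,x,z) \le G(z) + x$ quoted in the paragraph preceding Assumption \ref{assump_valuefunc}. Composing these yields $\vz(s,x,z) \le G(z) + x$ for every $(s,x,z) \in [0,T] \times \RR^+ \times \RR$, and evaluating along the trajectory $(X_s^\pz, Z_s)$ produces the stated upper bound. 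For the lower bound, $\vz \ge 0$ because $U$ is strictly increasing with $U(0+) = 0$ (Assumption \ref{assump_U}\eqref{assump_Uregularity}--\eqref{assump_Ubddbelow}), and $X_s^\pz \ge 0$ because $\pz$ belongs to the admissible class $\MCA^\delta$ in \eqref{def_A}.

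From here the conclusion is routine. Using $(a+b)^2 \le 2a^2 + 2b^2$ and integrating over $[t,T]$ gives
\[
\EE_{(t,x,z)}\!\left[\int_t^T \!\left(\vz(s,X_s^\pz,Z_s)\right)^2 \ud s\right] \le 2\,\EE_{(t,x,z)}\!\left[\int_t^T G^2(Z_s)\, \ud s\right] + 2\,\EE_{(t,x,z)}\!\left[\int_t^T \left(X_s^\pz\right)^2 \ud s\right],
\]
and each term on the right is controlled uniformly in $\delta$ by Assumption \ref{assump_valuefunc}\eqref{assump_valuefuncG}--\eqref{assump_valuefuncX}, yielding $C_3(T,x,z) := 2 C_1(T,z) + 2 C_2(T,x,z)$.

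The only mild bookkeeping is that the assumed moment bounds are stated for initial time $0$ whereas we condition on $(X_t^\pz, Z_t) = (x,z)$; since the augmented process $(s, X_s^\pz, Z_s)$ is Markovian and the coefficients in \eqref{eq_Xt} and \eqref{eq_Zt} are unchanged by a shift of the initial time, the same estimates apply on the subinterval $[t,T]$ with constants depending only on $T$ and $(x,z)$. There is no genuine obstacle here; the substance of the lemma is the pointwise domination $\vz \le G(z) + x$, and the rest is just collecting constants.
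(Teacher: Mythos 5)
Your proof is correct and follows essentially the same route as the paper: the pointwise chain $0 \le \vz(s,x,z) \le \vz(0,x,z) \le G(z) + x$ (monotonicity in $t$, then the linear envelope from concavity), followed by $(a+b)^2 \le 2a^2 + 2b^2$ and the assumed $L^2$ bounds on $G(Z_\cdot)$ and $X_\cdot^\pz$, with $C_3 = 2(C_1 + C_2)$. You are a bit more explicit than the paper about the nonnegativity of $\vz$ (needed to pass from $\vz \le \vz(0,\cdot)$ to $\vz^2 \le \vz(0,\cdot)^2$) and about the bookkeeping involved in using the time-$0$ moment bounds at time $t$, but the substance is identical.
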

\begin{proof}
It follows by the straightforward computation:
\begin{align*}
\EE_{(t,x,z)}&\left[\int_t^T \left( \vz(s,X_s^\pz, Z_s)\right)^2 \ud s\right] \leq \EE_{(t,x,z)}\left[\int_t^T \left(\vz(0,X_s^\pz,Z_s)\right)^2 \ud s\right]\\
& \qquad \leq \EE_{(t,x,z)}\left[\int_t^T \left(G(Z_s) + X_s^\pz \right)^2 \ud s\right] \\
& \qquad \leq 2 \left( \EE_{(t,x,z)}\left[\int_t^T G^2(Z_s) \ud s\right] + \EE_{(t,x,z)}\left[\int_t^T \left(X_s^\pz \right)^2 \ud s\right] \right)\\
& \qquad \leq 2 \left( \EE_{(0,z)}\left[\int_0^T G^2(Z_s) \ud s\right] + \EE_{(0,x,z)}\left[\int_0^T \left(X_s^\pz \right)^2 \ud s\right] \right) \\
& \qquad = 2\left(C_1(T,z) + C_2(T,x,z) \right) := C_3(T,x,z),
\end{align*}
where we have successively used the monotonicity (decreasing property) of $\vz$ in $t$, the concavity of $\vz$ in $x$ and Assumptions~\ref{assump_valuefunc}~\eqref{assump_valuefuncG}-\eqref{assump_valuefuncX}.
\end{proof}

\section{First Order Approximation of the Value Function $\Vzl$}\label{sec_accuracy}

In this section, we analyze by perturbation methods the value function associated to the Merton strategy $\pz(t,x,z) = -\frac{\lambda(z)}{\sigma(z)}\frac{\vz_x}{\vz_{xx}}$. Assume $\pz$ is admissible, and recall that $X_t^\pz$ and $Z_t$ follows \eqref{eq_Xt} and \eqref{eq_Zt}, 
then, one defines the value function as the expected utility of terminal wealth:
\begin{equation}\label{def_Vzl}
\Vzl(t,x,z) = \EE\left\{U(X_T^\pz) | X_t^\pz = x, Z_t = z\right\},
\end{equation}
where $U(\cdot)$ is a general utility function satisfying Assumption \ref{assump_U}. The value function satisfies the linear PDE 
\begin{align}\label{eq_Vzl}
&\Vzl_t + \delta\MCM\Vzl + \frac{1}{2}\sigma^2(z)\left(\pz\right)^2\Vzl_{xx} + \pz\left(\mu(z)\Vzl_x + \sqrt{\delta}\rho  g(z)\sigma(z)\Vzl_{xz}\right) = 0, \\
&\Vzl(T,x,z) = U(x).\nonumber
\end{align}

Our main result of this section is:

\begin{theo}\label{Thm_one}
Under assumptions \ref{assump_U} and \ref{assump_valuefunc}, the residual function $E(t,x,z)$ defined by
$$E(t,x,z):= \Vzl(t,x,z) - \vz(t,x,z)-\sqrt\delta\vo(t,x,z),$$
where $\vz$ and $\vo$ are identified by \eqref{eq_vz} and \eqref{eq_vo}, is of order $\delta$.  In other words, $\forall (t,x,z) \in [0,T]\times\RR^+ \times \RR$, there exists a constant $C$, such that  $\abs{E(t,x,z)} \leq C\delta$, where $C$ may depend on $(t,x,z)$ but not on $\delta$.
\end{theo}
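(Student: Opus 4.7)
The plan is to identify the PDE satisfied by the residual $E$, convert it to a Feynman--Kac representation in terms of source terms that are explicitly of order $\delta$, and then control those source terms via the structural assumptions on the utility function and the model coefficients. Substituting $\pz = \frac{\lambda(z)}{\sigma(z)}R(t,x;\lambda(z))$ into the linear equation \eqref{eq_Vzl} rewrites it, using the notation \eqref{def_dk}--\eqref{def_ltx}, as
$$\bigl(\Ltx(\lambda(z)) + \sqrt{\delta}\,\rho\,g(z)\lambda(z)\,D_1\partial_z + \delta\,\MCM\bigr)\Vzl = 0, \qquad \Vzl(T,x,z)=U(x).$$
Plugging $\Vzl = \vz+\sqrt{\delta}\vo+E$ and invoking the defining PDEs \eqref{eq_vz}--\eqref{eq_vo}, the terms of order $1$ and $\sqrt{\delta}$ cancel by construction, and one is left with
$$\bigl(\Ltx(\lambda(z)) + \sqrt{\delta}\,\rho\,g(z)\lambda(z)\,D_1\partial_z + \delta\,\MCM\bigr)E = -\delta F_1 - \delta^{3/2}F_2, \quad E(T,x,z)=0,$$
with explicit source terms $F_1 = \MCM\vz + \rho\,g(z)\lambda(z)\,D_1\vo_z$ and $F_2 = \MCM\vo$.

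The operator on the left-hand side is exactly $\partial_t$ plus the infinitesimal generator of the Markov process $(X_t^\pz,Z_t)$ from \eqref{eq_Xt}. Because the terminal data vanishes, the inhomogeneous Feynman--Kac formula gives
$$E(t,x,z) = \EE_{(t,x,z)}\!\left[\int_t^T \bigl(\delta F_1(s,X_s^\pz,Z_s) + \delta^{3/2}F_2(s,X_s^\pz,Z_s)\bigr)\ud s\right],$$
so it suffices to bound $\EE_{(t,x,z)}\!\int_t^T(|F_1|+|F_2|)(s,X_s^\pz,Z_s)\ud s$ by a constant depending on $(T,x,z)$ but independent of $\delta\in(0,1]$; a factor $\sqrt{\delta}\le 1$ can then be absorbed into the constant to obtain the claimed $|E|\leq C\delta$.

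For this bound, I would use the ``Vega--Gamma'' identity \eqref{eq_vzgamma}, iterated in $z$, to reexpress every $z$-derivative of $\vz$ (and, via \eqref{eq_vzandvo}, of $\vo$) as an $x$-derivative of $\vz$ multiplied by polynomially growing functions built from $\lambda(z),\lambda'(z),\lambda''(z),g(z),c(z)$ and their derivatives. Each $x$-derivative is then encoded in the $D_k$ operators, i.e.\ powers of $R(t,x;\lambda(z))$ times derivatives of $\vz$. Assumption~\ref{assump_U}\eqref{assump_Urisktolerance} together with Lemma~\ref{lem_U} gives $R(t,x;\lambda)\leq Cx$ and bounds on $\partial_x^{(i)}R^i$, so every factor in the expanded source terms is at most linearly growing in $x$. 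Combined with the linear upper bound $\vz(t,x,z)\leq G(z)+x$ used in Lemma~\ref{lem_unibdd} and the polynomial growth of the coefficients from Assumption~\ref{assump_valuefunc}\eqref{assump_valuefuncSZ}, both $F_1$ and $F_2$ are dominated pointwise by a sum of a linearly growing function of $x$ and a polynomially growing function of $z$; finiteness of the expectation then follows from Assumption~\ref{assump_valuefunc}\eqref{assump_valuefuncZmoment}--\eqref{assump_valuefuncX} together with the reasoning of Lemma~\ref{lem_unibdd}. The main obstacle is the bookkeeping: $\MCM\vz = \tfrac{1}{2}g^2\vz_{zz}+c\vz_z$ introduces $\vz_{zz}$, whose iterated Vega--Gamma expansion produces terms involving $D_4\vz$, while $\MCM\vo$ pushes the order up to $D_5\vz$, which is precisely why Assumption~\ref{assump_U}\eqref{assump_Urisktolerance} is imposed up to $i=5$. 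A careful tracking of these terms, using Proposition~\ref{prop_ltxunique} to legitimate the classical status of $\vz$ and $\vo$, closes the argument.
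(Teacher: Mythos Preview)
Your proposal is correct and follows essentially the same approach as the paper: derive the linear PDE for $E$, apply Feynman--Kac to represent $E$ as an expectation of the source terms $\MCM\vz$, $\MCM\vo$, and $\rho g\lambda D_1\vo_z$ along $(X^\pz,Z)$, and then bound these uniformly in $\delta$ via the Vega--Gamma relation, the risk-tolerance estimates of Proposition~\ref{prop_ssh}, the moment assumptions on $Z$, and the $L^2$ control of Lemma~\ref{lem_unibdd}. The paper organizes the derivative bounds you sketch into Proposition~\ref{prop_vzzderivative} (each $z$-derivative of $\vz$ and $R$ is bounded by a polynomially growing function of $z$ times $\vz$, $\vz_x$, $|\vz_{xx}|$, or $R$) and then separates the $Z$-dependence from $\vz$ by Cauchy--Schwarz rather than reducing directly to linear growth in $x$, but the substance is the same.
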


We recall that a function $f^\delta(t,x,z)$ is of order $\delta^k$, denoted by $f^\delta(t,x,z) \sim \MCO(\delta^k)$, if $\forall (t,x,z) \in [0,T]\times \RR^+\times \RR$, there exists $C$ such that $\abs{f^\delta(t,x,z)}\leq C\delta^k$, where $C$ may depned on $(t,x,z)$, but not on $\delta$. Similarly, we denote $f^\delta(t,x,z) \sim o(\delta^k)$, if $\limsup_{\delta\to0} |f^\delta(t,x,z)| / \delta^k = 0$.

\done{
\begin{cor}\label{cor_optimality}
In the case of power utility $U(x) = \frac{x^\gamma}{\gamma}$, $\pz$ is asymptotically optimal in $\MCA^\delta(t,x,z)$ up to order $\sqrt\delta$. 
\end{cor}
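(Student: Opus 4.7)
The plan is to combine Theorem~\ref{Thm_one}, which handles the value function $\Vzl$ associated with the fixed Merton strategy $\pz$, with an independent rigorous expansion of the true supremum $\Vz$. For the power utility $U(x)=x^\gamma/\gamma$, the fully nonlinear HJB equation \eqref{eq_Vznonlinear} can be linearized via a distortion transformation, so a separate asymptotic analysis of $\Vz$ itself is feasible. Once both expansions agree up to and including the $\sqrt\delta$ correction, the difference $\Vz-\Vzl$ is automatically $O(\delta)$.

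First, I would exploit the homotheticity of power utility by writing $\Vz(t,x,z)=\frac{x^\gamma}{\gamma}\,\phi^\delta(t,z)$. Plugging this ansatz into \eqref{eq_Vznonlinear} reduces it to a scalar nonlinear PDE for $\phi^\delta(t,z)$ with terminal condition $\phi^\delta(T,z)=1$. The standard distortion $\phi^\delta=(\psi^\delta)^{q}$ with $q=(1-\gamma)/(1-\gamma+\rho^2\gamma)$ (as in \cite[Section 6.4]{FoSiZa:13}) linearizes this equation, producing a linear parabolic PDE for $\psi^\delta(t,z)$ whose Feynman--Kac representation is well defined under Assumption~\ref{assump_valuefunc}. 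This gives $\Vz$ as a smooth function of $(t,x,z;\delta)$ with the explicit probabilistic formula for $\psi^\delta$.

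Next, I would carry out a regular perturbation expansion of $\psi^\delta$ in powers of $\sqrt\delta$, producing $\psi^\delta=\psi^{(0)}+\sqrt\delta\,\psi^{(1)}+\delta\,\psi^{(2)}+\cdots$, and hence, after unwinding the distortion, $\Vz=\tilde v^{(0)}+\sqrt\delta\,\tilde v^{(1)}+\delta\,R^\delta$. By uniqueness of the leading-order and first-order equations in Section~\ref{sec_heuristic}, $\tilde v^{(0)}\equiv\vz$ and $\tilde v^{(1)}\equiv\vo$, so the formal expansion of $\Vz$ and $\Vzl$ agree through order $\sqrt\delta$. The residual $R^\delta$ can be bounded uniformly in $\delta$ by the same kind of estimate used in Theorem~\ref{Thm_one}: the linear structure of the PDE for $\psi^\delta$ together with the polynomial-growth hypotheses on $\lambda,g,c$ and the moment bounds in Assumption~\ref{assump_valuefunc}~\eqref{assump_valuefuncZmoment} yield
\[
\Vz(t,x,z)=\vz(t,x,z)+\sqrt\delta\,\vo(t,x,z)+O(\delta).
\]

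Finally, I would close the argument as follows. Since $\pz\in\MCA^\delta(t,x,z)$, we have $\Vzl\leq\Vz$. Combining the expansion of $\Vz$ above with Theorem~\ref{Thm_one} gives
\[
0\leq\Vz(t,x,z)-\Vzl(t,x,z)=O(\delta),
\]
which is exactly the statement that $\pz$ is asymptotically optimal in $\MCA^\delta(t,x,z)$ up to order $\sqrt\delta$. The main obstacle is step two: one must verify that the linearization through the distortion yields a classical solution with finite Feynman--Kac representation under our growth and moment assumptions, and that the $O(\delta)$ estimate on the residual of $\psi^\delta$ survives the (smooth but nonlinear) inverse distortion $\phi^\delta=(\psi^\delta)^{q}$ with constants that do not depend on $\delta$. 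This requires a uniform lower bound on $\psi^\delta$ away from zero, which follows from positivity of the Feynman--Kac integrand together with the boundedness from below of the exponential functional of $\lambda(Z_s)^2$ on $[t,T]$.
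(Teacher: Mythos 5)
Your proposal is correct and follows essentially the same route as the paper: combine the first-order expansion of $\Vz$ (valid for power utility) with Theorem~\ref{Thm_one}'s expansion of $\Vzl$, and use $\Vzl \leq \Vz$ to conclude $0 \leq \Vz - \Vzl = \MCO(\delta)$. The only difference is that the paper simply cites \cite[Corollary 6.8]{FoSiZa:13} for the expansion $\Vz = \vz + \sqrt{\delta}\,\vo + \MCO(\delta)$, whereas you re-sketch its derivation via the homothetic reduction and distortion transformation; that derivation is precisely the content of the cited result, so the mathematics is the same.
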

\begin{proof}
\cite[Corollary 6.8]{FoSiZa:13} proved that 
\begin{equation}
\Vz = \vz + \sqrt{\delta}\vo + \MCO(\delta),
\end{equation}
where $\Vz$ is defined in \eqref{def_Vz}. By Theorem \ref{Thm_one}, $\Vz$ and $\Vzl$ admit the same first order approximation $\vz + \sqrt{\delta}\vo$. Therefore we obtain that $\pz$ is asymptotically optimal in $\MCA^\delta(t,x,z)$ up to order $\sqrt\delta$. 

\end{proof}
}
\subsection{Estimate of Risk Tolerance Function\, $R(t,x;\lambda(z))$ and Leading Order Term\, $\vz$}

In this subsection, we state several properties of the risk tolerance function $R(t,x;\lambda(z))$, which will be needed in the proof of Theorem \ref{Thm_one}. Some of the proofs involve lengthy calculations which we put in the appendix.
\begin{prop}\label{prop_H}
Let $I: \RR^+ \to \RR^+$ be the inverse of marginal utility, and assume it satisfies the growth condition in Assumption \ref{assump_U} \eqref{assump_Ugrowth}. Also, define $H: \RR \times [0,T] \times \RR \to \RR^+$ by
\begin{equation}\label{eq_uandH}
M_x(t,H(x,t,\lambda(z));\lambda(z)) = \exp\{-x-\frac{1}{2}\lambda^2(z)(T-t)\},
\end{equation}
where $M(t,x;\lambda(z))$ is the Merton value function. Then:
\begin{enumerate}[(i)]
\item For each $\lambda(z)$, $H(x,t,\lambda(z))$ is the unique solution to the heat equation,
\begin{equation}\label{eq_H}
H_t + \frac{1}{2}\lambda^2(z)H_{xx} = 0,
\end{equation}
with the terminal condition $H(x,T,\lambda(z)) = I(e^{-x})$.

\item Moreover, for each $t \in [0,T]$ and $\lambda(z) \in \RR$, $H(x,t,\lambda(z))$ is strictly increasing and of full range,
\begin{equation}\label{eq_Hfullrange}
\lim_{x\to-\infty}H(x,t,\lambda(z)) = 0 \quad \text{and} \quad \lim_{x\to\infty} H(x,t,\lambda(z)) = \infty.
\end{equation}

\item Define the inverse function $H^{-1}(y,t,\lambda(z)): \RR^+ \times [0,T]\times \RR\to \RR$:
\begin{equation*}
H(H^{-1}(y,t,\lambda(z)),t,\lambda(z)) = y,
\end{equation*}
then, for $(t,x,z) \in [0,T]\times\RR^+\times\RR$, the risk tolerance function $R(t,x;\lambda(z))$ is given by
\begin{equation}\label{eq_rzH}
R(t,x;\lambda(z)) = H_x\left(H^{(-1)}(x,t,\lambda(z)),t,\lambda(z)\right).
\end{equation}
\end{enumerate}
\end{prop}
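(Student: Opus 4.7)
The plan is to recognize $H$ as the inverse (in its first argument) of the transformation $\xi = -\log M_x(t,x;\lambda) + \frac{1}{2}\lambda^2(T-t)$ already used in the proof of Proposition~\ref{prop_ltxunique}. Under that transformation the nonlinear Merton PDE becomes a linear backward heat equation, so $H$ itself should satisfy a heat equation, and both (ii) and (iii) should follow from a single implicit differentiation of the defining identity.

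First I would verify the terminal condition. Setting $t=T$ in $M_x(t,H(x,t,\lambda);\lambda) = e^{-x-\frac{1}{2}\lambda^2(T-t)}$ and using $M_x(T,\cdot;\lambda) = U'$ gives $U'(H(x,T,\lambda)) = e^{-x}$, hence $H(x,T,\lambda) = I(e^{-x})$. Next, differentiating the defining identity once in $x$ produces
\[
M_{xx}(t,H;\lambda)\,H_x = -\,e^{-x-\frac{1}{2}\lambda^2(T-t)} = -\,M_x(t,H;\lambda),
\]
so $H_x = -M_x(t,H;\lambda)/M_{xx}(t,H;\lambda) = R(t,H;\lambda) > 0$. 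This already gives the strict monotonicity in (ii) and, after plugging in $y = H^{-1}(x,t,\lambda)$, the identity (iii).

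For the heat equation in (i), I would compute $H_{xx}$ by one more $x$-differentiation (yielding $H_{xx} = -H_x - H_x^2\,M_{xxx}(t,H;\lambda)/M_{xx}(t,H;\lambda)$), and separately compute $H_t$ by differentiating the defining identity in $t$, using the Merton equation $M_t = \frac{1}{2}\lambda^2 M_x^2/M_{xx}$ and its $x$-derivative $M_{tx} = \frac{1}{2}\lambda^2\bigl(2M_x - M_x^2 M_{xxx}/M_{xx}^2\bigr)$ to eliminate $M_{xt}(t,H;\lambda)$. The result is $H_t = \frac{1}{2}\lambda^2\bigl(H_x + H_x^2\,M_{xxx}(t,H;\lambda)/M_{xx}(t,H;\lambda)\bigr)$. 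The $H_x$ and $M_{xxx}$ contributions then cancel against $\frac{1}{2}\lambda^2 H_{xx}$, producing $H_t + \frac{1}{2}\lambda^2 H_{xx} = 0$. For uniqueness, the bound $I(y) \leq \alpha + \kappa y^{-\alpha}$ from Assumption~\ref{assump_U}\eqref{assump_Ugrowth} gives $0 \leq H(x,T,\lambda) \leq \alpha + \kappa e^{\alpha x}$, placing the terminal datum in a Tychonoff-type exponential growth class in which the classical uniqueness theorem for the backward heat equation applies. The full-range statement in (ii) follows from the Inada conditions: as $x\to\mp\infty$ the right-hand side of the defining identity tends to $\pm\infty$, forcing $M_x(t,H;\lambda)$ to do the same, and the Inada behaviour of $M_x(t,\cdot;\lambda)$ inherited from $U'$ via Proposition~\ref{prop_Merton} then forces $H\to 0$ and $H\to \infty$ respectively.

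The main obstacle I expect is the bookkeeping in the cancellation above: one has to track several compositions of $M_{xx}$, $M_{xxx}$, and $M_{xt}$ evaluated at $H(x,t,\lambda)$ and verify that, after invoking the Merton equation, all the nonlinear pieces collapse. A secondary technical point is the transfer of Inada behaviour from $U'$ to $M_x(t,\cdot;\lambda)$ at the wealth endpoints, which follows either from the standard stochastic representation $M_x(t,x;\lambda) = \EE[U'(X_T^{\pi^\star}) \mid X_t = x]$ together with dominated convergence, or from the convex-duality formulation of the Merton problem.
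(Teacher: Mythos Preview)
Your argument is correct and, in fact, considerably more detailed than what the paper offers. The paper does not prove this proposition at all: it simply cites \cite[Propositions~4 and~6]{KaZa:14} for the constant-$\lambda$ case and observes that, since $z$ enters only through the parameter $\lambda(z)$, the extension is immediate. Your approach instead supplies a self-contained derivation: you recover the terminal condition from $M_x(T,\cdot)=U'$, obtain $H_x=R(t,H;\lambda)$ by one implicit differentiation (which immediately yields (ii) and (iii)), and verify the heat equation by a second differentiation together with the $x$-derivative of the Merton PDE. The cancellation you describe does go through exactly as you outline, and the growth bound $I(e^{-x})\leq \alpha+\kappa e^{\alpha x}$ indeed places the terminal datum in the Tychonoff uniqueness class. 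What your route buys is independence from the external reference; what the paper's route buys is brevity.

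One small slip: in your full-range argument you write that the right-hand side of the defining identity ``tends to $\pm\infty$'' as $x\to\mp\infty$. In fact $e^{-x-\frac{1}{2}\lambda^2(T-t)}$ is always positive, tending to $+\infty$ as $x\to-\infty$ and to $0$ as $x\to+\infty$. The conclusion is unaffected: $M_x(t,H;\lambda)\to+\infty$ forces $H\to 0^+$, and $M_x(t,H;\lambda)\to 0$ forces $H\to+\infty$, using the Inada-type endpoint behaviour of $M_x(t,\cdot;\lambda)$ inherited from $U'$.
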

\done{
\begin{proof}
The	results under constant $\lambda$ with multiple assets are presented \cite[Propositions 4 and 6]{KaZa:14}. It is straightforward to generalize the results to $\lambda(z)$, as $z$ is a parameter. Therefore, here we omit the proof.
\end{proof}
}
%
%
%

\begin{prop}\label{prop_mono}
Suppose the risk tolerance $R(x) = -\frac{U'(x)}{U''(x)}$ is strictly increasing for all $x$ in $[0,\infty)$ (this is part of Assumption \ref{assump_U} \eqref{assump_Urisktolerance}), then, for each $t \in [0,T)$ and $\lambda(z) \in \RR$, the risk tolerance function $R(t,x;\lambda(z))$ is strictly increasing in the wealth variable $x$.
\end{prop}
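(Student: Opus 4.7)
The plan is to use the representation of the risk-tolerance function from Proposition~\ref{prop_H}(iii), namely
$$R(t,x;\lambda(z)) = H_x\bigl(H^{-1}(x,t,\lambda(z)),t,\lambda(z)\bigr),$$
and reduce the claim to showing that $\eta(y,t,\lambda):=H_x(y,t,\lambda)$ is strictly increasing in $y$ for each fixed $t<T$ and $\lambda=\lambda(z)$. Since $H^{-1}(\cdot,t,\lambda)$ is itself strictly increasing in $x$ by Proposition~\ref{prop_H}(ii), strict monotonicity of $\eta$ in its first argument translates immediately into strict monotonicity of $R(t,\cdot;\lambda)$.

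To analyze $\eta$, I would differentiate the heat equation \eqref{eq_H} in $y$ to obtain that $\eta$ itself satisfies the backward heat equation
$$\eta_t + \tfrac{1}{2}\lambda^2\eta_{yy} = 0.$$
The terminal condition is computed by differentiating $H(y,T,\lambda) = I(e^{-y})$: using $I'=1/(U''\circ I)$ together with $U'(H(y,T,\lambda))=e^{-y}$ gives
$$\eta(y,T,\lambda) \;=\; -\frac{e^{-y}}{U''(H(y,T,\lambda))} \;=\; -\frac{U'(H(y,T,\lambda))}{U''(H(y,T,\lambda))} \;=\; R(H(y,T,\lambda)).$$
By the hypothesis on the terminal risk-tolerance $R(x)$ (strictly increasing on $[0,\infty)$) combined with Proposition~\ref{prop_H}(ii) ($y\mapsto H(y,T,\lambda)$ strictly increasing of full range $(0,\infty)$), the composition $\eta(\cdot,T,\lambda)$ is strictly increasing on $\RR$.

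Finally, I need to propagate this monotonicity backwards in time. The cleanest route is to differentiate once more: $\eta_y$ also satisfies the backward heat equation, and its terminal value
$$\eta_y(y,T,\lambda) \;=\; R'(H(y,T,\lambda))\,H_y(y,T,\lambda)$$
is strictly positive on $\RR$, since both factors are strictly positive (the first by assumption on $R$, the second by Proposition~\ref{prop_H}(ii)). After the change of variable $\tau = T-t$, the equation for $\eta_y$ becomes a standard forward heat equation on $\RR$, and the Gaussian-kernel representation
$$\eta_y(y,t,\lambda) \;=\; \int_\RR \eta_y(w,T,\lambda)\,\frac{1}{\sqrt{2\pi\lambda^2(T-t)}}\exp\!\left(-\frac{(y-w)^2}{2\lambda^2(T-t)}\right)\ud w$$
yields $\eta_y(y,t,\lambda)>0$ for every $t<T$ and $y\in\RR$, which is precisely the desired strict monotonicity of $\eta(\cdot,t,\lambda)$, hence of $R(t,\cdot;\lambda(z))$.

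The main obstacle I anticipate is justifying the heat-kernel representation, i.e.\ checking that $\eta_y(\cdot,T,\lambda)$ lies in an admissible growth class (at worst sub-Gaussian). This amounts to translating the polynomial growth of $I$ provided by Assumption~\ref{assump_U}\eqref{assump_Ugrowth} into growth bounds for $R\circ H$ and its $y$-derivative at $t=T$, which is a direct but slightly technical chain of estimates; alternatively one can bypass the integral representation by invoking the strong minimum principle for parabolic equations on $\RR$ within a suitable class, applied to $\eta_y$.
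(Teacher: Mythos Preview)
Your proposal is correct and follows essentially the same approach as the paper, which does not give its own proof but defers to \cite[Proposition 9]{KaZa:14}; that reference uses exactly the representation $R(t,x;\lambda)=H_x(H^{-1}(x,t,\lambda),t,\lambda)$ together with the heat equation for $H$ to propagate the terminal monotonicity backward in time. Your growth concern is legitimate but easily resolved under the standing assumptions: $R'\leq C$ (Lemma~\ref{lem_U}) and $R(x)\leq Cx$ give $\eta_y(\cdot,T,\lambda)=R'(H)H_y\leq C\,R(H)\leq C^2 H$, and the polynomial bound on $I$ in Assumption~\ref{assump_U}\eqref{assump_Ugrowth} makes $H(\cdot,T,\lambda)=I(e^{-\cdot})$ at most exponentially growing, which is well within the admissible class for the Gaussian kernel.
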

\done{
\begin{proof}
We skip the proof by the same reasoning as in Proposition \ref{prop_H}, and refer to \cite[Proposition 9]{KaZa:14}.
\end{proof}
}
%

\begin{prop}\label{prop_ssh}
Under Assumption \ref{assump_U}, the risk tolerance function $R(t,x,\lambda(z))$ satisfies: $\forall \,0 \leq j \leq 4$, $\exists K_j >0$, such that $\forall (t,x,z) \in [0,T)\times \RR^+ \times \RR$,
\begin{equation}\label{eq_prop_Rbounds}
\abs{R^j(t,x;\lambda(z))\left(\partial_x^{(j+1)}R(t,x;\lambda(z))\right)} \leq K_j.
\end{equation}
Or equivalently, $\forall 1 \leq j \leq 5$, there exists $\widetilde K_j>0$, such that $\forall (t,x,z) \in [0,T)\times \RR^+ \times \RR$,
\begin{equation*}
\abs{\partial_x^{(j)} R^j(t,x;\lambda(z))} \leq \widetilde K_j.
\end{equation*}
Moreover, for $(t,x,z) \in [0,T)\times\RR^+\times \RR$,
\begin{align}\label{eq_prop_Rbound}
R(t,x;\lambda(z)) \leq K_0x.
\end{align}
\end{prop}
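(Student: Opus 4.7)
The plan is to use Proposition~\ref{prop_H} to transfer the problem to the heat equation and propagate bounds from the terminal time backward in time via the Gaussian heat kernel representation. Setting $y = H^{-1}(x,t,\lambda(z))$, we have $R(t,x;\lambda(z)) = H_x(y,t,\lambda(z))$ and $\partial_x = H_x^{-1}\,\partial_y$ by the chain rule (here $H_x$ and its higher derivatives denote differentiation in $H$'s first argument). As a first, purely algebraic step, I would show by induction that for $0 \leq j \leq 4$, in the $y$-coordinates,
\begin{equation*}
\partial_x^{(j+1)} R^{j+1} \;=\; c_j\, r_{j+2} + Q_j(r_2,\dots,r_{j+1}), \qquad r_k := \frac{\partial_y^{k-1} H_x}{H_x},
\end{equation*}
where $Q_j$ is a polynomial and $c_j$ is a nonzero constant (and an analogous polynomial expression holds for $R^j\,\partial_x^{(j+1)}R$). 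The first three cases $\partial_x R = r_2$, $\partial_x^2 R^2 = 2\, r_3$, and $\partial_x^3 R^3 = 3\, r_4 + 6 r_2 r_3 - 3 r_2^3$ illustrate the pattern and suggest $c_j = j+1$.

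Second, I would establish the terminal-time bound $|r_k(y,T,\lambda(z))| \leq C_k$, uniformly in $(y,z)$, for $2 \leq k \leq 6$, by induction on $k$. The base case $k=2$ reduces to $|r_2(y,T,\lambda(z))| = |R'(x)| \leq \sqrt{K/2}$ from Lemma~\ref{lem_U}. For the inductive step, Assumption~\ref{assump_U}\eqref{assump_Urisktolerance} gives $|\partial_x^{(k-1)} R^{k-1}(T,\cdot)| \leq K$, which by step~1 reads
\begin{equation*}
\bigl|c_{k-2}\, r_k(y,T,\lambda(z)) + Q_{k-2}\bigl(r_2,\dots,r_{k-1}\bigr)(y,T,\lambda(z))\bigr| \leq K;
\end{equation*}
the nonvanishing of $c_{k-2}$ and the bounds on $r_2,\dots,r_{k-1}$ already established allow me to solve for $r_k$ and conclude.

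Third, since each $\partial_y^{k-1} H_x$ solves the backward heat equation $\partial_t u + \tfrac{1}{2}\lambda(z)^2 \partial_y^2 u = 0$, it admits the Gaussian representation
\begin{equation*}
\partial_y^{k-1} H_x(y,t,\lambda(z)) \;=\; \EE\bigl[\partial_y^{k-1} H_x\bigl(y + \lambda(z)\sqrt{T-t}\,Z,\, T,\, \lambda(z)\bigr)\bigr], \quad Z\sim N(0,1).
\end{equation*}
Combining with the terminal bound of step~2 and the positivity of $H_x$ then gives $|\partial_y^{k-1} H_x(y,t,\lambda(z))| \leq C_k\, H_x(y,t,\lambda(z))$, so $|r_k|\leq C_k$ uniformly on $[0,T)\times\RR\times\RR$. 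Plugging into the polynomial of step~1 yields $|\partial_x^{(j+1)} R^{j+1}|\leq \widetilde K_{j+1}$, and the equivalent form $|R^j\,\partial_x^{(j+1)} R|\leq K_j$ follows by the same algebraic identities (cf.\ Remark~\ref{rem_U}). For the linear bound $R(t,x;\lambda(z)) \leq K_0 x$, I would first show $R(t,0^+;\lambda(z)) = \lim_{y\to -\infty} H_x(y,t,\lambda(z)) = 0$ by dominated convergence in the Gaussian representation (the pointwise limit uses $H_x(y,T,\lambda) = R(I(e^{-y}))\to R(0) = 0$ as $y\to -\infty$, and the dominating function comes from the polynomial growth of $I$ in Assumption~\ref{assump_U}\eqref{assump_Ugrowth}); then integrating the $j=0$ bound $|R_x|\leq K_0$ from $0$ to $x$ gives the result.

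The main obstacle is the structural claim of step~1, namely that the leading coefficient $c_j$ is nonzero for $j = 0,\dots,4$. This is a combinatorial bookkeeping exercise tracking repeated applications of $\partial_x = H_x^{-1}\,\partial_y$ and using the clean identity $\partial_y r_k = r_{k+1} - r_2 r_k$, which guarantees that polynomials in the $r$'s are preserved under $R\,\partial_x$. In contrast, the analytic content of the proof is concentrated in the short propagation argument of step~3.
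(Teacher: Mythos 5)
Your proposal is correct and takes essentially the same route as the paper: transfer to the function $H$ via Proposition~\ref{prop_H}, reduce the quantities $\partial_x^{(j+1)}R^{j+1}$ to ratios $\partial_y^{k}H_x/H_x$, bound these ratios at $t=T$ using Assumption~\ref{assump_U}\eqref{assump_Urisktolerance}, and propagate the bound backward in time using the heat equation. The paper states the needed algebraic identities explicitly for each $j\le 4$ and invokes a comparison principle for propagation; you organize the bookkeeping through the variables $r_k$ with the reduction formula $\partial_y r_k = r_{k+1}-r_2 r_k$, and propagate via the explicit Gaussian kernel representation. These are equivalent (the Gaussian representation is precisely what underlies the comparison argument for the heat equation), so this is a cosmetic difference rather than a different proof. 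One small thing you do more carefully than the paper: you explicitly justify $R(t,0^+;\lambda(z))=0$ (via $H_x(y,T,\lambda)\to 0$ as $y\to-\infty$ together with dominated convergence), which is the fact needed for ``integrating $R_x\le K_0$'' to yield the linear bound \eqref{eq_prop_Rbound} with no constant term; the paper leaves this implicit.
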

\done{\begin{proof}
The proof of Proposition~\ref{prop_ssh} for $0 \leq j \leq 4$ is given in Appendix \ref{app_ssh}. Note that results for $j=0,1$ follows from a generalization of \cite[Proposition 14]{KaZa:14}.
\end{proof}}

\begin{prop}\label{prop_rz}
The risk tolerance function $R(t,x;\lambda(z))$ satisfies the relation:
\begin{equation}\label{eq_rzgamma}
R_\lambda = (T-t)\lambda(z)R^2R_{xx}.
\end{equation}
\begin{proof}
Differentiating  \eqref{eq_vzgamma} with respect to $x$ gives:
$$
\vz_{xz} = (T-t)\lambda\lambda'(R_x\vz_x + R\vz_{xx})\quad \mbox{and}\quad
\vz_{xxz} = (T-t)\lambda\lambda'(R_{xx}\vz_x + 2R_x\vz_{xx} + R\vz_{xxx}).
$$
The definition \eqref{def_risktolerance} of $R(t,x;\lambda(z))$ and equation \eqref{eq_vzandmerton} imply that
$R_x = -1 + \frac{\vz_x\vz_{xxx}}{\left(\vz_{xx}\right)^2}$.
Differentiating \eqref{def_risktolerance} with respect to $z$, and using the above three equations produces
\begin{align*}
R_z &= \frac{-\vz_{xx}\vz_{xz} + \vz_x\vz_{xxz}}{\left(\vz_{xx}\right)^2} \\
& = (T-t)\lambda\lambda'\frac{-R_x\vz_x -R\vz_{xx}}{\vz_{xx}} + (T-t)\lambda\lambda'\frac{\vz_x}{\left(\vz_{xx}\right)^2}(R_{xx}\vz_x + 2R_x\vz_{xx} + R\vz_{xxx})\\
& = (T-t)\lambda\lambda'\left(R_xR-R+R^2R_{xx}-2R_xR+R (R_x+1)\right) = (T-t)\lambda\lambda'R^2R_{xx}.
\end{align*}
Then, the chain-rule relation $R_z = R_\lambda \lambda'(z)$ implies \eqref{eq_rzgamma}.
\end{proof}
\end{prop}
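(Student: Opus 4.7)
The plan is to take the Vega--Gamma identity from Lemma \ref{lem_vegagamma}, namely $M_\lambda = -(T-t)\lambda R^2 M_{xx}$, and differentiate it to derive the claim $R_\lambda = (T-t)\lambda R^2 R_{xx}$. Because $R(t,x;\lambda(z))$ depends on $z$ solely through $\lambda(z)$, the chain rule gives $R_z = R_\lambda \lambda'(z)$ and similarly $\vz_z = M_\lambda \lambda'(z)$, so it is convenient to work with $z$-derivatives of $\vz$ and use the formulation already recorded in Section \ref{sec_heuristic}: by \eqref{eq_vzgamma}, $\vz_z = -(T-t)\lambda(z)\lambda'(z)\, D_2\vz = -(T-t)\lambda\lambda' R^2\vz_{xx}$. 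Dividing by $\lambda'(z)$ at the very end will recover the $\lambda$-derivative form asserted in the proposition.

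The central calculation is to evaluate $R_z$ directly from the definition $R = -\vz_x/\vz_{xx}$. By the quotient rule,
\begin{equation*}
R_z = -\frac{\vz_{xz}\,\vz_{xx} - \vz_x\,\vz_{xxz}}{\vz_{xx}^2}.
\end{equation*}
I would compute $\vz_{xz}$ and $\vz_{xxz}$ by differentiating \eqref{eq_vzgamma} once and twice in $x$. The resulting expressions involve $R$, $R_x$, $R_{xx}$ together with $\vz_x$, $\vz_{xx}$, $\vz_{xxx}$; the key auxiliary identity is $R_x = -1 + \vz_x\vz_{xxx}/\vz_{xx}^2$, obtained by differentiating $R = -\vz_x/\vz_{xx}$ once in $x$, which lets me trade every stray occurrence of $\vz_x\vz_{xxx}/\vz_{xx}^2$ for $R_x + 1$. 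After substituting and using the elementary replacements $\vz_x/\vz_{xx} = -R$ and $\vz_x^2/\vz_{xx}^2 = R^2$, the cross terms proportional to $R_x R$ cancel, the isolated $R$ terms cancel, and only the $R^2 R_{xx}$ contribution survives, producing $R_z = (T-t)\lambda\lambda'\, R^2 R_{xx}$, which is the claim up to the chain-rule factor.

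The only real obstacle I anticipate is algebraic bookkeeping: several products of powers of $R$ with derivatives of $\vz$ must be combined carefully, and it is easy to lose a sign or drop a term during the substitution. There is no functional-analytic or regularity issue, because the smoothness of $R(t,x;\lambda)$ in both $x$ and $\lambda$ needed here is already guaranteed by Proposition \ref{prop_Merton} and Proposition \ref{prop_ssh}. A conceptually cleaner alternative would use the heat-equation parametrization of Proposition \ref{prop_H}: since $H$ solves $H_t + \tfrac{1}{2}\lambda^2 H_{xx} = 0$ with $\lambda$ entering only through the diffusivity, direct differentiation in $\lambda$ yields a linear relation between $H_\lambda$ and $H_{xx}$, which one pulls back to the $(t,x)$-variables through $R = H_x\circ H^{-1}$. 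For a short self-contained argument I would keep the direct computation as the primary route, since it leans entirely on the Vega--Gamma identity the reader has just established.
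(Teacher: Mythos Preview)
Your proposal is correct and follows essentially the same argument as the paper: differentiate the Vega--Gamma relation \eqref{eq_vzgamma} in $x$ to obtain $\vz_{xz}$ and $\vz_{xxz}$, use the identity $R_x=-1+\vz_x\vz_{xxx}/\vz_{xx}^2$, compute $R_z$ from the quotient rule on $R=-\vz_x/\vz_{xx}$, simplify to $R_z=(T-t)\lambda\lambda' R^2R_{xx}$, and finish with the chain rule $R_z=R_\lambda\lambda'(z)$. The alternative heat-equation route you mention via Proposition \ref{prop_H} is not used in the paper but is a reasonable variant.
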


\begin{prop}\label{prop_vzzderivative}
Under Assumption \ref{assump_U} and Assumption \ref{assump_valuefunc}, there exist non-negative functions $d_{i,j}(z)$  and $\widetilde{d}_{i,j}(z)$ at most polynomially growing
such that the following inequalities are satisfied:
\begin{equation*}
\begin{array}{ll}
\abs{\vz_z(t,x,z)} \leq d_{01}(z)\vz(t,x,z),&\abs{\vz_{xz}(t,x,z)} \leq d_{11}(z)\vz_x(t,x,z), \\
\abs{\vz_{xxz}(t,x,z)} \leq d_{21}(z)\abs{\vz_{xx}(t,x,z)},&\abs{\rz_z(t,x;\lambda(z))} \leq \widetilde{d}_{01}(z)\rz(t,x;\lambda(z)),\\
\abs{\vz_{zz}(t,x,z)} \leq d_{02}(z)\vz(t,x,z),&\abs{\rz_{xz}(t,x;\lambda(z))} \leq \widetilde{d}_{11}(z),\\
\abs{\vz_{xzz}(t,x,z)} \leq d_{12}(z)\vz_x(t,x,z),&\abs{\rz_{zz}(t,x;\lambda(z))} \leq \widetilde{d}_{02}(z)\rz(t,x;\lambda(z)),\\
\abs{\vz_{xxzz}(t,x,z)} \leq d_{22}(z)\abs{\vz_{xx}(t,x,z)},&\abs{\vz_{xzzz}(t,x,z)} \leq d_{13}(z)\vz_x(t,x,z).
\end{array}
\end{equation*}
\end{prop}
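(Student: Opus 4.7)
The plan is to reduce every $z$-derivative appearing in the proposition to $x$-derivatives through two basic identities: the Vega--Gamma relation (Lemma~\ref{lem_vegagamma}), which, after using $R\vz_{xx}=-\vz_x$ and the chain rule $\vz_z = M_\lambda\lambda'(z)$, gives
$$\vz_z = -(T-t)\lambda(z)\lambda'(z)\,R^{2}\vz_{xx} = (T-t)\lambda(z)\lambda'(z)\,R\,\vz_x,$$
and the analogous identity for the risk tolerance (Proposition~\ref{prop_rz}),
$$R_z = (T-t)\lambda(z)\lambda'(z)\,R^{2}R_{xx}.$$
All the other bounds will follow by differentiating these two formulas in $x$ and in $z$, and estimating each resulting term by a combination of the Proposition~\ref{prop_ssh} inequalities $|R^{j}\partial_x^{j+1}R|\le K_j$ for $0\le j\le 4$ and $R\le K_0 x$, together with the polynomial growth of $\lambda,\lambda',\lambda'',\lambda'''$ granted in Assumption~\ref{assump_valuefunc}~\eqref{assump_valuefuncSZ}. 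The time factor $T-t$ is everywhere bounded by $T$ and will be absorbed into the $z$-dependent constants.

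The second key input is the pointwise inequality $x\vz_x(t,x,z)\le \vz(t,x,z)$. It follows from concavity of $\vz$ in $x$, the condition $U(0+)=0$ (Assumption~\ref{assump_U}~\eqref{assump_Ubddbelow}), and the fact that zero wealth is absorbing for $X^{\pz}$, which together give $\vz(t,0,z)=M(t,0;\lambda(z))=0$; the standard concavity inequality applied between $x$ and $0$ then yields the claim. This turns any product $R\,\vz_x$, via $R\le K_0 x$, directly into a bound by $K_0\,\vz$, which is what closes the estimates for $\vz_z$ and $\vz_{zz}$ in terms of $\vz$ itself. Similarly, $R\le K_0 x$ combined with $\vz_x>0$ is what permits the bounds on $\vz_{xz}, \vz_{xzz}, \vz_{xzzz}$ to be stated as multiples of $\vz_x$.

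With these two inputs in hand, the first-order estimates are immediate: $|\vz_z|\le T\,|\lambda\lambda'|\,K_0\,\vz$ and $|R_z|\le T\,|\lambda\lambda'|\,K_1\,R$. Differentiating $\vz_z=(T-t)\lambda\lambda' R\vz_x$ once in $x$ yields $\vz_{xz} = (T-t)\lambda\lambda'(R_x-1)\vz_x$, bounded by a constant (depending polynomially on $z$) times $\vz_x$; a second $x$-differentiation combined with $|RR_{xx}|\le K_1$ gives the bound on $\vz_{xxz}$. The $R$-side estimates $R_{xz}, R_{zz}$ follow by differentiating $R_z=(T-t)\lambda\lambda' R^{2}R_{xx}$ in $x$ and $z$ respectively; in $R_{zz}$ the cross term $RR_z R_{xx}$ is re-expanded via identity~(b) as $(T-t)\lambda\lambda' R^{3}R_{xx}^{2}$, which is bounded by $K_1^{2}R$ since $R^{2}R_{xx}^{2}\le K_1^{2}$.

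The main technical obstacle is the fourth-order terms $\vz_{xxzz}$ and $\vz_{xzzz}$, which require estimating $R R_{xxz}$ and $R^{2}R_{xxz}$. These will be obtained by differentiating $R_z=(T-t)\lambda\lambda' R^{2}R_{xx}$ twice in $x$; the resulting expression is a polynomial in $R_x$, $RR_{xx}$, $R^{2}R_{xxx}$, $R^{3}R_{xxxx}$, each controlled by $\widetilde K_j$ from Proposition~\ref{prop_ssh}, multiplied by an overall factor of $R$. Each additional $z$-differentiation also produces a derivative of the factor $\lambda\lambda'$, which is absorbed by the polynomial growth hypothesis. The book-keeping is routine but voluminous and can be relegated to the same appendix style as Appendix~\ref{app_ssh}. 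The conclusion is that every $z$-derivative in the list picks up only a finite product of polynomially-growing functions of $z$ and universal constants, so the bounding functions $d_{ij}(z)$ and $\widetilde d_{ij}(z)$ are indeed non-negative and at most polynomially growing.
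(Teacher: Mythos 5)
Your proposal is correct and follows essentially the same route as the paper, which states that the proof ``consists in successive differentiations starting with the `Vega--Gamma' relation, and a repeated use of the concavity of $\vz$ and of the results in Propositions~\ref{prop_ssh} and~\ref{prop_rz}'' and then omits the details. You have supplied exactly those omitted details: the rewriting $\vz_z = (T-t)\lambda\lambda' R\,\vz_x$, the concavity inequality $x\vz_x \le \vz$ (which, combined with $R\le K_0 x$, converts products $R\vz_x$ into multiples of $\vz$), and the grouping of higher $x$-derivatives of $R$ into the bounded combinations $R^j\partial_x^{j+1}R$ of Proposition~\ref{prop_ssh}, with the polynomial growth of $\lambda,\lambda',\lambda'',\lambda'''$ absorbing all $z$-dependence.
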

\begin{proof}
The proof consists in successive differentiations starting with the ``Vega-Gamma'' relation in \eqref{eq_vzgamma}, and a repeated use of the concavity of $\vz$ and of the results in  Propositions \ref{prop_ssh} and  \ref{prop_rz}.
For the sake of space, we omit the details of this lengthy but straightforward derivation.

\end{proof}

\subsection{Proof of Theorem 3.1}
%

The heuristic expansion of $\Vzl$  is given by:
\begin{equation*}
\Vzl = \vz + \sqrt\delta\vo + \cdots.
\end{equation*}
and is derived in \cite[Appendix B]{FoSiZa:13}. Recall the residual function  $E(t,x,z)$ introduced in Theorem \ref{Thm_one}: 
$
E = \Vzl - \vz - \sqrt\delta\vo.
$
Subtracting \eqref{eq_vz} and \eqref{eq_vo} from \eqref{eq_Vzl}, one has
\begin{align}\label{eq_resone}
&E_t + \frac{1}{2}\sigma(z)^2\left(\pz\right)^2E_{xx} + \pz\mu(z)E_x + \delta\MCM E + \sqrt\delta\rho \sigma(z)g(z)\pz E_{xz}  \\
& \qquad + \delta\MCM(\vz + \sqrt\delta\vo) + \delta\rho \sigma(z)g(z)\pz \vo_{xz}  = 0, \quad E(T,x,z) = 0.\nonumber
\end{align}
Feynman--Kac formula gives the following probabilistic representation for $E(t,x,z)$
\begin{align*}
E(t,x,z) = &\delta\EE_{(t,x,z)}\bigg[\int_t^T \MCM\vz(s,X_s^\pz,Z_s) + \sqrt\delta\MCM\vo(s,X_s^\pz,Z_s) + \rho \sigma(Z_s)g(Z_s)\pz\vo_{xz}(s,X_s^\pz,Z_s) \ud s \bigg] \\
:= & \delta \text{I} + \delta^{3/2} \text{II} + \delta \rho \text{III},
\end{align*}
where $\EE_{(t,x,z)}[\cdot] = \EE[\cdot|X_t^\pz = x, Z_t = z]$ and
\begin{align}
& \text{I} :=\EE_{(t,x,z)}\left[\int_t^T c(Z_s) \vz_{z}(s,X_s^\pz,Z_s) + \frac{1}{2}g^2(Z_s)\vz_{zz}(s,X_s^\pz,Z_s)
 \ud s \right], \label{eq_I} \\
& \text{II}:= \EE_{(t,x,z)}\left[\int_t^T c(Z_s) \vo_{z}(s,X_s^\pz,Z_s) + \frac{1}{2}g^2(Z_s)\vo_{zz}(s,X_s^\pz,Z_s)
 \ud s \right], \label{eq_II}\\
& \text{III} := \EE_{(t,x,z)}\left[\int_t^T
\lambda(Z_s)g(Z_s)\rz(s,X_s^\pz; \lambda(Z_s))\vo_{xz}(s,X_s^\pz,Z_s)\ud s \right].\label{eq_III}
\end{align}
In order to show that $E$ is of order $\delta$, it suffices to show that I, II and III are uniformly bounded in $\delta$. 

We first analyze term I in \eqref{eq_I}. The boundedness for the  $z$-derivatives of $\vz$ is given by Proposition \ref{prop_vzzderivative}. To bound the $L^2$ norm of $\vz(\cdot, X_\cdot^\pz, Z_\cdot)$ we rely on Lemma \ref{lem_unibdd}. In the following we omit the arguments  of $\vz(s,X_s^\pz,Z_s)$ and its derivatives.
\begin{align*}
\text{I} &= \EE_{(t,x,z)}\left[\int_t^T
c(Z_s)\vz_z + \frac{1}{2}g^2(Z_s)\vz_{zz}\ud s \right] \doteq \;\text{I}^{(1)} + \frac{1}{2} \;\text{I}^{(2)}.
\end{align*}
\begin{align*}
\abs{\text{I}^{(1)}} &\leq \EE_{(t,x,z)}\left[\int_t^T
\abs{c(Z_s)\vz_z}\ud s \right] \leq \EE_{(t,x,z)}\left[\int_t^T \abs{c(Z_s)d_{01}(Z_s)}\vz\ud s \right] \\
&\leq \EE_{(t,z)}^{1/2}\left[\int_t^T c^2(Z_s)d^2_{01}(Z_s) \ud s\right]\EE_{(t,x,z)}^{1/2}\left[\int_t^T \left(\vz\right)^2\ud s\right]\\
& \leq C(T,z) C_3(T,x,z).
\end{align*}
In the calculation above, $\vz_z$ is replaced by its bound $d_{01}(z)\vz$ derived in Proposition \ref{prop_vzzderivative}. By Cauchy-Schwarz inequality, it suffices to bound two expectations. For the first one, we have used the facts that $c(z)$ and $d_{01}(z)$ have at most polynomial growth and $Z_t$ admits moments of any order uniformly in $\delta$. Lemma \ref{lem_unibdd} gives the bound for the second expectation. 

The bounds of remaining terms are obtained by the same procedure.
\begin{align*}
\abs{\text{I}^{(2)}} &\leq \EE_{(t,x,z)}\left[\int_t^T
g^2(Z_s)\abs{\vz_{zz}}\ud s \right] \leq \EE_{(t,x,z)}\left[\int_t^T g^2(Z_s)d_{02}(Z_s)\vz\ud s \right] \\
&\leq \EE_{(t,z)}^{1/2}\left[\int_t^T g^4(Z_s)d^2_{02}(Z_s) \ud s\right]\EE_{(t,x,z)}^{1/2}\left[\int_t^T \left(\vz\right)^2\ud s\right]\\
& \leq C(T,z) C_3(T,x,z).
\end{align*}

Term II in \eqref{eq_II} and term III in \eqref{eq_III} contain derivatives in $z$ of $\vo$. To deal with it, we recall the following relation between $\vo$ and $\vz$ given by equation \eqref{eq_vzandvo}:
\begin{align*}
\vo = -\frac{1}{2}(T-t)\rho \lambda(z)g(z)\frac{\vz_{x}\vz_{xz}}{\vz_{xx}} = \frac{1}{2} (T-t)\rho \lambda(z)g(z)\rz\vz_{xz}.
\end{align*}
Differentiating the above equation with respect to $z$, we are able to rewrite $\vo_{xz}$, $\vo_z$ and $\vo_{zz}$ in terms of the risk tolerance function $\rz$ and the leading order term $\vz$. Then, as before, the derivations are mainly based on Proposition \ref{prop_vzzderivative} and Lemma \ref{lem_unibdd}, and we omit the details here.

\section{Asymptotic Optimality of $\pz$}\label{sec_optimality}
%

The goal of this section is to show that the strategy $\pz$ defined in \eqref{pi0},  asymptotically outperforms every family $\MCA_0(t,x,z)\left[\pzt, \pot,\alpha\right]$ defined (\ref{def_A0}), as precisely stated in our main Theorem \ref{Thm_optimality} in Section~\ref{sec_intro}.

Denote by $\Vzt$ the value function associated to the trading strategy $\pi := \pzt + \delta^\alpha \pot \in \MCA_0(t,x,z)\left[ \pzt, \pot, \alpha\right]$:
\begin{equation}\label{def_Vzt}
\Vzt = \EE\left[U(X_T^\pi) \vert X_t^\pi = x, Z_t = z \right],
\end{equation}
where $X_t^\pi$ is the wealth process following the strategy $\pi$, and $Z_t$ is slowly varying  with the same $\delta$:
\begin{align}\label{def_Xtopt}
&\ud X_t^\pi = \pi(t,X_t^\pi,Z_t)\mu(Z_t)\ud t + \pi(t,X_t^\pi,Z_t)\sigma(Z_t)\ud W_t,\\
&\ud Z_t = \delta c(Z_t)\ud t + \sqrt\delta g(Z_t)\ud W_t^Z. \label{def_Ztopt}
\end{align}
We need to compare $\Vzt$ with $\Vzl$ defined in  \eqref{def_Vzl}, for which we have established the first order approximation $\vz+\sqrt{\delta}\vo$ in Theorem \ref{Thm_one}. This comparison is asymptotic in $\delta$ up to order $\sqrt{\delta}$, and our first step is to obtain the corresponding approximation for $\Vzt$. This is done heuristically in Section \ref{sec_heuristicVzt} in the two cases $\pzt \equiv \pz$ and $\pzt \not\equiv \pz$, and depending on the value of the parameter $\alpha$.  The proof of accuracy is given in Section \ref{sec_accuracy2}. Asymptotic optimality of $\pz$ is obtained in Section \ref{sec_optimalitypz}.

\begin{assump}\label{assump_piregularity}
	For a fixed choice of $(\pzt$, $\pot$, $\alpha>0)$, we require:
	\begin{enumerate}[(i)] 
		\item The whole family (in $\delta$) of strategies $\{\pzt + \delta^\alpha \pot\}$ is contained in $\MCA^\delta(t,x,z)$;
		\item  Functions $\pzt(t,x,z)$ and $\pot(t,x,z)$ are continuous on $[0,T]\times \RR^+\times \RR$;
		\item Let  $(\widetilde{X}_s^{t,x})_{t\leq s\leq T}$ be the solution to:
		\begin{equation}\label{eq_Xttilde}
		\ud \widetilde X_s = \mu(z)\pzt(s,\widetilde X_s, z) \ud s + \sigma(z) \pzt(s,\widetilde X_s,z) \ud W_s,
		\end{equation}
		starting at $x$ at time $t$. 
		
		By (i), $\widetilde{X}_s^{t,x}$ is nonnegative and we further
		assume that it has full support $\RR^+$ for any $t<s\leq T$.
	\end{enumerate}
\end{assump}
\begin{rem}\label{rem_pztpot}
	Notice that $\pz$ defined in \eqref{pi0} is continuous on $[0,T]\times \RR^+\times \RR$, thus, it is natural to  require that $\pzt$ and $\pot$ have the same regularity as $\pz$, that is (ii). Regarding (iii),
	from  Section \ref{sec:assumptions}, $\pz$ is the optimal trading strategy for the Merton problem when $\delta =0$, in which case $Z_t$ is  frozen at its initial position $z$. The associated wealth process $\widehat X_s^{t,x}$  starting at $x$ at time $t$ is the solution to
	\begin{equation*}
	\ud \widehat X_s = \mu(z)\pz(s,\widehat X_s, z) \ud s + \sigma(z)\pz(s,\widehat X_s, z) \ud W_s, \quad \widehat X_t = x.
	\end{equation*}
	Then, from \cite[Proposition 7]{KaZa:14}, one has	
	\begin{equation*}
	\widehat X_s^{t,x} = H\left(H^{-1}(x,t,\lambda(z)) + \lambda^2(z)(s-t) + \lambda(z)(W_s-W_t), s, \lambda(z)\right),
	\end{equation*}
	where $H: \RR\times[0,T]\times \RR \to \RR^+$ is defined in Proposition \ref{prop_H} and is of full range. Consequently,  $\widehat X_s^{t,x}$ has full support $\RR^+$, and thus, it is natural to require that $\widetilde X_s^{t,x}$ has full support $\RR^+$, that is (iii). 
\end{rem}
\begin{rem}
	We have $\MCA_0(t,x,z)\left[\pzt, \pot, 0\right]=\MCA_0(t,x,z)\left[\pzt + \pot, 0, \alpha\right]$, so that it is enough to consider $\alpha >0$.
\end{rem}

\subsection{Heuristic Expansion of the Value Function $\Vzt$}\label{sec_heuristicVzt}
We look for an expansion of the value function $\Vzt$ defined in \eqref{def_Vzt} of the form
\begin{equation}\label{eq_Vztexpansion}
\Vzt = \vzt + \delta^\alpha \vat + \delta^{2\alpha}\vtat + \cdots + \delta^{n\alpha}\widetilde v^{n\alpha} + \sqrt{\delta}\,\vot +\cdots,
\end{equation}
where $n$ is the largest integer such that $n\alpha < 1/2$. Note that in the case $\alpha > 1/2$, $n$ is simply zero. In the derivation, we are interested in identifying the zeroth order term $\vzt$ and the first non-zero term up to order $\sqrt\delta$. The term following $\vzt$ will depend on the value of $\alpha$. 

Denote by $\MCL$ the infinitesimal generator of the state processes $(X_t^\pi,Z_t)$ given by \eqref{def_Xtopt} - \eqref{def_Ztopt}
\begin{equation*}
\MCL :=  \delta\MCM + \frac{1}{2}\sigma^2(z)\left(\pzt + \delta^\alpha \pot \right)^2\partial_{xx} + \left(\pzt + \delta^\alpha\pot\right)\mu(z)\partial_x + \sqrt{\delta}\rho  g(z)\sigma(z)\left(\pzt + \delta^\alpha\pot\right)\partial_{xz},
\end{equation*}	
then, the value function $\Vzt$ defined in \eqref{def_Vzt} satisfies
\begin{equation}\label{eq_Vzt}
\partial_t \Vzt + \MCL \Vzt =0, \qquad \Vzt(T,x,z) = U(x).
\end{equation}

Collecting terms of order one yields the equation satisfied by $\vzt$
\begin{align}\label{eq_vzt}
&\vzt_t + \frac{1}{2}\sigma^2(z)\left(\pzt\right)^2\vzt_{xx} + \mu(z)\pzt\vzt_x = 0, \\
&\vzt(T,x,z) = U(x).\nonumber 
\end{align}
The order of approximation will depend on $\pzt$ being identical to $\pz$ or not.

\subsubsection{Case $\pzt \equiv \pz$}\label{sec_pzteqpz}
In this case, from the definition \eqref{pi0} of $\pz$, equation \eqref{eq_vzt} becomes \eqref{def_ltxpde} which is also satisfied by $\vz$ by \eqref{eq_vzandmerton}. By Proposition \ref{prop_ltxunique}, we deduce $\vzt \equiv \vz$.
To identify the term of next order, one needs to discuss case by case:
\begin{enumerate}[(i)]
	\item 
	$\alpha = 1/2$.
	The next order term is $\vot$ and it satisfies
	\begin{align*}
	&\vot_t + \frac{1}{2}\sigma^2\left(\pz\right)^2\vot_{xx} + \pz\mu(z)\vot_x  + \pz\rho  g(z)\sigma(z)\vzt_{xz} + \pot\left(\sigma^2(z)\pz\vzt_{xx}+\mu(z)\vzt_x\right) = 0,\\
	&\vot(T,x,z) = 0.\nonumber
	\end{align*}
	It reduces to equation \eqref{eq_vo}
	since we have the relations
	\begin{equation}\label{eq_vzlrelation}
	\vzt = \vz \quad \text{ and }\quad \sigma^2(z)\pz\vzt_{xx} = -\mu(z)\vzt_x,
	\end{equation}
	from the definition \eqref{pi0} of $\pz$. From  Section \ref{sec_heuristic} item (ii), $\vo$ is the unique solution to \eqref{eq_vo} and therefore, we obtain $\vot \equiv \vo$.
	
	\item 
	$\alpha > 1/2$. The next order is of $\MCO(\delta^{1/2})$.  By collecting all terms of order $\delta^{1/2}$, we also obtain that  $\vot$ satisfies \eqref{eq_vo}, and $\vot \equiv \vo$.
	
	\item 
	$\alpha < 1/2$. The next order correction is $\MCO(\delta^\alpha)$. Collecting all terms of order $\delta^{\alpha}$ in \eqref{eq_Vzt} yields
	\begin{align}\label{eq_vat}
	&\vat_t + \frac{1}{2}\sigma^2(z)\left(\pz\right)^2\vat_{xx} + \pz\mu(z)\vat_x + \pot\left(\sigma^2(z)\pz\vzt_{xx} + \mu(z)\vzt_x \right)= 0,\\
	&\vat(T,x,z) = 0.\nonumber
	\end{align}
	The last two terms cancel via the relation \eqref{eq_vzlrelation}, and \eqref{eq_vat} becomes \eqref{eq_vo} with $\rho=0$, which only has the trivial solution, namely $\vat \equiv 0$. Therefore, we need to identify the next non-vanishing term.
	\begin{itemize}
		\item 
		$1/4 < \alpha < 1/2$.
		The next order is of $\MCO(\delta^{1/2})$, and $\vot$ satisfies
		$$
		 \vot_t + \frac{1}{2}\sigma^2(z)\left(\pz\right)^2\vot_{xx} + \pz\mu(z)\vot_x +\rho \pz g(z)\sigma(z)\vzt_{xz} = 0, \quad
		 \vot(T,x,z) = 0.
		$$
		It coincides with \eqref{eq_vo} and we deduce $\vot = \vo$. 
		
		\item 
		$\alpha = 1/4$.
		The next order is of $\MCO(\delta^{1/2})$, and the PDE satisfied by $\vot$ becomes
		\begin{align}\label{eq_vot}
		&\vot_t + \frac{1}{2}\sigma^2(z)\left(\pz\right)^2\vot_{xx} + \pz\mu(z)\vot_x + \frac{1}{2}\sigma^2(z)\left(\pot\right)^2\vzt_{xx} + \pz\rho g(z)\sigma(z)\vzt_{xz} = 0,\\
		& \vot(T,x,z) = 0,\nonumber
		\end{align}
		which will be used later when we compare $\vo$ and $\vot$.
		\item 
		$0 < \alpha < 1/4$.
		The next order is of $\MCO(\delta^{2\alpha})$ since $2\alpha < 1/2$, and
		\begin{equation}\label{eq_vtat}
		\vtat_t + \frac{1}{2}\sigma^2(z)\left(\pz\right)^2 \vtat_{xx} + \pz\mu(z)\vtat_{x} + \frac{1}{2}\sigma^2(z)\left(\pot\right)^2\vzt_{xx} =0,\quad
		\vtat(T,x,z) = 0.
		\end{equation}
		Feynman--Kac formula gives: 
		\begin{align}\label{feynman_vtat}
		\vtat(t,x,z) = \EE\left[\int_t^T \frac{1}{2}\sigma^2(z)\left(\pot\right)^2(s,\widetilde X_s,z)\vzt_{xx}(s,\widetilde X_s,z) \ud s\big\vert \widetilde X_t = x\right],
		\end{align}
		with $\widetilde X_s$ following \eqref{eq_Xttilde}.
		Notice that, for  fixed $z$, if the source term $\frac{1}{2}\sigma^2(z)\left(\pot\right)^2\vzt_{xx}$ is identically zero after some time $t_1$, then, $\vtat(t_1,x,z)$ is zero. Therefore, further analysis is needed in order to find the first non-zero term after $\vzt$ at point $(t,x,z)$. Note that both $\sigma(z)$ and $(-\vzt_{xx})$ are strictly positive ( $\vzt = \vz$ is strictly concave), hence, $\pot$ is the problematic term. Accordingly, we define
				\begin{equation*}
		t_1(z) = \inf\{t\in [0,T]: \pot(u,x,z) = 0, \forall (u,x)\in [t,T]\times \RR^+\},
		\end{equation*}
		where we use the convention $\inf \{\emptyset\} =T$.
		Based on $t_1(z)$, the following two regions are defined: 
		\begin{align}\label{def_regionk1}
		\MCK_1 &= \left\{(t,x,z): 0\leq t < t_1(z), x \in \RR^+, z \in \RR\right\}, \\
		\MCC_1 &= \left\{(t,x,z): t_1(z)\leq t\leq T, x \in \RR^+, z \in \RR\right\},\label{def_regionc1}
		\end{align}
		which form a partition of $[0,T]\times\RR^+\times \RR$.
		\begin{itemize}
			\item 
			For any $(t,x, z)\in \MCK_1$, since $t< t_1(z)$, there exists a point $(t',x', z)\in [t,t_1(z))\times \RR^+\times\{z\}$ such that 
			$\pot(t',x',z)\neq 0$. By  continuity of $\pot$, there exist $\eta >0$ and a set $A:=[t',t'+\epsilon]\times [x',x'+\epsilon]$ with 
			$0<\epsilon <t_1(z)-t'$ such that  $|\pot|\geq \eta$  on $A\times\{z\}$. By \eqref{feynman_vtat} and denoting by $\mu_s$
			the distribution of $\widetilde X_s^{t,x}$,
			we deduce that
			\begin{align}
			\vtat(t,x,z)&\leq \frac{1}{2}\sigma^2(z)\int_{x'}^{x'+\epsilon}\int_{t'}^{t'+\epsilon}\left(\pot\right)^2(s,y,z)\vzt_{xx}(s,y,z) \ud s \,\mu_s(\ud y)\nonumber\\
			&\leq -\frac{1}{2}\sigma^2(z)\eta^2\int_{x'}^{x'+\epsilon}\int_{t'}^{t'+\epsilon}[-\vzt_{xx}(s,y,z)] \ud s \,\mu_s(\ud y)\nonumber\\
			&\leq -\frac{1}{2}\sigma^2(z)\eta^2 \inf_A \left[-\vzt_{xx}(s,y,z)\right]  \int_{t'}^{t'+\epsilon}\left(\int_{x'}^{x'+\epsilon}\mu_s(\ud y)\right)    \ud s \nonumber\\
			&<0.\label{eq_vtat_strictineq}
			\end{align}
			The conclusion $\vtat(t,x,z)<0$ follows from $\vzt\equiv \vz$, strict concavity and continuity of $\vz$, and  the full-support assumption on  the distribution $\mu_s$ of $\widetilde X_s^{t,x}$.
			\item 
			For any $(t,x, z)\in \MCC_1$, equation \eqref{eq_vtat} becomes \eqref{eq_vo} with $\rho =0$ (since $\pot \equiv 0$ in $\MCC_1$), and consequently, $\vtat(t,x,z) \equiv 0$. Therefore, we need to analyze the next order term. Recall that  $n$ is the largest integer such that $n\alpha < 1/2$ and we are in the case $0<\alpha <1/4$. 
			\begin{itemize}
				\item			
				If $n=2$, collecting terms of order $\delta^{1/2}$ and using the facts that $\vtat \equiv 0 
				$ in $\MCC_1$ and $\vat \equiv 0$, yields \eqref{eq_vo} for $\vot$, and therefore,  $\vot = \vo$.
				\item
				For $n \geq 3$, namely, the next order is $\delta^{3\alpha}$ and $\alpha < 1/6$, then $\vthat$ satisfies
				\begin{align}
				&\vthat_{t} + \frac{1}{2}\sigma^2(z)\left(\pz\right)^2\vthat_{xx} + \pz\mu(z)\vthat_x + \sigma^2(z)\pz\pot\vtat_{xx} + \mu(z)\pot\vtat_{x}=0, \\
				&\vthat(T,x,z) = 0.\nonumber
				\end{align}
				Notice that in the above PDE, $z$ is simply a parameter. For fixed $z$, on the region $[t_1(z),T]\times \RR^+$, $\vtat(t,x,z) \equiv 0$ and the above equation  reduces to \eqref{eq_vo} with $\rho = 0$ again. Therefore, $\vthat(t,x,z) \equiv 0$ in the region $\MCC_1$. 
				Repeating this argument until $\widetilde v^{n\alpha}$, we obtain  
				\begin{equation*}
				\widetilde v^{i\delta}(t,x,z) \equiv 0, \quad 2 \leq  i \leq n, \quad  \forall (t,x,z) \in \MCC_1,
				\end{equation*}
				and, as in the case $n=2$, we conclude $\vot = \vo$.
			\end{itemize}
		\end{itemize}
	\end{itemize}
\end{enumerate}	
We summarize the above discussion in the following table:
{\small \begin{table}[H]
	\centering
	\caption{Expansion of $\Vzt$ when $\pzt \equiv \pz$.}\vspace{5pt}\label{tab_eq}
	\begin{tabular}{c|c|c}
		\hline\hline
		Value of $\alpha$&Expansion&Remark \\ \hline
		$\alpha \geq 1/2$ & $\vz + \sqrt{\delta}\vo$& \\ \cline{1-1}
		$1/4 < \alpha < 1/2$& &   \\ \hline 
		$\alpha = 1/4$& $\vz + \sqrt{\delta}\vot$& $\vot$ satisfies equation \eqref{eq_vot} \\\hline
		& Region $\MCK_1$: $\vz + \delta^{2\alpha}\vtat$& $\vtat$ satisfies equation \eqref{eq_vtat} and \eqref{eq_vtat_strictineq} \\ 
		$0 < \alpha <1/4$	& Region $\MCC_1$: $\vz + \sqrt{\delta}\vo$&  \\ \hline\hline
	\end{tabular}
\end{table}
}

\subsubsection{Case $\pzt \not\equiv \pz$}
Recall that the leading order term $\vzt$ satisfies \eqref{eq_vzt}:
\begin{align*}
\vzt_t + \frac{1}{2}\sigma^2(z)\left(\pzt\right)^2\vzt_{xx} + \pzt\mu(z)\vzt_x = 0, \quad \vzt(T,x,z) = U(x).
\end{align*}
For  $z\in \RR$, we introduce
\begin{equation*}
t_0(z) = \inf\left\{t\geq 0: \pzt(u,x,z) \equiv \pz(u,x,z), \forall (u,x) \in [t,T]\times\RR^+ \right\}, \quad \inf\{\emptyset\}=T.
\end{equation*}
Define the regions: 
\begin{align}\label{def_regionk}
\MCK &= \{(t,x,z): 0 \leq t < t_0(z), x \in \RR^+, z\in \RR\}, \\
\MCC &= \{(t,x,z): t_0(z) \leq t \leq T, x \in \RR^+, z \in \RR\}.\label{def_regionc}
\end{align}
We claim that in the region $\MCK$, $\vzt$ and $\vz$ differ, while in the region $\MCC$, $\vzt \equiv \vz$ and we need to identify the next non-varnishing term.

In order to compare $\vz$ and $\vzt$, we rewrite the equation \eqref{eq_vz} satisfied by $\vz$ as:
\begin{equation*}
\vz_t + \frac{1}{2}\sigma^2(z)\left(\pzt\right)^2\vz_{xx} + \pzt\mu(z)\vz_x - \frac{1}{2}\sigma^2(z)\left(\pzt - \pz\right)^2\vz_{xx}  = 0,
\end{equation*}
where we have used the relation
$
-\sigma^2(z)\pz\vz_{xx} =\mu(z)\vz_x
$.

Now let $f(t,x,z) := \vz(t,x,z) - \vzt(t,x,z)$ be the difference of the two leading order terms,
it satisfies
$$
f_t + \frac{1}{2}\sigma^2(z)\left(\pzt\right)^2 f_{xx} + \pzt\mu(z)f_x - \frac{1}{2}\sigma^2(z)\left(\pzt - \pz\right)^2\vz_{xx} = 0, \quad
f(T,x,z) = 0.
$$
By the Feymann-Kac formula, one has:
\begin{equation}\label{feymann_vzt}
f(t,x,z) = -\EE\left[\int_t^T \frac{1}{2}\sigma^2(z)\left(\pzt - \pz\right)^2(s, \widetilde X_s, z)\vz_{xx}(s,\widetilde X_s,z)\ud s \Big\vert \widetilde X_t = x\right],
\end{equation}
where $\widetilde X_s$ follows \eqref{eq_Xttilde}.
Using the argument given  in Section \ref{sec_pzteqpz} for the case  $0 < \alpha < 1/4$, we deduce that 
the right-hand side in \eqref{feymann_vzt} is strictly positive. Consequently $f(t,x,z) >0$, and
\begin{equation}\label{strictineq}
\vzt(t,x,z) < \vz(t,x,z), \quad  \forall (t,x,z) \in \MCK.
\end{equation}
Thus, in that case, the next term will not play a role when comparing $\Vzt$ and $\Vzl=\vz+\sqrt{\delta}\vo+\MCO(\delta)$.

For any $(t,x,z) \in \MCC$, since  we have $\pzt \equiv \pz$ on $\MCC$, we can apply here the whole discussion in Section \ref{sec_pzteqpz} (on the partition $\{\MCC\cap \MCK_1,\MCC \cap \MCC_1\}$ in the case $0 < \alpha < 1/4$).	
The expansion results are summarized in the table:	
{\small \begin{table}[H]
	\centering
	\caption{Expansion of $\Vzt$ when $\pzt \not\equiv \pz$.}\vspace{5pt}\label{tab_neq}
	\begin{tabular}{c|c|c|c}
		\hline\hline
		Region&Value of $\alpha$&Expansion&Remark \\ \hline
		$\MCK$& all& $\vzt$& $\vzt$ satisfies  \eqref{eq_vzt} and  \eqref{strictineq}  \\\hline
		&$\alpha \geq 1/2$ & $\vz + \sqrt{\delta}\vo$& \\ \cline{2-2}
		$\MCC$&$1/4 < \alpha < 1/2$& &   \\  \cline{2-4}
		&$\alpha = 1/4$& $\vz + \sqrt{\delta}\vot$& $\vot$ satisfies equation \eqref{eq_vot} \\\hline
		$\MCC\cap \MCK_1$&& $\vz + \delta^{2\alpha}\vtat$& $\vtat$ satisfies equation \eqref{eq_vtat} and \eqref{eq_vtat_strictineq} \\ 
		$\MCC \cap \MCC_1$&$0 < \alpha <1/4$& $\vz + \sqrt{\delta}\vo$&  \\ \hline\hline
	\end{tabular}
\end{table}
}
\subsection{Accuracy of Approximations}\label{sec_accuracy2}
\begin{prop}\label{prop_piaccuracy}
	Under Assumptions \ref{assump_U} \eqref{assump_Uregularity}-\eqref{assump_Ubddbelow},  \ref{assump_piregularity} and \ref{assump_optimality}, we obtain the following accuracy results:
{\small	\begin{table}[H]
		\centering
		\caption{Accuray of approximations of $\Vzt$.}\vspace{5pt}\label{tab_accuracy}
		\begin{tabular}{c|c|c|c|c}
			\hline\hline
			Case&Region&Value of $\alpha$&Approximation&Accuracy\\ \hline
			&&$\alpha \geq 1/2$ & $\vz + \sqrt{\delta}\vo$& $\MCO(\delta)$ \\ \cline{3-3}\cline{5-5}
			&all&$1/4 < \alpha < 1/2$& & $\MCO(\delta^{2\alpha})$   \\ \cline{3-5} 
			$\pzt \equiv \pz$&&$\alpha = 1/4$& $\vz + \sqrt{\delta}\vot$& $\MCO(\delta^{3/4})$ \\\cline{2-5}
			& $\MCK_1$&& $\vz + \delta^{2\alpha}\vtat$&$\MCO(\delta^{3\alpha \wedge (1/2)})$ \\ 
			&$\MCC_1$&$0 < \alpha <1/4$& $\vz + \sqrt{\delta}\vo$&$ \MCO(\delta)$ \\ \hline
			
			&$\MCK$& all& $\vzt$&$ \MCO(\delta^{\alpha \wedge (1/2)})$ \\\cline{2-5}
			&&$\alpha \geq 1/2$ & $\vz + \sqrt{\delta}\vo$&$ \MCO(\delta)$ \\ \cline{3-3}\cline{5-5}
			$\pzt \not\equiv \pz$&$\MCC$&$1/4 < \alpha < 1/2$& &$ \MCO(\delta^{2\alpha})$    \\  \cline{3-5}
			&&$\alpha = 1/4$& $\vz + \sqrt{\delta}\vot$&$ \MCO(\delta^{3/4})$ \\\cline{2-5}
			&$\MCC\cap \MCK_1$&& $\vz + \delta^{2\alpha}\vtat$&$ \MCO(\delta^{3\alpha \wedge (1/2)})$ \\ 
			&$\MCC \cap \MCC_1$&$0 < \alpha <1/4$& $\vz + \sqrt{\delta}\vo$&$  \MCO(\delta)$  \\ \hline\hline
		\end{tabular}
	\end{table}
	}
	\noindent where  the meaning of  $\MCO$ is as in Theorem \ref{Thm_one}. 
	\end{prop}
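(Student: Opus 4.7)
The plan is to mirror the residual/Feynman--Kac approach of Theorem~\ref{Thm_one}, applied case by case. For each row of the table I would define the residual $E$ as the difference between $\Vzt$ and the candidate approximation. Since $\Vzt$ satisfies the linear PDE $\partial_t \Vzt + \MCL \Vzt = 0$ with $\Vzt(T,x,z)=U(x)$, and each building block ($\vzt$, $\vot$, $\vtat$, $\vo$) was defined in Section~\ref{sec_heuristicVzt} as the solution of a specific linear PDE, subtracting the PDEs yields a linear parabolic equation for $E$ of the form
\begin{equation*}
E_t + \MCL E + F^\delta(t,x,z) = 0, \qquad E(T,x,z)=0,
\end{equation*}
where the source $F^\delta$ is an explicit finite sum of terms of the form $\delta^k$ multiplied by an $(x,z)$-derivative of one of the building blocks. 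Feynman--Kac then gives $E(t,x,z) = \EE_{(t,x,z)}\!\left[\int_t^T F^\delta(s,X_s^\pi,Z_s)\,\ud s\right]$, and the claimed accuracy $\MCO(\delta^\beta)$ follows once I show that this expectation is $\MCO(\delta^\beta)$ uniformly in $\delta$. The tools would be Proposition~\ref{prop_vzzderivative} (for the $z$-derivatives of $\vz$), the moment control of $Z_s$ from Assumption~\ref{assump_valuefunc}~\eqref{assump_valuefuncZmoment}, an analogue of Lemma~\ref{lem_unibdd} for $X^\pi_\cdot$ (which is the role of Assumption~\ref{assump_optimality}), and Cauchy--Schwarz.

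For the case $\pzt \equiv \pz$, we have $\vzt \equiv \vz$, and the source structure is almost that of Theorem~\ref{Thm_one}, with extra contributions coming from $\delta^\alpha\pot$ through the generator $\MCL$. For $\alpha \geq 1/2$ these extras are themselves $\MCO(\delta)$ and the analysis is identical to Theorem~\ref{Thm_one}. For $1/4<\alpha<1/2$, the dominant new source is the $\delta^{2\alpha}$-term $\tfrac{1}{2}\sigma^2(z)(\pot)^2\vz_{xx}$, giving accuracy $\MCO(\delta^{2\alpha})$. For $\alpha=1/4$ that quadratic-in-$\pot$ term has been absorbed into $\vot$, and the next surviving source is of order $\delta^{3\alpha}=\delta^{3/4}$. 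For $0<\alpha<1/4$ inside $\MCK_1$ the approximation $\vz+\delta^{2\alpha}\vtat$ leaves as next surviving source a combination of order $\delta^{3\alpha}$ (from the cubic-in-$\pot$ cross term $\delta^{3\alpha}\sigma^2(z)\pz\pot\vtat_{xx}$ plus $\delta^{3\alpha}\mu(z)\pot\vtat_x$) and $\delta^{1/2}$ (from $\delta\MCM\vz$ and from $\sqrt{\delta}\rho g\sigma\pz\vz_{xz}$), giving $\MCO(\delta^{3\alpha\wedge (1/2)})$; inside $\MCC_1$ the strategy $\pot$ vanishes on $[t_1(z),T]$, so the problem coincides with that of Theorem~\ref{Thm_one} and the accuracy is $\MCO(\delta)$.

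For the case $\pzt \not\equiv \pz$, on the region $\MCC$ the identity $\pzt \equiv \pz$ holds after time $t_0(z)$ and all of the above applies directly, with the partition into $\MCC\cap\MCK_1$ and $\MCC\cap\MCC_1$ treated exactly as above. On the region $\MCK$, the approximation is the leading term $\vzt$ alone. Subtracting \eqref{eq_vzt} from the PDE for $\Vzt$ produces source terms of order $\delta^\alpha$ (from $\delta^\alpha\pot$ against $\mu(z)\vzt_x$ and $\sigma^2(z)\pzt\vzt_{xx}$, which do \emph{not} cancel since the Merton first-order condition holds for $\pz$ and not for $\pzt$), of order $\sqrt{\delta}$ (from $\sqrt{\delta}\rho g\sigma\pzt\vzt_{xz}$), and of order $\delta$ (from $\delta\MCM\vzt$), yielding the accuracy $\MCO(\delta^{\alpha\wedge (1/2)})$. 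To bound these sources I need the analogues of Proposition~\ref{prop_vzzderivative} and Lemma~\ref{lem_unibdd} with $\vzt$ in place of $\vz$; these follow by the same differentiation arguments applied to \eqref{eq_vzt}, granted the regularity of $\pzt$ in Assumption~\ref{assump_piregularity}(ii).

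The main technical obstacle is not any single estimate but the bookkeeping: for each of the eight rows one must (a) identify precisely which source terms survive after the cancellations established in Section~\ref{sec_heuristicVzt}, (b) match each one against the correct power of $\delta$ under Cauchy--Schwarz, and (c) confirm the uniform-in-$\delta$ $L^2$-control of $\vz(\cdot,X^\pi_\cdot,Z_\cdot)$ and $\vzt(\cdot,X^\pi_\cdot,Z_\cdot)$ along the wealth process $X^\pi_\cdot$ driven by $\pzt+\delta^\alpha\pot$, which is exactly what Assumption~\ref{assump_optimality} is designed to furnish. The additional derivative bounds on $\vot$ and $\vtat$ needed for the rows $\alpha\leq 1/4$ are obtained by differentiating the Feynman--Kac representation \eqref{feynman_vtat} (and its analogue for $\vot$) under the expectation, using the local boundedness of $\pot$ and its first few $x$-derivatives granted by Assumption~\ref{assump_piregularity}; this is where the most careful case-splitting is required, but the estimates themselves are routine applications of Proposition~\ref{prop_vzzderivative} and Cauchy--Schwarz.
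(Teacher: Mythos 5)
Your proposal matches the paper's proof in all essentials: for each row you subtract the PDEs of the approximating terms from the linear PDE \eqref{eq_Vztfull} for $\Vzt$, write the residual by Feynman--Kac as an expected time-integral of a source, and invoke Assumption \ref{assump_optimality} to control that expectation uniformly in $\delta$; the cancellations you list (Merton first-order condition for $\pz$, $\pot\equiv 0$ on $\MCC_1$, $\pzt\equiv\pz$ on $\MCC$) are exactly the ones used.

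Two minor remarks. First, in the row $0<\alpha<1/4$, $\MCK_1$, you attribute $\delta\MCM\vz$ to order $\delta^{1/2}$; it is of order $\delta$. The surviving $\delta^{1/2}$ contribution comes solely from $\sqrt{\delta}\rho g\sigma\pz\vz_{xz}$ (plus the higher-order companions $\sqrt{\delta}\rho g\sigma\,\delta^\alpha\pot(\vz_{xz}+\delta^{2\alpha}\vtat_{xz})$). This does not change the conclusion $\MCO(\delta^{3\alpha\wedge(1/2)})$. Second, your closing paragraph about deriving derivative bounds on $\vot$ and $\vtat$ by differentiating the Feynman--Kac formula under the expectation is unnecessary for this proposition: the uniform-in-$\delta$ boundedness of all the relevant source expectations (including those involving $\vtat_{xz}$, $\MCM\vtat$, etc.) is exactly what Assumption \ref{assump_optimality} directly postulates, so the paper does not construct these bounds but simply assumes them. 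You already note Assumption \ref{assump_optimality} plays this role, so the extra construction is harmless but redundant.
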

	In order to make rigorous the above expansions, we need  additional assumptions listed in Appendix \ref{appendix_addasump}. They are technical integrability conditions, uniformly in $\delta$, on the strategies in the class $\MCA_0(t,x,z)\left[\pzt, \pot,\alpha\right]$ defined in \eqref{def_A0} and their associated wealth processes.

\begin{proof}
	Recall that $\Vzt$ satisfies
	\begin{align}\label{eq_Vztfull}
	&\Vzt_t + \delta \MCM\Vzt + \frac{1}{2}\sigma^2(z)\left(\pzt + \delta^\alpha \pot \right)^2\Vzt_{xx} + \left(\pzt + \delta^\alpha\pot\right)\mu(z)\Vzt_x + \sqrt{\delta}\rho g(z)\sigma(z)\left(\pzt + \delta^\alpha\pot\right)\Vzt_{xz}=0,\\
	&\Vzt(T,x,z) = U(x).\nonumber
	\end{align}
	
	The  proofs of accuracy for the approximations given in Tables \ref{tab_eq} and  \ref{tab_neq} are quite standard, and we sketch them following the order of Table \ref{tab_accuracy}. 
	In each case, $E$ denotes the difference between $\Vzt$ and its approximation. It satisfies the terminal condition $E(T,x,z) = 0$ which we do not repeat below.
	
	We start with the case $\pzt = \pz$.
	\begin{enumerate}[(i)]
		\item 
		$\alpha = 1/2$. Subtracting equation \eqref{eq_vz} and \eqref{eq_vo} from \eqref{eq_Vztfull}, we obtain the PDE satisfied by $E(t,x,z)$:
		\begin{align*}
		&E_t + \MCL E + \delta \MCM(\vz + \sqrt\delta{\vo}) + \frac{\delta}{2}\sigma^2\left(\pot\right)^2\left(\vz_{xx} + \sqrt\delta\vo_{xx}\right) + \delta \sigma^2(z)\pz\pot\vo_{xx} \\
		& \qquad + \delta\mu(z)\pot\vo_x + \delta\rho g(z)\sigma(z)\left(\pz\vo_{xz} + \pot\vz_{xz} + \sqrt\delta\pot\vo_{xz}\right) = 0.
		\end{align*}			
		
	    Then, Feynman--Kac formula produces
		\begin{align*}
		E(t,x,z) &= \delta\EE_{(t,x,z)}\int_t^T\left[  \MCM \vz(s,X_s^\pi,Z_s) + \frac{1}{2}\sigma^2(Z_s)\left(\pot\right)^2 \vz_{xx}(s,X_s^\pi,Z_s) \right]\ud s \\
		&\qquad  + \delta^{3/2} \EE_{(t,x,z)}\int_t^T\left[  \MCM \vo(s,X_s^\pi,Z_s) + \frac{1}{2}\sigma^2(Z_s)\left(\pot\right)^2\vo_{xx}(s,X_s^\pi,Z_s) \right]\ud s   \\
		&\qquad + \delta\EE_{(t,x,z)}\int_t^T \left[\sigma^2(Z_s)\pz\pot\vo_{xx}(s,X_s^\pi,Z_s) + \mu(Z_s)\pot\vo_x(s,X_s^\pi,Z_s)\right]  \ud s \\
		&\qquad  + \delta\rho \EE_{(t,x,z)}\int_t^T\left[ g(Z_s)\sigma(Z_s)\pz\vo_{xz}(s,X_s^\pi,Z_s) + g(Z_s)\sigma(Z_s)\pot\vz_{xz}(s,X_s^\pi,Z_s) \right] \ud s \\
		&\qquad  + \delta^{3/2}\rho  \EE_{(t,x,z)}\int_t^T g(Z_s)\sigma(Z_s)\pot\vo_{xz}(s,X_s^\pi,Z_s) \ud s .
		\end{align*}
		
		Under Assumption \ref{assump_optimality} \eqref{assump_optimality_eqmore}, one has $E= \MCO(\delta)$.
		\item
		$\alpha > 1/2$. Similarly, we have
		\begin{align*}
		&E_t + \MCL E + \delta \MCM(\vz + \sqrt\delta{\vo}) + \frac{\delta^{2\alpha}}{2}\sigma(z)^2\left(\pot\right)^2\left(\vz_{xx} + \sqrt\delta\vo_{xx}\right) + \delta^{1/2+\alpha} \sigma^2(z)\pz\pot\vo_{xx} \\
		& \qquad + \delta^{1/2+\alpha}\mu(z)\pot\vo_x + \delta\rho g(z)\sigma(z)\left(\pz\vo_{xz} + \delta^{\alpha - 1/2}\pot\vz_{xz} + \delta^{\alpha}\pot\vo_{xz}\right) = 0.
		\end{align*}
		By Feynman--Kac formula and  Assumption \ref{assump_optimality} \eqref{assump_optimality_eqmore}, we deduce $E= \MCO(\delta)$.

		\item 
		$1/4< \alpha < 1/2$. We have
		\begin{align*}
		&E_t + \MCL E + \delta \MCM(\vz + \sqrt\delta{\vo}) + \frac{\delta^{2\alpha}}{2}\sigma(z)^2\left(\pot\right)^2\left(\vz_{xx} + \sqrt\delta\vo_{xx}\right) + \delta^{1/2+\alpha} \sigma^2(z)\pz\pot\vo_{xx} \\
		& \qquad + \delta^{1/2+\alpha}\mu(z)\pot\vo_x + \delta^{1/2+\alpha}\rho g(z)\sigma(z)\left(\delta^{1/2-\alpha}\pz\vo_{xz} + \pot\vz_{xz} + \sqrt\delta\pot\vo_{xz}\right) = 0,
		\end{align*}	
		and
		by Assumption \ref{assump_optimality} \eqref{assump_optimality_eqmore},  we have
		$E= \MCO(\delta^{2\alpha})$.
		
		\item 
		$\alpha = 1/4$. Subtracting equation \eqref{eq_vz} and \eqref{eq_vot} from \eqref{eq_Vztfull} yield
		\begin{align*}
		&E_t + \MCL E + \delta \MCM(\vz + \sqrt\delta{\vot}) + \frac{\delta^{3/4}}{2}\sigma(z)^2\left(2\pz\pot + \delta^{1/4}(\pot)^2\right)\vot_{xx} + \delta^{3/4}\mu(z)\pot\vot_x \\
		& \qquad  + \delta^{3/4}\rho g(z)\sigma(z)\left(\pot\vz_{xz} + \sqrt{\delta}\pot\vot_{xz} + \delta^{1/4}\pz\vot_{xz}\right) = 0,
		\end{align*}
		and
		Assumption \ref{assump_optimality} \eqref{assump_optimality_eqequal} implies $E= \MCO(\delta^{3/4})$.

		\item 
		$0 < \alpha < 1/4$. In the region $\MCK_1$, subtracting \eqref{eq_vz} and \eqref{eq_vtat} from \eqref{eq_Vztfull} produces
		\begin{align*}
		&E_t + \MCL E + \delta \MCM(\vz + \delta^{2\alpha}{\vtat}) + \frac{\delta^{3\alpha}}{2}\sigma(z)^2\left(2\pz\pot + \delta^\alpha(\pot)^2\right)\vtat_{xx} + \delta^{3\alpha}\mu(z)\pot\vtat_x \\
		& \qquad  + \sqrt\delta\rho g(z)\sigma(z)\left(\pz + \delta^\alpha\pot\right)\left(\vz_{xz} + \delta^{2\alpha}\vtat_{xz}\right) = 0,
		\end{align*}	
		and
		by Assumption \ref{assump_optimality} \eqref{assump_optimality_eqless}, one concludes that 
		$E=\MCO(\delta^{3\alpha \wedge (1/2)})$.

		In the complementary region $\MCC_1$, $E$ satisfies:
		\begin{align*}
		&E_t + \MCL E + \delta \MCM(\vz + \sqrt\delta{\vo}) + \frac{\delta^{2\alpha}}{2}\sigma(z)^2\left(\pot\right)^2\left(\vz_{xx} + \sqrt\delta\vo_{xx}\right) + \delta^{1/2+\alpha} \sigma^2(z)\pz\pot\vo_{xx} \\
		& \qquad + \delta^{1/2+\alpha}\mu(z)\pot\vo_x + \sqrt\delta\rho g(z)\sigma(z)\left(\sqrt{\delta}\pz\vo_{xz} + \delta^{\alpha}\pot\vz_{xz} + \delta^{\alpha+1/2}\pot\vo_{xz}\right) = 0.
		\end{align*}
		Note that in the region $\MCC_1$, $\pot \equiv 0$, and the above equation reduces to:
		\begin{align*}
		&E_t + \MCL E + \delta \MCM(\vz + \sqrt\delta{\vo}) +  \delta\rho g(z)\sigma(z)\pz\vo_{xz} = 0,		\end{align*}
				and then, Assumption \ref{assump_optimality} \eqref{assump_optimality_eqless} implies $E=\MCO(\delta)$.
	\end{enumerate}
	
	Now, we turn to  the case $\pzt \not\equiv \pz$. In the region $\MCK$, we know by
	\eqref{strictineq} that $\vzt<\vz$. Therefore, $\Vzt-\vz$ is asymptotically of order one and negative. Thus, the next term will not play a role
	and we define $E= \Vzt - \vzt$. 
	Subtracting equation \eqref{eq_vzt} from \eqref{eq_Vztfull} gives
	\begin{align*}
	&E_t  + \MCL E +  \delta\MCM \vzt + \sqrt\delta\left(\pzt + \delta^\alpha \pot\right) \rho g(z)\sigma(z)\vzt_{xz} + \frac{1}{2}\sigma^2(z)\left(\delta^\alpha\pot\right)^2\vzt_{xx} \\
	&\hspace{200pt}+ \sigma^2(z)\pzt\delta^\alpha\pot\vzt_{xx} + \delta^\alpha\pot\mu(z)\vzt_x =0.
	\end{align*}
	By Assumption \ref{assump_optimality} \eqref{assump_optimality_neq}, we  conclude that
	$E= \MCO(\delta^{\alpha \wedge (1/2)})$.
	
	Remark that $\pzt \equiv \pz$  in the region $\MCC$. Therefore, the whole analysis of case $\pzt \equiv \pz$ can be applied here, except that the case $0 < \alpha < 1/4$, where the accuracy results hold in the partition $\{\MCC \cap \MCK_1, \MCC \cap \MCC_1\}$ instead of $\MCC$. This complete the proof.	
\end{proof}

\subsection{Asymptotic Optimality: Proof of Theorem \ref{Thm_optimality}}\label{sec_optimalitypz}

The main result in this section is the proof of Theorem \ref{Thm_optimality}.
%
%
%

In order to compare the asymptotic performance of $\pz$ with the family of trading strategies \\
$\MCA_0(t,x,z)\left[ \pzt, \pot,\alpha\right]$, we are essentially comparing the approximations of $\Vzt$ summarized in Table \ref{tab_accuracy} with the first order approximation $\vz + \sqrt{\delta}\vo$ of $\Vzl$ obtained in Theorem \ref{Thm_one}. In each case in Table \ref{tab_accuracy} where the approximation of $\Vzt$ is $\vz + \sqrt{\delta}\vo$, it is easy to check that \eqref{eq_Vztineq} is satisfied and the limit is zero.
	The remaining five cases are: 
	(a) $\pzt \equiv \pz$ and $\alpha = 1/4$; 
	(a') $\pzt \not\equiv \pz$, in $\MCC$, and $\alpha = 1/4$;
	(b) $\pzt \equiv \pz$ and $0 < \alpha<1/4$ in the region $\MCK_1$;
	(b') $\pzt \not\equiv \pz$, in $\MCC\cap\MCK_1$, and $0 < \alpha<1/4$;  
	and (c) $\pzt \not\equiv \pz$ in the region $\MCK$.
	\begin{enumerate}[(a)]
		\item
		In the case $\pzt \equiv \pz$ and $\alpha  = 1/4$, the approximation of $\Vzt$ up to order $\sqrt{\delta}$ is $\vz + \sqrt\delta \vot $, and it suffices to show that $\vot \leq \vo$ for all $(t,x,z) \in [0,T]\times\RR^+\times \RR$. 
		Subtracting \eqref{eq_vot} from \eqref{eq_vo} shows that the difference $f(t,x,z) := \vo(t,x,z) - \vot(t,x,z)$ satisfies
		$$
		f_t + \frac{1}{2}\sigma^2(z)\left(\pz\right)^2 f_{xx} + \pz\mu(z)f_x - \frac{1}{2}\sigma^2(z)\left(\pot\right)^2\vz_{xx} = 0, \quad
		f(T,x,z) = 0,
		$$
		and  admits the representation
		\begin{equation*}
		f(t,x,z) = -\EE\left[\int_t^T \frac{1}{2}\sigma^2(z)\left(\pot\right)^2(s,\widetilde X_s,z)\vz_{xx}(s,\widetilde X_s,z)\ud s \Big\vert \widetilde X_t = x\right],
		\end{equation*}
		where $\widetilde X_t$ follows \eqref{eq_Xttilde}. 
		The concavity of $\vz$ implies $f(t,x,z)\geq 0$ and therefore, \eqref{eq_Vztineq} holds.
		\item
		In the case $\pzt \equiv \pz$ and $0 < \alpha < 1/4$, the approximation of $\Vzt$ is $\vz + \delta^{2\alpha}\vtat + o(\delta^{3\alpha\wedge 1/2})$, where $\vtat$ is strictly negative by \eqref{eq_vtat_strictineq}. Consequently, 
		\begin{equation*}
		\lim_{\delta\to 0}\frac{\Vzt(t, x, z)-\Vzl(t, x, z)}{\sqrt{\delta}}=\lim_{\delta\to 0}\frac{\delta^{2\alpha}\vtat-\sqrt{\delta}\vo+\MCO(\delta^{3\alpha\wedge 1/2})}{\sqrt{\delta}}=-\infty,
		\end{equation*}
		and \eqref{eq_Vztineq} holds.
		\item
		In the case $\pzt \not\equiv \pz$ and $ (t,x,z) \in \MCK$, the approximation of $\Vzt$ is $\vzt + o(1)$, and \eqref{strictineq} shows that $\vzt$ is strictly less than $\vz$. Thus, we deduce \eqref{eq_Vztineq}.
	\end{enumerate}
	The proof for the case (a') ({resp.} (b')) is essentially the same as in (a) (resp. (b)) but in the region $\MCC$ (resp. $\MCC\cap\MCK_1$).

\section{A Fully-Solvable Example}\label{sec_example}

In this section, we consider a model 
studied in \cite{ChVi:05} where explicit solutions are derived for the consumption problem over infinite horizon, and in \cite{FoSiZa:13} where expansions for the terminal wealth problem are derived and accuracy of approximation is proved under power utility with one factor. Our goal is to show that this model satisfies the various assumptions we have made in this paper and, therefore, justify that they are reasonable.
The underlying asset $S_t$ and the slowly varying factor $Z_t$ are modeled by:
\begin{align}\label{def_Stexample}
&\ud S_t = \mu S_t \ud t + \sqrt{\frac{1}{Z_t}} S_t\ud W_t, \\
&\ud Z_t = \delta (m-Z_t)\ud t + \sqrt\delta \beta \sqrt{Z_t} \ud W_t^Z,\label{def_Ztexample}
\end{align}
with $\beta>0$ and $\mu >0$.
The standard Feller condition $\beta^2 \leq 2m$  is assumed  to ensure that $Z_t$ stays positive. In this example, we consider power utilities:
$$U(x) = \frac{x^\gamma}{\gamma}, \quad 0 < \gamma < 1,$$
for which Assumption \ref{assump_U} is satisfied by Proposition \ref{prop_U}.
This model fits in the class of models \eqref{eq_St}-\eqref{eq_Zt} 
by identifying the coefficients $\mu(z)$, $\sigma(z)$, $c(z)$ and $g(z)$ as follows:
\begin{equation*}
\mu(z) = \mu, \quad \sigma(z) = \sqrt{1/z}, \quad c(z) = m-z, \quad g(z) = \beta\sqrt z.
\end{equation*}

For Assumption~\ref{assump_valuefunc}~\eqref{assump_valuefuncSZ}-\eqref{assump_valuefuncZmoment}  (with state space $(0,\infty)$), we notice that $(Z_t)$ is the unique strong solution to \eqref{def_Ztexample} and it has finite moments of any order uniformly in $\delta\leq 1$ and $t \leq T$, see for instance \cite[Chapter 3]{FoPaSiSo:11}.
The process $(S_t)$ is given by:
\begin{equation*}
S_t = S_0\exp\left( \int_0^t \left(\mu - \frac{1}{2Z_s}\right)\ud s+ \int_0^t \sqrt{\frac{1}{Z_s}} \ud W_s\right).
\end{equation*}
For Assumption~\ref{assump_valuefunc}~\eqref{assump_valuefuncG}-\eqref{assump_valuefuncX}, we first solve \eqref{eq_vz} to obtain $\vz$ and $\pz$: 
\begin{equation*}
\vz_t - \frac{1}{2}\mu^2 z \frac{\left(\vz_x\right)^2}{\vz_{xx}} = 0,\qquad
\vz(T,x,z) = \frac{x^\gamma}{\gamma}.
\end{equation*}
One can easily check that 
\begin{equation*}
\vz(t,x,z) = \frac{x^\gamma}{\gamma} e^{\frac{\mu^2\gamma}{2(1-\gamma)}z(T-t)},
\end{equation*}
is a solution, and by Proposition \ref{prop_ltxunique} it is the unique solution.
Consequently, the zeroth order strategy $\pz$ and the risk tolerance function $R(t,x;\lambda(z))$ are given by
\begin{equation}\label{eq_pzexample}
\pz(t,x,z) = \frac{\mu xz}{1-\gamma}, \qquad \text{and} \qquad R(t,x;\lambda(z)) = \frac{x}{1-\gamma}.
\end{equation}
Note that in this case, the relations on the derivatives of $\vz$ in Proposition \ref{prop_vzzderivative} can be verified by direct computation.
The verification of Assumption \ref{assump_valuefunc} \eqref{assump_valuefuncG}-\eqref{assump_valuefuncX} will be presented in the next two sections.

\subsection{Integrability of the Process $G(Z_\cdot)$}\label{sec_veriassumpii}
As in \cite{AnPi:07}, one can compute the left-hand sides of \eqref{assump_Gz} and \eqref{assump_Xsquare} by solving Riccati equations. 
As mentioned in Section \ref{sec_assumpvaluefunc}, $\vz(0,x,z)$ is a concave function, and it has a linear upper bound $G(z) + x$. To obtain $G(z)$, we derive: $\forall x_0 \in \RR^+$,
\begin{align*}
\vz(0,x,z) &\leq \vz(0,x_0,z) + \frac{\partial}{\partial x}\vz(0,x_0,z)(x-x_0) \\
& = \left(\frac{1}{\gamma}-1\right)x_0^\gamma e^{\frac{\mu^2\gamma z}{2(1-\gamma)}T} + x_0^{\gamma-1}e^{\frac{\mu^2\gamma z}{2(1-\gamma)}T}x.
\end{align*}
Let $x_0 = e^{\frac{\mu^2\gamma z}{2(1-\gamma)^2}T}$ so that the coefficient in front of $x$ is 1, and $G(z)$ can be chosen as:
\begin{align*}
G(z) = \left(\frac{1}{\gamma}-1\right)x_0^\gamma e^{\frac{\mu^2\gamma z}{2(1-\gamma)}T} = \left(\frac{1}{\gamma}-1\right)e^{\frac{\mu^2\gamma z}{2(1-\gamma)^2}T}.
\end{align*}
We have
\begin{equation}\label{momentsG}
\EE_{(0,z)}\left[\int_0^T G^2(Z_s) \ud s\right] = \left(\frac{1}{\gamma}-1\right)^2\int_0^T f^\delta(0,z;s)\ud s,
\end{equation}
where
\begin{equation*}
f^\delta(t,z;s) = \EE\left[e^{\frac{\mu^2\gamma T}{(1-\gamma)^2}Z_s}\bigg\vert Z_t = z\right],
\end{equation*}
solves
\begin{align}
&f^\delta_t + \frac{\delta}{2}\beta^2 zf^\delta_{zz} + \delta(m-z)f^\delta_z = 0, \quad t \in [0,s),\label{eq_f}\\
&f^\delta(s,z;s) = e^{wz},\quad \mbox{with}\quad w = \frac{\mu^2\gamma T}{(1-\gamma)^2}.\nonumber
\end{align}
This equation admits the solution
\begin{equation}\label{fexplicit}
f^\delta(t,z;s) = e^{wz + A^\delta(s-t)z + B^\delta(s-t)},
\end{equation}
where $A^\delta(\tau)$ satisfies the Riccati equation:
\begin{align}
&A^\delta(\tau)' = \frac{\delta}{2}\beta^2A^\delta(\tau)^2 + \left(\delta\beta^2 w-\delta\right)A^\delta(\tau) + \left(\frac{\delta}{2}\beta^2w^2-\delta w\right), \quad \tau \in (0,s],\\
&A^\delta(0) = 0,
\end{align}
and $B^\delta(\tau)$ solves
\begin{equation}\label{eq_B}
B^\delta(\tau)' = \delta m (w + A^\delta(\tau)), \quad B^\delta(0) = 0.
\end{equation}
The discriminant of this equation is $\Delta = \delta^2$ which is positive, and  one gets: 
\begin{equation}\label{eq_A}
A^\delta(\tau) = \frac{-w\left(1-e^{-\delta \tau}\right)}{1-\frac{w}{w-\frac{2}{\beta^2}}e^{-\delta\tau}}, \quad \tau \in [0,\tau^\star(\delta)),
\end{equation}
where $[0,\tau^\star(\delta))$ is the domain where $A^\delta(\tau)$ stays finite. 
It remains to show that $A^\delta(\tau)$ and $B^\delta(\tau)$ are uniformly bounded in $(\delta,\tau)\in [0,\overline\delta]\times[0,T]$ for some $\overline\delta\leq 1$. 
Note that the boundedness of $B^\delta(\tau)$ is a consequence of that of $A^\delta(\tau)$ via equation \eqref{eq_B}. Since $A^\delta(\tau)$ is continuous on $(0,1]\times [0,\tau^\star(\delta))$,  it suffices to show that i) there exists $\overline\delta$, such that $\tau^\star(\delta)>T$ for $\delta \leq \overline\delta$, and ii) $\lim_{\delta \to 0} A^\delta(\tau)$ exists. To this end, we examine the following cases:
\begin{itemize}
	\item[(a)] $w < \frac{2}{\beta^2}$. The denominator  of \eqref{eq_A} stays above 1,   $\tau^\star(\delta) = \infty$, and  $\lim_{\delta\to 0}A^\delta(\tau)=0$.
	\item[(b)] $w > \frac{2}{\beta^2}$. Here $\tau^\star(\delta) = -\frac{1}{\delta}\ln\left(\frac{w-\frac{2}{\beta^2}}{w}\right)$, $\lim_{\delta \to 0}\tau^\star(\delta) = \infty$, and $\lim_{\delta\to 0}A^\delta(\tau)=0$.
\item[(c)]
$w=\frac{2}{\beta^2}$. This case gives the trivial solution $A^\delta(\tau) \equiv 0$.
\end{itemize}
In all cases, $A^\delta(\tau)$ is uniformly bounded in $[0,\overline\delta]\times[0,T]$
and therefore, combined with \eqref{momentsG} and \eqref{fexplicit}, we deduce that Assumption \ref{assump_valuefunc} \eqref{assump_valuefuncG} is satisfied. 
%
%

\subsection{Moments of the Wealth Process $X_t^\pz$}\label{sec_wealth}

First, using the explicit formula \eqref{eq_pzexample} for $\pz$,  equation \eqref{eq_Xt} becomes
\begin{equation}
\ud X_s^\pz = \frac{\mu^2Z_s}{1-\gamma} X_s^\pz \ud s + \frac{\mu\sqrt Z_s}{1-\gamma}X_s^\pz \ud W_s, \quad s \geq t.
\end{equation}
In order to control $\EE_{(0,x,z)}\left[\int_0^T \left(X_s^\pz\right)^2 \ud s \right]$, we introduce
$
f^\delta(t,x,z;s) := \EE\left[\left(X_s^\pz\right)^2\bigg \vert X_t^\pz = x, Z_t = z\right],
$
which solves
\begin{align}
&f^\delta_t + \frac{\mu^2z}{1-\gamma}xf^\delta_x + \frac{1}{2}\frac{\mu^2z}{(1-\gamma)^2}x^2f^\delta_{xx} + \delta(m-z)f^\delta_z + \frac{\delta}{2}\beta^2zf^\delta_{zz} + \rho \frac{\sqrt\delta\mu\beta}{1-\gamma}zxf^\delta_{xz} = 0,\\
&f^\delta(s,x,z;s) = x^2.
\end{align}
The solution is of the form
$
f^\delta(t,x,z;s) = x^2e^{A^\delta(s-t)z + B^\delta(s-t)},
$
where $A^\delta(\tau)$ satisfies the Riccati equation:
\begin{align}\label{eq_A'}
A^\delta(\tau)' = \frac{\delta}{2}\beta^2A^\delta(\tau)^2 + \left(\frac{2\sqrt\delta\rho \mu\beta}{1-\gamma}-\delta\right)A^\delta(\tau) + \frac{(3-2\gamma)\mu^2}{(1-\gamma)^2}, \quad \tau \in (0,s], \quad
A^\delta(0) = 0,
\end{align}
and $B^\delta(\tau)$ solves
\begin{equation}\label{eq_B'}
B^\delta(\tau)' = \delta m A^\delta(\tau), \quad B^\delta(0) = 0.
\end{equation}
By a similar argument used in Section \ref{sec_veriassumpii}, the verification of the uniform bound
\begin{align*}
\EE_{(0,x,z)}\left[\int_0^T X_s^2 \ud s \right]=\int_0^Tf^\delta(0,x,z;s)\ud s  \leq C_2(T,x,z),
\end{align*}
reduces to i) there exists $\overline\delta$, such that $\tau^\star(\delta)>T$ for $\delta \leq \overline\delta$, (recall that $\tau^\star(\delta)$ is defined to be the  explosion time) and ii) $\lim_{\delta \to 0} A^\delta(\tau)$ exists. We omit the details which are lengthy but straightforward analysis.
\done{
\subsection{Asymptotic Optimality of $\pz$}
So far, we have shown that all assumptions listed in Section~\ref{sec:assumptions} are satisfied. Thus, under the model \eqref{def_Stexample}--\eqref{def_Ztexample} and power utility, Theorem \ref{Thm_one} is valid and we have $\Vzl = \vz + \sqrt{\delta}\vo + \MCO(\delta)$. In this section, an example with specific $\pzt$ and $\pot$ is presented, and we verify the Assumption \ref{assump_piregularity} and \ref{assump_optimality} under such choice. Consequently, following Theorem \ref{Thm_optimality}, $\pz$ outperforms this class of strategies up to order $\sqrt{\delta}$.

We choose  
\begin{equation*}
 \pzt = \pot = \pz = \frac{\mu xz}{1-\gamma},
\end{equation*}
then $X_t^\pi$ defined in \eqref{def_Xtopt} follows
\begin{equation}
\ud X_t^\pi = (1+\delta^\alpha)\frac{\mu^2 Z_t}{1-\gamma}X_t^\pi \ud t + (1+\delta^\alpha)\frac{\mu \sqrt{Z_t}}{1-\gamma}X_t^\pi \ud W_t.
\end{equation}
Assumption \ref{assump_piregularity} is easily verified: (i) solving the above SDE yields
\begin{equation*}
X_t^\pi = \exp\left\{\int_0^t (1+\delta^\alpha)\frac{\mu^2Z_s}{1-\gamma} - \frac{1}{2}(1+\delta^\alpha)^2\frac{\mu^2Z_s}{(1-\gamma)^2}\ud s  + \int_0^t (1+\delta^\alpha) \frac{\mu\sqrt{Z_s}}{1-\gamma}\ud W_s\right\},
\end{equation*} 
which ensures the admissibility of the family of strategy $\{\pzt + \delta^\alpha\pot\}_{\delta>0}$; (ii) the function $\frac{\mu xz}{1-\gamma}$ is continuous by definition; and (iii) the full support property of $\widetilde X_s^{t,x}$, for any $t < s \leq T$, follows from $\pzt = \pz$ and Remark \ref{rem_pztpot}.

We now check the Assumption \ref{assump_optimality} case by case: 

(a) $\alpha > 1/4$. Recall $\vz$ from \eqref{eq_pzexample} and obtain $\vo$ by \eqref{eq_vzandvo}:
\begin{equation*}
\vz(t,x,z) = \frac{x^\gamma}{\gamma}e^{\frac{\mu^2\gamma}{2(1-\gamma)}z(T-t)}, \quad \vo(t,x,z) = \frac{\gamma x^\gamma}{4(1-\gamma)^2}(T-t)^2\rho\mu^3\beta z e^{\frac{\mu^2\gamma}{2(1-\gamma)}z(T-t)},
\end{equation*}
we deduce that all quantities required to be uniformly bounded in $\delta$ in Assumption \ref{assump_optimality} are of form
\begin{equation}\label{eq_exform}
\EE_{(t,x,z)}\int_t^T \mc{P}(Z_s) \left(X_s^\pi\right)^\gamma \ud s,
\end{equation}
where $\mc{P}(\cdot)$ is at most polynomially growing. Now by H\"{o}lder inequality,
\begin{equation*}
\EE_{(t,x,z)}\int_t^T \mc{P}(Z_s) \left(X_s^\pi\right)^\gamma \ud s \leq \left(\EE_{(t,x,z)}\int_t^T \mc{P}^q(Z_s) \ud s\right)^{1/q}\left(  \EE_{(t,x,z)}\int_t^T\left(X_s^\pi\right)^\gamma p\ud s\right)^{1/p}
\end{equation*}
with $pr = 2$ and $1/p + 1/q = 1$, it reduces to show $X_\cdot^\pi \in L^2([0,T]\times \Omega)$ uniformly in $\delta$. This can be done in a similar manner as done in Section~\ref{sec_wealth}: define $f^\delta(t,x,z;s)$ as the conditional expectation, derive the PDE satisfied for $f^\delta$, making a Anzarts of the form $x^2e^{A^\delta(s-t)z + B^\delta(s-t)}$ and solve a slightly different Riccati equation for $A^\delta(\tau)$ and $B^\delta(\tau)$. Following straightforward but length analysis, it can be verified that $A^\delta$ and $B^\delta$ are uniformly bounded in $\delta$, so does the second moment of $X_\cdot^\pi$. 

(b) $0 < \alpha < 1/4$. In addition to $\vz$ and $\vo$ given above, we solve $\widetilde v^{2\alpha}$ from \eqref{eq_vtat}:
\begin{equation}
\widetilde v^{2\alpha} = -\frac{x^\gamma}{2(1-\gamma)}\mu^2(T-t)ze^{\frac{\mu^2\gamma}{2(1-\gamma)}z(T-t)}.
\end{equation}
Still, every quantity is of the form \eqref{eq_exform} and the argument repeats the previous case.

(c) $\alpha = 1/4$. This is the critical case and similar to case (b). Solving $\vot$ from \eqref{eq_vot}, we obtain $\vot = \vo + \widetilde v^{2\alpha}$. Therefore, all quantities again can be viewed in the form of \eqref{eq_exform}, and the uniform boundedness in $\delta$ follows from H\"{o}lder inequality and finite second moment of $X_\cdot^\pi$.

An example that $\pz \not\equiv \pzt$ could also be validated in a similar manner. For instance, if we choose 
\begin{equation*}
\pzt = c\pz = \frac{\mu cxz}{1-\gamma}, \;c>0 \quad \text{ and } \quad \pot = \pz = \frac{\mu xz}{1-\gamma},
\end{equation*}
The only changes in the previous argument are: in Assumption \ref{assump_piregularity} (iii), we prove that $\widetilde X_s$ possesses full support $\RR^+$ by substituting $\pzt$ into \eqref{eq_Xttilde} and solving the SDE to which the solution is a geometric Brownian Motion; and in Assumption \ref{assump_optimality} (ii), we first solve $\vzt$ from \eqref{eq_vzt}.
}

\section{Conclusion}\label{sec_conclusion}

In this paper, we have considered the portfolio allocation problem in the context of a slowly varying stochastic environment and when the investor tries to maximize her terminal utility in a general class of utility functions. We proved that the zeroth order strategy identified in \cite{FoSiZa:13} is in fact asymptotically optimal up to the first order within a specific class of strategies. We have made precise the assumptions needed in order to rigorously establish this asymptotic optimality. These assumptions are on the coefficients of the model, on the utility function, and on the zeroth order value function, that is the solution to the classical Merton problem with constant coefficients. Finally, we analyzed a fully solvable example in order to demonstrate that our assumptions are reasonable.

In an ongoing work, \cite{FoHu:XX}, we are establishing the same type of results in the case of a fast varying stochastic environment, and, ultimately, in the case of a model with two factors, one slow and one fast. We also plan to analyze the effect of the first order correction in the strategy on the second order correction of the value function. 

Our analysis deals with classical solutions of the partial differential equations involved in the problem, namely, the Merton PDE for the leading order value functions and linear equations with source for the correction terms. A full optimality result would require working with viscosity solutions of the HJB of the full problem, and that is also part of our future research.


\appendix
\section{Proof of Proposition \ref{prop_U}}\label{app_U}
Proof of (i). Without loss of generality, we assume $E = [a,b] \subset [0,1)$. Notice that $a$ can be zero, but $b$ is strictly less than 1. Define $f(x,y) = x^y$, since $f_x^{(7)}(x,y)$ is continuous in $[x_0-\delta, x_0+\delta]\times E$, $\forall x_0 \in (0,\infty)$ and $f_x^{(7)}(x,y)$ is integrable on $E$,
\begin{equation*}
U(x) = \int_E f(x,y) \nud{y}
\end{equation*}
is $C^7(0,\infty)$,  Moreover, we have
\begin{equation}\label{eq_Uprime}
U^{(i)}(x) = \int_E f_x^{(i)}(x,y)\nud{y}, \text{for } i \leq 7.
\end{equation}
The monotonicity and concavity follows by the sign of $U'(x)$ and $U''(x)$ in \eqref{eq_Uprime}. $U(0+) = 0$ follows by Dominated Convergence Theorem (DCT). We have:
\begin{align*}
&\lim_{x\to 0+} U'(x) \geq \lim_{x\to 0+} \int_{a+\delta}^b yx^{y-1} \nud{y} \geq \lim_{x\to 0+}(a+\delta)\left(\frac{1}{x}\right)^{1-b} \nu([a+\delta,b]) = +\infty, \text{for a given } \delta; \\
&\lim_{x\to+\infty} U'(x) = \lim_{x\to+\infty} \int_a^b y\left(\frac{1}{x}\right)^{1-y}\nud{y} \leq \lim_{x\to+\infty} b\left(\frac{1}{x}\right)^{1-b}\nu([a,b]) = 0; \\
& AE[U] = \lim_{x\to+\infty} \frac{x\int_a^byx^{y-1}\nud{y}}{\int_a^b x^y\nud{y}} = \lim_{x\to+\infty} \frac{\int_a^byx^{y}\nud{y}}{\int_a^b x^y\nud{y}} \leq b < 1,
\end{align*}
which shows the Inada and Asymptotic Elasticity conditions \eqref{eq_usualconditions}. To show Assumption~\ref{assump_U}~\eqref{assump_Urisktolerance} is satisfied, we follow Remark \ref{rem_U}, and prove the following: a) $R(0) = 0$, $R(x)$ is strictly increasing on $[0, \infty)$; and b) $\abs{R^j(x)\left(\partial_x^{(j+1)}R(x)\right)} \leq K$, $\forall 0\leq j \leq 4$.
For convenience, we introduce the short-hand notation
\begin{equation*}
\average{f(y)}_x = \int_E f(y)x^y\nud{y},
\end{equation*}
and in the sequel, we shall omit the subscript $x$ when there is no confusion. 

Following from \eqref{eq_Uprime} and using the short-hand notation, $R(x)$ is given by
\begin{equation}\label{eq_R}
R(x) = \frac{\int_E yx^{y-1}\nud{y}}{\int_E y(1-y)x^{y-2}\nud{y}} = x \frac{\average{y}}{\average{y(1-y)}}.
\end{equation}
Since $1-y$ is bounded by $1-b$ and $1-a$, we deduce
\begin{equation}\label{eq_Rbounds}
\frac{x}{1-a} \leq R(x)\leq \frac{x}{1-b},
\end{equation} and obtain $R(0) = 0$ by letting $x\to 0$. Taking derivative in \eqref{eq_R} gives
\begin{equation*}
R'(x) = \frac{\average{y(y+1)}\average{y(1-y)} - \average{y}\average{y^2(1-y)}}{\average{y(1-y)}^2}.
\end{equation*}
The positiveness of $R'(x)$ on $[0,\infty)$ follows by
\begin{align*}
R'(x) &= \frac{\average{y}\average{y^3}+\average{y}^2 - \average{y^2}^2 - \average{y}\average{y^2}}{\average{y(1-y)}^2} \geq \frac{\average{y}^2-\average{y}\average{y^2}}{\average{y(1-y)}^2}\\
&= \frac{\average{y}}{\average{y(1-y)}} \geq \frac{\average{y}}{(1-a)\average{y}} = \frac{1}{1-a},
\end{align*}
where we have used  $\average{y}\average{y^3}\geq \average{y^2}^2$. Thus, $R'(x)$ is bounded below by $\frac{1}{1-a}$ on  $[0,\infty)$, and consequently, $R(x)$ is strictly increasing for $x \geq 0$. To show $R'(x) < K$, we derive the upper bound as follows:
\begin{equation}\label{eq_Rp}
R'(x) \leq \frac{(b+1)(1-a)\average{y}^2}{(1-b)^2\average{y}^2} = \frac{(b+1)(1-a)}{(1-b)^2}.\\
\end{equation}
To show $\abs{R(x)R''(x)} \leq K$, we first compute $R''(x)$:
\begin{align*}
R''(x) &= \frac{1}{x}\left(\frac{\average{y^2(y+1)}}{\average{y(1-y)}} + \frac{2\average{y^2(1-y)}^2\average{y}}{\average{y(1-y)}^3}- \frac{\average{y^2}\average{y^2(1-y)} + \average{y}\average{y^3(1-y)} + \average{y^2(1-y)}\average{y(y+1)}}{\average{y(1-y)}^2}\right).
\end{align*}
Then, the upper bound and lower bound of $R(x)R''(x)$ are computed as follows:
\begin{align*}
&R(x)R''(x) \leq  \frac{\average{y}}{\average{y(1-y)}}\left(\frac{(b+1)\average{y^2}}{(1-b)\average{y}} + \frac{2(1-b)^2\average{y^2}^2\average{y}}{(1-b)^3\average{y}^3}\right) \leq \frac{1}{1-b} \frac{b+3}{1-b}; \\
&R(x)R''(x) \geq - \frac{\average{y}}{\average{y(1-y)}} \frac{(1-a)\average{y^2}^2 + (1-a)\average{y}\average{y^3} + (1-a)(1+b)\average{y^2}\average{y}}{(1-b)^2\average{y}^2} \geq -\frac{(b+3)}{(1-b)^2}.
\end{align*}
Combine the above bounds, we have the desired results $\abs{R(x)R''(x)} \leq K$. Similar arguments works for $j = 2, 3, 4$ by straightforward calculations. 

The last step is to show the growth condition of $I(y) = U'^{(-1)}(y)$. For $y \geq U'(1) = \int_E y\nud{y}$, we have $I(y) \leq 1$. The other case $y = U'(x) \leq U'(1)$, where $x\geq 1$,
\begin{align*}
&U'(x) \leq x^{b-1} \int_E y\nud{y} , \quad \forall x \geq 1, \\
&\Rightarrow \quad U'\left(\left(\frac{t}{\int_E y\nud{y}}\right)^{1/(b-1)}\right) \leq t, \quad \forall t \leq \int_E y\nud{y}\\
& \Rightarrow \quad I(y) \leq \kappa y^{-\gamma}, \quad \forall y \leq \int_E y\nud{y},
\end{align*}
where $\kappa= \left(\frac{1}{\int_E y\nud{y}}\right)^{1/(b-1)}$ is a constant depending solely on $\nud{y}$, and $\alpha = \frac{1}{1-b} > 1$. Combining the two cases, we have $I(y) \leq \alpha + \kappa y^{-\alpha}$.

Proof of (ii). This class of utility functions is defined via the inverse of marginal utility $I(y)$, where $U(x)$ can be recovered by:
\begin{equation}
U(x) = \int_0^x U'(t) \ud t = \int_0^x I^{(-1)}(t)\ud t.
\end{equation}
Then $U(0+)=0$ is automatically satisfied. By definition of $I(y)$, $I(y) \in C^\infty(0,\infty)$, so does $U(x)$. 
The strictly monotonicity and strictly concavity are given by:
\begin{align*}
&U'(x) = I^{(-1)}(x)>0,
&U''(x) = \frac{1}{I'(I^{(-1)}(x))} = \left(\int_0^N -s\left(I^{(-1)}(x)\right)^{-s-1} \nud{s} \right)^{-1}<0.
\end{align*}
By DCT, one has
\begin{align*}
&I(+\infty) = \lim_{y\to+\infty} \int_0^N y^{-s} \nud{s} = 0, \\
&I(0) = \lim_{y\to0}\int_0^N y^{-s}\nud{s} \geq \lim_{y\to0}\int_\delta^N y^{-s}\nud{s} \geq \lim_{y\to0} y^{-\delta}\nu[\delta,N] = +\infty,\\
& AE[U] = \lim_{x\to+\infty} x\frac{U'(x)}{U(x)} = \lim_{x\to+\infty}\frac{U'(x) + xU''(x)}{U'(x)} = \lim_{x\to+\infty} 1 - \frac{x}{R(x)} = 1 - \lim_{x\to+\infty}\frac{1}{R'(x)} \leq 1-\frac{1}{N},
\end{align*}
where we have used the fact that $R'(x) \leq N$ derived below.

\done{To show that condition \eqref{assump_Uiii} is satisfied, we follow from the idea in \cite[Example 18]{KaZa:14}, which gives the proof for $j=1,2$ under constant $\lambda$.}
From Proposotion \ref{prop_H}, $H(x,T,\lambda(z)) = I(e^{-x}) = \int_0^N e^{xs}\nud{s}$, and the risk tolerance $R(x)$ is given by:
\begin{align*}
R(x) = -\frac{U'(x)}{U''(x)} = H_x(H^{(-1)}(x,T,\lambda(z)),T,\lambda(z)) = \int_0^N se^{H^{(-1)}(x,T,\lambda(z))s}\nud{s}.
\end{align*}
The fact that $R(0) = 0$ follows by DCT and $H^{(-1)}(0+,T,\lambda(z)) = -\infty$. $R(x)$ has bounded derivative, since:
\begin{align*}
R'(x) = \frac{H_{xx}(H^{(-1)}(x,T,\lambda(z)),T,\lambda(z))}{H_x(H^{(-1)}(x,T,\lambda(z)),T,\lambda(z))} = \frac{\int_0^N e^{H^{(-1)}(x,T,\lambda(z))s}s^2 \nud{s}}{\int_0^N e^{H^{(-1)}(x,T,\lambda(z))s}s\nud{s}} \leq N.
\end{align*}
$R(x)$ is strictly increasing, since the numerator stay positive for $x>0$, i.e. $0 < R'(x) \leq N$. To show $\abs{R(x)R''(x)} \leq K$, one needs
\begin{align*}
R(x)R''(x) + \left(R'(x)\right)^2 = \frac{1}{2}(R^2(x))'' = \frac{H_{xxx}(H^{(-1)}(x,T,\lambda(z)),T,\lambda(z))}{H_x(H^{(-1)}(x,T,\lambda(z)),T,\lambda(z))} = \frac{\int_0^N e^{H^{(-1)}(x,T,\lambda(z))s}s^3 \nud{s}}{\int_0^N e^{H^{(-1)}(x,T,\lambda(z))s}s\nud{s}} \leq N^2.
\end{align*}
Since $(R^2(x))''$ and $R'(x)$ are bounded, so does $R(x)R''(x)$. Similar arguments works for $R^j(x)\left(\partial_x^{(j+1)}R(x)\right)$ with $j = 2, 3, 4$ by using the following identities:
\begin{align*}
&R^2R'''+ R'^3 + 4RR'R''= \frac{\partial_x^4H(H^{(-1)}(x,T,\lambda(z)),T,\lambda(z))}{H_x(H^{(-1)}(x,T,\lambda(z)),T,\lambda(z))} = \frac{\int_0^N e^{H^{(-1)}(x,T,\lambda(z))s}s^4 \nud{s}}{\int_0^N e^{H^{(-1)}(x,T,\lambda(z))s}s\nud{s}} \leq N^3
\\
&R^3R^{(4)} + 7R^2R'R''' + R'^4 + 11RR'^2R'' + 4R^2R''^2 = \frac{\partial_x^5H(H^{(-1)}(x,T,\lambda(z)),T,\lambda(z))}{H_x(H^{(-1)}(x,T,\lambda(z)),T,\lambda(z))} \leq N^4 , \\
&12R'^5 + 32R^2R'^2R''' + 57RR'^3R'' + R^4R^{(5)} + 15R^3R''R''' + 11R^3R'R^{(4)} + 54R^2R'R''^2 \\
& \qquad = \frac{\partial_x^6H(H^{(-1)}(x,T,\lambda(z)),T,\lambda(z))}{H_x(H^{(-1)}(x,T,\lambda(z)),T,\lambda(z))} \leq N^5 .
\end{align*}

Notice that $I(y)$ satisfies the polynomial growth condition due to the following: denote by $\kappa = \nu([0,N])$, if $y \geq 1$, $I(y) \leq \kappa$, otherwise when $y < 1$
\begin{align*}
I(y) \leq \kappa y^{-N}.
\end{align*}
Therefore, combining the two cases and defining $\alpha = \max\{N,\kappa\}$ yields the inequality \eqref{cond_I}.

\section{Proof of Proposition \ref{prop_ssh}}\label{app_ssh}
\done{For the case $j=0,1$, results under constant $\lambda$ are presented in \cite[Proposotion 14]{KaZa:14}. We first generalize their results to $\lambda(z)$ for $j=0, 1$, and then give the proof for $j = 2, 3, 4$.}
To show \eqref{eq_prop_Rbounds}, we use the relation \eqref{eq_rzH} between the risk tolerance function $R(t,x;\lambda(z))$ and the function $H(x,t,\lambda(z))$ (which is defined in Proposition \ref{prop_H}), namely 
\begin{equation}\label{eq_prop_rzH}
R(t,x;\lambda(z)) = H_x(H^{(-1)}(x,t,\lambda(z)),t,\lambda(z)).
\end{equation}
Differentiating \eqref{eq_prop_rzH} with respect to $x$, and letting $t = T$ produces:
\begin{equation*}
R'(x) = \left.\frac{H_{xx}(y,T,\lambda(z))}{H_x(y,T,\lambda(z))}\right|_{y = H^{(-1)}(x,T,\lambda(z))}.
\end{equation*}
Using $R'(x) \leq C = \sqrt{K/2}$ (proved in Proposition \ref{lem_U}) gives, for all $x,z\in \RR$,
\begin{equation*}
\abs{H_{xx}(x,T,\lambda(z))} \leq \sqrt{K/2} H_x(x,T,\lambda(z)).
\end{equation*}
Notice that for fixed $\lambda(z)$,  both $H_{xx}(x,t,\lambda(z))$ and $H_x(x,t,\lambda(z))$ satisfy the heat equation \eqref{eq_H}. Comparison Principle ensures that the inequality is preserved for $t < T$, i.e. for $(x,t,z)\in \RR \times[0,T]\times\RR$,
\begin{equation*}
\abs{H_{xx}(x,t,\lambda(z))} \leq \sqrt{K/2} H_x(x,t,\lambda(z)).
\end{equation*}
Using \eqref{eq_prop_rzH} again, we obtain:
\begin{equation}\label{eq_prop_case0}
\partial_xR(t,x;\lambda(z)) = \left.\frac{H_{xx}(y,t,\lambda(z))}{H_x(y,t,\lambda(z))}\right|_{y = H^{(-1)}(x,t,\lambda(z))} \leq \sqrt{K/2} := K_0.
\end{equation}
Thus, we have shown \eqref{eq_prop_Rbounds} with $j=0$. 

To complete the proof in the case $j = 1$, we first obtain a relation between derivatives of $R(t,x;\lambda(z))$ and derivatives of $H(x,t,\lambda(z))$:
\begin{align}\label{eq_prop_case1}
R_x^2(t,x;\lambda(z)) + RR_{xx}(t,x;\lambda(z)) =  \frac{1}{2}\partial_x^{(2)}R^2(t,x;\lambda(z))= \left.\frac{H_{xxx}(y,t,\lambda(z))}{H_x(y,t,\lambda(z))}\right|_{y = H^{(-1)}(x,t,\lambda(z))}.
\end{align}
Let $t = T$ in the above identity, then, the middle quantity is reduced to $\frac{1}{2}\partial_x^{(2)} R^2(x)$ and is bounded by $K/2$ as assumed in \eqref{assump_Uiii}, so does the ratio of $H_{xxx}(x,T,\lambda(z))$ over $H_x(x,T,\lambda(z))$ for all $x \in \RR$. Standard Comparison Principle applies and the ratio remains bounded for $t < T$. This results in the boundedness of $\abs{R_x^2(t,x;\lambda(z)) + RR_{xx}(t,x;\lambda(z))}$. Combining with \eqref{eq_prop_case0}, we achieve:
\begin{equation*}
\abs{RR_{xx}(t,x;\lambda(z))} \leq K/2 + K_0^2 := K_1.
\end{equation*}

To deal with $j=2$, we first obtain the identity:
\begin{equation}\label{eq_prop_case2}
R^2R_{xxx}(t,x;\lambda(z))+ R_x^3(t,x;\lambda(z)) + 4RR_xR_{xx}(t,x;\lambda(z))= \left.\frac{\partial_x^{(4)}H(y,t,\lambda(z))}{H_x(y,t,\lambda(z))}\right|_{y = H^{(-1)}(x,t,\lambda(z))}.
\end{equation}
At terminal time T, each term on the left-hand side is bounded (cf. Remark \ref{rem_U}), therefore, the right-hand side is bounded. Then a similar argument based on Comparison Principle gives the following estimate:
\begin{equation*}
\abs{R^2R_{xxx}(t,x;\lambda(z))} \leq K_2.
\end{equation*}

The remaining cases are completed by  replacing \eqref{eq_prop_case2} by the following
\begin{align*}
&R^3R^{(4)}(t,x;\lambda(z)) + 7R^2R_xR_{xxx}(t,x;\lambda(z)) + R_x^4(t,x;\lambda(z)) + 11RR_x^2R_{xx}(t,x;\lambda(z)) + 4R^2R_{xx}^2(t,x;\lambda(z)) \\
& \qquad = \left.\frac{\partial_x^{(5)}H(y,t,\lambda(z))}{H_x(y,t,\lambda(z))}\right|_{y = H^{(-1)}(x,t,\lambda(z))}, \\
&12R_x^5(t,x;\lambda(z)) + 32R^2R_x^2R_{xxx}(t,x;\lambda(z)) + 57RR_x^3R_{xx}(t,x;\lambda(z)) + R^4R^{(5)}(t,x;\lambda(z)) + 15R^3R_{xx}R_{xxx}(t,x;\lambda(z)) \\
& \qquad + 11R^3R_xR^{(4)}(t,x;\lambda(z)) + 54R^2R_xR_{xx}^2(t,x;\lambda(z)) =\left.\frac{\partial_x^{(6)}H(y,t,\lambda(z))}{H_x(y,t,\lambda(z))}\right|_{y = H^{(-1)}(x,t,\lambda(z))},
\end{align*}
and repeating the same argument.

The bounds $\widetilde K_j$ are easily obtained by expanding $\partial_x^{(j)} R^j(t,x;\lambda(z))$ and using \eqref{eq_prop_Rbounds}. To show \eqref{eq_prop_Rbound}, we notice that $R(t,x;\lambda(z))$ is shown to be strictly increasing in Proposition \ref{prop_mono}, and $R'(t,x;\lambda(z)) \leq K_0$, integrating on both sides with respect to $x$ yields the desired result.
\section{Assumptions in Section \ref{sec_accuracy2}}\label{appendix_addasump}

This set of assumptions is used in deriving Proposition \ref{prop_piaccuracy} where we establish the accuracy of approximation of $\Vzt$ summarized in Table \ref{tab_accuracy}. 
\begin{assump}\label{assump_optimality}
	Let $\MCA_0(t,x,z)\left[\pzt,\pot,\alpha\right]$ be the family of trading strategies defined in \eqref{def_A0}. Recall that $X^\pi$ is the wealth generated by the strategy $\pi=\pzt+\delta^\alpha\pot$ as defined in \eqref{def_Xtopt}. In order to condense the notation,  we systematically omit  the argument $(s,X_s^\pi,Z_s)$ in what follows. According to the different cases, we further require:
	\begin{enumerate}[(i)]
		\item\label{assump_optimality_eq} If $\pzt \equiv \pz$,
		\begin{enumerate}
			\item\label{assump_optimality_eqmore} If $\alpha > 1/4$, the following quantities are uniformly bounded in $\delta$:
			
			$\quad\EE_{(t,x,z)}\int_t^T \MCM\vz\ud s $,
			$\EE_{(t,x,z)}\int_t^T \MCM\vo\ud s$,
			$\EE_{(t,x,z)}\int_t^T\sigma^2(Z_s)\left(\pot\right)^2\vz_{xx} \ud s $,
			
			$\quad\EE_{(t,x,z)}\int_t^T\sigma^2(Z_s)\left(\pot\right)^2\vo_{xx} \ud s$,
			$\EE_{(t,x,z)}\int_t^T\sigma^2(Z_s)\pz\pot\vo_{xx} \ud s $,
			$\EE_{(t,x,z)}\int_t^T\mu(Z_s)\pot\vo_x \ud s $,
			
			$\quad\EE_{(t,x,z)}\int_t^T g(Z_s)\sigma(Z_s)\pz\vo_{xz} \ud s$,
			$\EE_{(t,x,z)}\int_t^Tg(Z_s)\sigma(Z_s)\pot\vz_{xz} \ud s$,
			$\EE_{(t,x,z)}\int_t^T g(Z_s)\sigma(Z_s)\pot\vo_{xz} \ud s$.

			Here, recall that $\vz$ and $\vo$ are the leading order term and first order correction of $\Vz$ as well as of $\Vzl$, and they satisfy \eqref{eq_vz} and \eqref{eq_vo} respectively.  
			\item\label{assump_optimality_eqless} In the case $0 < \alpha < 1/4$, if $(t,x,z) \in \MCK_1$, we need 
			
			$\quad\EE_{(t,x,z)}\int_t^T \MCM\vz\ud s $,
			$\EE_{(t,x,z)}\int_t^T \MCM\vtat\ud s $,
			$\EE_{(t,x,z)}\int_t^T \sigma^2(Z_s)\left(\pot\right)^2\vtat_{xx} \ud s $,
			
			$\quad\EE_{(t,x,z)}\int_t^T \sigma^2(Z_s)\pz\pot\vtat_{xx} \ud s $,
			$\EE_{(t,x,z)}\int_t^T \mu(Z_s)\pot\vtat_x\ud s $,
			$\EE_{(t,x,z)}\int_t^T g(Z_s)\sigma(Z_s)\pz\vz_{xz}\ud s $,
			
			$\quad\EE_{(t,x,z)}\int_t^T g(Z_s)\sigma(Z_s)\pz\vtat_{xz}\ud s $,
			$\EE_{(t,x,z)}\int_t^T g(Z_s)\sigma(Z_s)\pot\vz_{xz}\ud s $,
			$\EE_{(t,x,z)}\int_t^T g(Z_s)\sigma(Z_s)\pot\vtat_{xz}\ud s $,
			to be uniformly bounded in $\delta$, where $\vtat$ is the coefficient of $\delta^{2\alpha}$ in the expansion of $\Vzt$ in the case $ 0 < \alpha < 1/4$, and satisfies the linear PDE \eqref{eq_vtat}.
			
			Otherwise if $(t,x,z) \in \MCC_1$, we only need
			
			$\quad \EE_{(t,x,z)}\int_t^T \MCM\vz\ud s$,
			$\EE_{(t,x,z)}\int_t^T \MCM\vo\ud s $,
			$\EE_{(t,x,z)}\int_t^T g(Z_s)\sigma(Z_s)\pz\vo_{xz}\ud s $,
			to be uniformly bounded.
			\item\label{assump_optimality_eqequal}For the critical case $\alpha = 1/4$, we need all the above assumptions in part \eqref{assump_optimality_eqless} with
			$\alpha$ replaced by $1/4$, except the sixth one which is not needed.
			Then, the term $\vtat$ becomes $\vot$, which is the coefficient of $\sqrt\delta$ in the expansion and satisfies \eqref{eq_vot}.
		\end{enumerate}
		\item\label{assump_optimality_neq} If $\pzt \not\equiv \pz$,
		\begin{enumerate}[(a)]
			\item For $(t,x,z)$ in the region $\MCK$, the following quantities need to be uniformly bounded in $\delta$:
			
			$\quad\EE_{(t,x,z)}\int_t^T \MCM \vzt \ud s $,
			$\EE_{(t,x,z)}\int_t^T g(Z_s)\sigma(Z_s)\left(\pzt+ \delta^\alpha\pot\right)^2\vzt_{xz} \ud s$,
			$\EE_{(t,x,z)}\int_t^T \sigma^2(Z_s)\left(\pot\right)^2\vzt_{xx} \ud s$,
			
			$\quad\EE_{(t,x,z)}\int_t^T \sigma^2(Z_s)\pzt\pot\vzt_{xx} \ud s $,
			$\EE_{(t,x,z)}\int_t^T \mu(Z_s)\pot\vzt_x\ud s$,
						where $\vzt$ is the leading order solution that satisfies \eqref{eq_vzt}.
			\item For $(t,x,z)$ in the region $\MCC$, where $\pzt$ and $\pz$ are identical, requirements are the same as in part \eqref{assump_optimality_eq} with $(t,x,z)$  restricted to be in $\MCC$.
		\end{enumerate}
	\end{enumerate}
\end{assump}

\bibliographystyle{plainnat}
\bibliography{Reference}

\begin{thebibliography}{20}
\providecommand{\natexlab}[1]{#1}
\providecommand{\url}[1]{\texttt{#1}}
\expandafter\ifx\csname urlstyle\endcsname\relax
  \providecommand{\doi}[1]{doi: #1}\else
  \providecommand{\doi}{doi: \begingroup \urlstyle{rm}\Url}\fi

\bibitem[Andersen and Piterbarg(2007)]{AnPi:07}
L.~B. Andersen and V.~V. Piterbarg.
\newblock Moment explosions in stochastic volatility models.
\newblock \emph{Finance and Stochastics}, 11\penalty0 (1):\penalty0 29--50,
  2007.

\bibitem[Chacko and Viceira(2005)]{ChVi:05}
G.~Chacko and L.~M. Viceira.
\newblock Dynamic consumption and portfolio choice with stochastic volatility
  in incomplete markets.
\newblock \emph{Review of Financial Studies}, 18\penalty0 (4):\penalty0
  1369--1402, 2005.

\bibitem[Cox and Huang(1989)]{CoHu:89}
J.~C. Cox and C.~Huang.
\newblock Optimal consumption and portfolio policies when asset prices follow a
  diffusion process.
\newblock \emph{Journal of economic theory}, 49:\penalty0 33--83, 1989.

\bibitem[Cvitanic and Karatzas(1995)]{CvKa:95}
J.~Cvitanic and I.~Karatzas.
\newblock On portfolio optimization under "drawdown" constraints.
\newblock \emph{IMA volumes in mathematics and its applications}, 65:\penalty0
  35--35, 1995.

\bibitem[Elie and Touzi(2008)]{ElTo:08}
R.~Elie and N.~Touzi.
\newblock Optimal lifetime consumption and investment under a drawdown
  constraint.
\newblock \emph{Finance and Stochastics}, 12:\penalty0 299--330, 2008.

\bibitem[Fouque and Hu(2016)]{FoHu:XX}
J.-P. Fouque and R.~Hu.
\newblock Asymptotic methods for portfolio optimization problem in multiscale
  stochastic enviroments, 2016.
\newblock In preparation.

\bibitem[Fouque et~al.(2011)Fouque, Papanicolaou, Sircar, and
  Solna]{FoPaSiSo:11}
J.-P. Fouque, G.~Papanicolaou, R.~Sircar, and K.~Solna.
\newblock \emph{Multiscale Stochatic Volatility for Equity, Interest-Rate and
  Credit Derivatives}.
\newblock Cambridge University Press, 2011.

\bibitem[Fouque et~al.(2016)Fouque, Sircar, and Zariphopoulou]{FoSiZa:13}
J.-P. Fouque, R.~Sircar, and T.~Zariphopoulou.
\newblock Portfolio optimization \& stochastic volatility asymptotics.
\newblock \emph{Mathematical Finance}, 2016.

\bibitem[Grossman and Zhou(1993)]{GrZh:93}
S.~J. Grossman and Z.~Zhou.
\newblock Optimal investment strategies for controlling drawdowns.
\newblock \emph{Mathematical Finance}, 3:\penalty0 241--276, 1993.

\bibitem[Guasoni and Muhle-Karbe(2013)]{GuMu:13}
P.~Guasoni and J.~Muhle-Karbe.
\newblock Portfolio choice with transaction costs: a user¡¯s guide.
\newblock In \emph{Paris-Princeton Lectures on Mathematical Finance 2013},
  pages 169--201. Springer, 2013.

\bibitem[John(1982)]{JO:82}
F.~John.
\newblock \emph{Partial differential equations, volume 1 of Applied
  Mathematical Sciences}.
\newblock Springer-Verlag, New York, 1982.

\bibitem[K\"{a}llblad and Zariphopoulou(2014)]{KaZa:14}
S.~K\"{a}llblad and T.~Zariphopoulou.
\newblock Qualitative analysis of optimal investment strategies in log-normal
  markets.
\newblock \emph{Available at SSRN 2373587}, 2014.

\bibitem[Karatzas and Shreve(1998)]{KaSh:98}
I.~Karatzas and S.~E. Shreve.
\newblock \emph{Methods of Mathematical Finance}.
\newblock Springer Science \& Business Media, 1998.

\bibitem[Karatzas et~al.(1987)Karatzas, Lehoczky, and Shreve]{KaLeSh:87}
I.~Karatzas, J.~P. Lehoczky, and S.~E. Shreve.
\newblock Optimal portfolio and consumption decisions for a ``small investor''
  on a finite horizon.
\newblock \emph{SIAM journal on control and optimization}, 25\penalty0
  (6):\penalty0 1557--1586, 1987.

\bibitem[Kramkov and Schachermayer(2003)]{KrSc:03}
D.~Kramkov and W.~Schachermayer.
\newblock Necessary and sufficient conditions in the problem of optimal
  investment in incomplete markets.
\newblock \emph{Annals of Applied Probability}, pages 1504--1516, 2003.

\bibitem[Magill and Constantinides(1976)]{MaCo:76}
M.~J. Magill and G.~M. Constantinides.
\newblock Portfolio selection with transactions costs.
\newblock \emph{Journal of Economic Theory}, 13:\penalty0 245--263, 1976.

\bibitem[Merton(1969)]{Me:69}
R.~C. Merton.
\newblock Lifetime portfolio selection under uncertainty: {T}he continuous-time
  case.
\newblock \emph{Review of Economics and statistics}, 51:\penalty0 247--257,
  1969.

\bibitem[Merton(1971)]{Me:71}
R.~C. Merton.
\newblock Optimum consumption and portfolio rules in a continuous-time model.
\newblock \emph{Journal of economic theory}, 3\penalty0 (4):\penalty0 373--413,
  1971.

\bibitem[Schachermayer(2004)]{SC:04}
W.~Schachermayer.
\newblock \emph{Portfolio optimization in incomplete financial markets}.
\newblock Cattedra Galileiana, 2004.

\bibitem[Zariphopoulou(1999)]{Za:99}
T.~Zariphopoulou.
\newblock Optimal investment and consumption models with non-linear stock
  dynamics.
\newblock \emph{Mathematical Methods of Operations Research}, 50\penalty0
  (2):\penalty0 271--296, 1999.

\end{thebibliography}

\end{document}